\DeclareMathOperator{\rep}{rep}
\DeclareMathOperator{\val}{val}
\DeclareMathOperator{\ord}{ord}
\theoremstyle{plain}
\newtheorem{theorem}{Theorem}
\newtheorem{lemma}[theorem]{Lemma}
\newtheorem{corollary}[theorem]{Corollary}
\newtheorem{prop}[theorem]{Proposition}
\newtheorem{conjecture}[theorem]{Conjecture}
\theoremstyle{definition}
\newtheorem{definition}[theorem]{Definition}
\newtheorem{example}[theorem]{Example}
\newtheorem{remark}[theorem]{Remark} 
\newtheorem*{notation*}{Notation}
\newtheorem*{remark*}{Remark}
\numberwithin{equation}{section}
\newtheorem{problem}{Problem}
\newcommand{\N}{\mathbb{N}}
\newcommand{\Q}{\mathbb{Q}}
\newcommand{\Z}{\mathbb{Z}}
\newcommand{\nequiv}{\mathrel{\not\equiv}}
\newcommand{\colonequal}{\mathrel{\mathop:}=}
\newcommand{\size}[1]{\lvert{#1}\rvert}
\newcommand{\sizedsize}[1]{\left\lvert{#1}\right\rvert}
\title[Ultimate periodicity problem for numeration systems]{Ultimate periodicity problem for linear numeration systems}
\author[\'E. Charlier, A. Massuir, M. Rigo]{\'E. Charlier$^1$, A. Massuir$^1$, M. Rigo$^1$}
\address{$^1$ University of Li\`ege, Department of Mathematics, All\'ee de la d\'ecouverte 12 (B37), B-4000 Li\`ege, Belgium}
\email{echarlier@uliege.be ; A.Massuir@uliege.be ; M.Rigo@uliege.be}
\author[E. Rowland]{E. Rowland$^2$}
\address{$^2$ Department of Mathematics, Hofstra University, Hempstead, NY, USA}
\email{Eric.Rowland@hofstra.edu}
\begin{document}

\begin{abstract}
    We address the following decision problem.  Given a numeration
    system $U$ and a $U$-recognizable set $X\subseteq\mathbb{N}$,
    i.e.\ the set of its greedy $U$-representations is recognized by a
    finite automaton, decide whether or not $X$ is ultimately
    periodic. We prove that this problem is decidable for a large
    class of numeration systems built on linear recurrence 
    sequences. Based on arithmetical considerations about the
    recurrence equation and on $p$-adic methods, the DFA given as
    input provides a bound on the admissible periods to test.
\end{abstract}

\keywords{Decision problem ; numeration system ; automata theory ; linear recurrent sequence ; $p$-adic valuation}
\subjclass[2000]{68Q45, 11U05, 11B85, 11S85}

\maketitle

\section{Introduction}

Let us first recall the general setting of linear numeration systems that are used to represent, in a greedy way, non-negative integers by words over a finite alphabet of digits. See, for instance, \cite{Fraenkel}. Let $\N = \{0, 1, 2, \dots\}$.

\begin{definition}\label{def:numsys}
    A {\it numeration system} is given by an increasing
    sequence $U=(U_i)_{i\ge 0}$ of integers such that $U_0=1$ and
    $C_U:=\sup_{i\ge 0} \lceil \frac{U_{i+1}}{U_i}\rceil$ is finite. Let
    $A_U=\{0,\ldots,C_U-1\}$ be the canonical alphabet of digits. The {\it greedy $U$-representation} of a
    positive integer $n$ is the unique finite word
    $\rep_U(n)=w_\ell\cdots w_0$ over $A_U$ satisfying
$$n=\sum_{i=0}^\ell w_i\, U_i,\ w_\ell\neq 0 \text{ and } \sum_{i=0}^t w_i\,
U_i<U_{t+1},\ t=0,\ldots,\ell.$$ We set $\rep_U(0)$ to be the
empty word $\varepsilon$. A set $X\subseteq\mathbb{N}$ of integers is
{\it $U$-recognizable} if the language $\rep_U(X)$ over $A_U$ is
regular (i.e.\  accepted by a finite automaton). 
\end{definition}

Recognizable sets of integers are considered as particularly simple because membership can be decided  by a deterministic finite automaton in linear time with respect to the length of the representation. It is well known that such a property for a subset of $\mathbb{N}$ depends on the choice of the numeration system. For a survey on integer base systems, see \cite{BHMV}. For generalized numeration systems, see \cite{FLANS}. {For basic results in automata theory, see, for instance \cite{FLANS,Sakarovitch}.}

\begin{definition}
If $x=x_\ell\cdots
x_0$ is a word over an alphabet of 
integers, then
the {\it $U$-numerical value} of $x$ is $$\val_U(x)=\sum_{i=0}^\ell
x_i\, U_i.$$
\end{definition}

From the point of view of formal languages, it is quite desirable that $\rep_U(\mathbb{N})$ is regular; we want to be able to check whether or not a word is a valid greedy $U$-representation. This implies that $U$ satisfies a linear recurrence relation. See, for instance, \cite{Sha} or \cite[Prop.~3.1.5]{cant}.

\begin{definition}\label{def:linsys}
    A numeration system $U$ is said to be {\it linear} if it ultimately satisfies a homogeneous linear recurrence relation with integer coefficients. There exist $k\ge 1$, $a_{k-1},\ldots,a_0\in\mathbb{Z}$ such that $a_0\neq 0$ and $N\ge 0$ such that for all $i\ge N$, 
    \begin{equation}
        \label{eq:linrec}
        U_{i+k}=a_{k-1} U_{i+k-1}+\cdots +a_0 U_i.
    \end{equation}
The polynomial $X^N(X^k-a_{k-1}X^{k-1}-\cdots -a_0)$ is called the {\em characteristic polynomial} of the system {(where it is assumed that $k$ and then $N$ are chosen to be minimal)}.
\end{definition}

The regularity of $\rep_U(\mathbb{N})$ is also important for another reason. The language  $\rep_U(\mathbb{N})$ is regular if and only if every ultimately periodic set of integers is $U$-recognizable \cite[Thm.~4]{LR}. In particular, as recalled in Proposition~\ref{pro:up}, if an ultimately periodic set $X$ is given, then a DFA accepting $\rep_U(X)$ can effectively be obtained.

In this paper, we address the following decidability question. Our aim is to prove that this problem is decidable for a large class of numeration systems.

\begin{problem}\label{pb}
    Given a linear numeration system $U$ and a
    (deterministic) finite automaton~$\mathcal{A}$ whose accepted language is contained in the numeration language $\rep_U(\mathbb{N})$, decide whether 
   the subset $X$ of $\mathbb{N}$ that is recognized by $\mathcal{A}$ is ultimately periodic, i.e.\  whether or not $X$ is a finite
    union of arithmetic progressions (along a finite set). 
\end{problem}

This question about ultimately periodic sets is motivated by the celebrated theorem of Cobham. Let $p,q\ge 2$ be integers. If $p$ and $q$ are multiplicatively independent, i.e.\  $\frac{\log(p)}{\log(q)}$ is irrational, then the ultimately periodic sets are the only sets that are both $p$-recognizable and $q$-recognizable \cite{cobham}. These are exactly the sets definable by a first-order formula in the Presburger arithmetic $\langle\mathbb{N},+\rangle$. Cobham's result has been extended to various settings; see \cite{Durand,Mitrofanov} for an application to morphic words. See \cite{DurRig} for a survey.

In this paper, we write greedy $U$-representations with most significant digit first (MSDF convention): the leftmost digit is associated with the largest $U_\ell$ occurring in the decomposition. Considering least significant digit first would not affect decidability (a language is regular if and only if its reversal is) but this could have some importance in terms of complexity issues not discussed here.

\medskip

\textbf{What is known.} Let us quickly review cases where the decision problem is known to be decidable. Relying on number theoretic results, the problem was first solved by Honkala for integer base systems \cite{Honkala}. An alternative approach bounding the syntactic
complexity of ultimately periodic sets of integers written in base $b$ was studied in \cite{LRRV}. 
Recently a deep analysis of the structure of the automata accepting ultimately periodic sets has led to an efficient decision procedure for integer base systems \cite{MSaka,BMMR,Marsault2019}. An integer base system is a particular case of a Pisot system, i.e.\  a linear numeration system whose characteristic polynomial is the minimal polynomial of a Pisot number (an algebraic integer larger than~$1$ whose conjugates all have modulus less than one). For these systems, one can make use of first-order logic and the decidable extension $\langle \mathbb{N},+,V_U\rangle$ of Presburger arithmetic \cite{BH}. For an integer base~$p$, $V_p(n)$ is the largest power of $p$ dividing $n$. A typical example of Pisot system is given by the Zeckendorf system based on the Fibonacci sequence $1,2,3,5,8,\ldots$. Given a $U$-recognizable set $X$, there exists a first-order formula $\varphi(n)$ in $\langle \mathbb{N},+,V_U\rangle$ describing $X$. The formula $$(\exists N)(\exists p)(\forall n\ge N)(\varphi(n)\Leftrightarrow \varphi(n+p))$$ thus expresses when $X$ is ultimately periodic, $N$ being a preperiod and $p$ a period of $X$. The logical formalism can be applied to systems such that the addition is $U$-recognizable by an automaton, i.e.\  the set $\{(x,y,z)\in\mathbb{N}^3:x+y=z\}$ is $U$-recognizable. This is the case for Pisot systems \cite{Frou}.

When addition is not known to be $U$-recognizable, other techniques must be sought. Hence the problem was also shown to be decidable for some non-Pisot linear numeration systems satisfying a gap condition $\lim_{i\to+\infty} U_{i+1}-U_i=+\infty$ and a more technical condition $\lim_{m\to+\infty} \mathcal{N}_U(m)=+\infty$ where $\mathcal{N}_U(m)$ is the number of  residue classes that appear infinitely often in the sequence $(U_i\bmod{m})_{i\ge 0}$; see  \cite{BCFR}. An example of such a system is built on the relation $U_i=3\, U_{i-1}+2\, U_{i-2}+3\, U_{i-3}$ \cite{Frougny97}. For extra pointers to the literature (such as an extension to a multidimensional setting), the reader can follow the introduction in \cite{BCFR}.

\medskip

\textbf{Our contribution.} In view of the above summary, we are looking for a decision procedure that may be applied to non-Pisot linear numeration systems such that $\mathcal{N}_U(m)\not\to \infty$ when $m$ tends to infinity. Hence we want to take into account systems where we are not able to apply a decision procedure based on first-order logic nor on the technique from \cite{BCFR}. We follow Honkala's original scheme: if a DFA~$\mathcal{A}$ is given as input (the question being whether the corresponding recognized subset of $\mathbb{N}$ is ultimately periodic), the number of states of $\mathcal{A}$ should provide an upper bound on the admissible preperiods~$s$ and periods~$p$. If there is a finite number of such pairs to test, then we build a DFA $\mathcal{A}_{s,p}$ for each pair $(s,p)$ and one can test whether or not the two automata $\mathcal{A}$ and $\mathcal{A}_{s,p}$ accept the same language. This provides us with a decision procedure. Roughly speaking, if the given DFA is ``small'', then it cannot accept an ultimately periodic set with a minimal period being ``overly complicated'', i.e.\  ``quite large''.

\begin{example}\label{exa:ppp}
    Here is an example of a numeration system based on a Parry (the $\beta$-expansion of $1$ is finite or ultimately periodic, see \cite[Chap.~2]{cant}) non-Pisot number $\beta$: $$U_{i+4}=2\, U_{i+3}+2\, U_{i+2}+2\, U_i.$$
Indeed, the largest root $\beta$ of the characteristic polynomial is roughly $2.804$, and $-1.134$ is another root of modulus larger than one. With the initial conditions $1,3,9,23$, $\rep_U(\mathbb{N})$ is the regular language over $\{0,1,2\}$ of words avoiding factors $2202$, $221$ and $222$. For details, see \cite[Ex.~2.3.37]{cant} or \cite{PhD}. When $m$ is a power of $2$, there is a unique congruence class visited infinitely often by the sequence $(U_i\bmod{m})_{i\ge 0}$ because $U_i\equiv 0\pmod{2^r}$ for large enough $i$. For such an example, $\mathcal{N}_U(m)$ does not tend to infinity and thus the previously known decision procedures may not be applied. This is a perfect candidate for which no decision procedures are known.
\end{example}

This paper is organized as follows. In Section~\ref{sec2}, we make clear our assumptions on the numeration system. In Section~\ref{sec3}, we collect several known results on periodic sets and $U$-representations. In particular, we relate the length of the $U$-representation of an integer to its value. The core of the paper is made of Section~\ref{sec4} where we discuss cases to bound the admissible periods. In particular, we consider two kinds of prime factors of the admissible periods: those that divide all the coefficients of the recurrence and those that don't, see \eqref{eq:fQ}. In Section~\ref{sec5}, we apply the discussion of the previous section. First, we obtain a decision procedure when the gcd of the coefficients of the recurrence relation is~$1$, see Theorem~\ref{thm:main1}. This extends the scope of results from \cite{BCFR}. On the other hand, if there exist primes dividing all the coefficients, our approach heavily relies on quite general arithmetic properties of linear recurrence relations. It has therefore inherent limitations because of notoriously difficult problems in $p$-adic analysis such as finding bounds on the growth rate of blocks of zeroes in the digit sequences of $p$-adic numbers of a special logarithmic form. We discuss the question and give illustrations of these $p$-adic techniques in Section~\ref{sec:ap}. The paper ends with some concluding remarks.


\section{Our setting}\label{sec2}
We have minimal assumptions on the considered linear numeration system $U$. 
\begin{itemize}
  \item[(H1)] $\mathbb{N}$ is $U$-recognizable.
  \item[(H2)] There are arbitrarily large gaps between consecutive terms:
$$\limsup_{i\to+\infty}(U_{i+1}-U_i)=+\infty.$$
  \item[(H3)] The gap sequence $(U_{i+1}-U_i)_{i\ge 0}$ is ultimately non-decreasing: there exists {$G\ge 0$ such that for all $i\ge G$}, 
$$U_{i+1}-U_i \le U_{i+2}-U_{i+1}.$$
\end{itemize}

Note that Example~\ref{exa:ppp} satisfies all the above assumptions. Let us make a few comments.

\begin{remark}
(H1) gives sense and meaning to our decision problem; under that assumption, ultimately periodic sets are $U$-recognizable. As recalled in the introduction, it is a well known result that (H1) implies that the numeration system $(U_i)_{i\ge 0}$ satisfies a linear recurrence relation with integer coefficients as in \eqref{eq:linrec}. { Some sufficient conditions that guarantee $\mathbb{N}$ to be $U$-recognizable are given in \cite{Lo,Ho}. However, the general case remains open, see \cite[Section~8.2]{Ho}.}
\end{remark}

\begin{remark}\label{rem:h3w}
The assumptions (H2) and (H3) imply that $\lim_{i\to+\infty}(U_{i+1}-U_i)=+\infty$. However, in many cases, even if $\lim_{i\to+\infty}(U_{i+1}-U_i)=+\infty$, the gap sequence may decrease from time to time. So, even a stronger assumption than (H2) does not imply (H3). The main reason why we introduce (H3) is the following one. Let $1 0^\ell w$ be a greedy $U$-representation for some $\ell\ge 0$. Assume (H3) and $i=|w|+\ell\ge G$. Then for all $\ell'\ge \ell$,  $10^{\ell'} w$ is a greedy $U$-representation as well. Indeed, if $n$ is a non-negative integer such that $U_i+n<U_{i+1}$, then $U_{i+1}+n=U_{i+1}-U_i+U_i+n\le U_{i+2}-U_{i+1}+U_i+n<U_{i+2}$. Hence $U_{i'}+n<U_{i'+1}$ for all $i'\ge i$, meaning that as soon as the greediness property is fulfilled, one can shift the leading $1$ at every larger index. This is not always the case, as seen in Example~\ref{exa:noth2}. This property will be used in Lemma~\ref{lem:h3}, which in turn will be crucial in the proofs of Propositions~\ref{pro:cas2a} as well as Theorem~\ref{the:main}, where we construct $U$-representations with leading $1$'s in convenient positions. 
\end{remark}

\begin{remark}\label{rem:hard}
  {
    If $(U_i)_{i\ge 0}$ is a linear recurrence sequence, so are the first and second differences $(V_i)_{i\ge 0}:=(U_{i+1}-U_i)_{i\ge 0}$ and $(W_i)_{i\ge 0}:=(V_{i+1}-V_i)_{i\ge 0}$. (H3) can be restated as follows. There exists $G$ such that $W_i\ge 0$ for all $i\ge G$. It relates to the {\em Ultimate Positivity Problem}: given a linear recurrence sequence $(W_i)_{i\ge 0}$, are all but finitely many terms of $(W)_{i\ge 0}$ non-negative? This problem is known to be decidable for integer linear recurrence sequences of order at most $5$ in polynomial time \cite{Ou}. It is also decidable whenever the characteristic polynomial only has simple roots \cite{Ou2}. However, in a general setting, it remains a longstanding open problem \cite{SS}.
    }
\end{remark}

\begin{remark}{
The following deep result due independently to Evertse and to van der Poorten and Schlickewei is discussed in \cite{Ou3}, see the terminology and the references therein:  For any non-degenerate algebraic linear recurrence sequence $(V_i)_{i\ge 0}$ of dominant modulus $\rho > 1$, and
  any $\varepsilon > 0$, there exists a constant $H$ such that, for all $i\ge H$, we have $|V_i|\ge\rho^{(1-\varepsilon)i}$. As noticed in \cite{Ou2}, any degenerate linear recurrence sequence can be effectively decomposed into a finite number of non-degenerate linear recurrence sequences.  In our setting of numeration systems, the sequence $(U_i)_{i\ge 0}$ is increasing so the gap sequence $(V_i)_{i\ge 0}:=(U_{i+1}-U_i)_{i\ge 0}$ is positive and (H2) is thus satisfied whenever the associated dominant modulus is larger than $1$. The presence of a root of modulus larger than $1$ can be tested with the Lehmer--Schur algorithm, see \cite[Chap.~10]{Morris}.}
\end{remark}

\begin{example}
\label{exa:toy}
Our toy example that will be treated all along the paper is given by the recurrence $U_{i+3}=12\, U_{i+2}+6\, U_{i+1}+12\, U_i$. Even though the system is associated with a Pisot number, it is still interesting because $\mathcal{N}_U(m)$ does not tend to infinity (so we cannot follow the decision procedure from \cite{BCFR}) and the gcd of the coefficients of the recurrence is larger than~$1$. Let $r\ge 1$. If the modulus is a power of $2$ or $3$, then $U_i\equiv 0\pmod{2^r}$ (resp.\ $U_i\equiv 0\pmod{3^r}$) for large enough $i$. By taking the initial conditions $1,13,163$, the language of greedy $U$-representations is regular. For the reader aware of $\beta$-numeration systems, let us mention that this choice of initial conditions corresponds to the Bertrand initial conditions, in which case the language $\rep_U(\mathbb{N})$ is equal to the set of factors (with no leading zeroes) occurring in the $\beta$-expansions of real numbers where $\beta$ is the dominant root of the characteristic polynomial $X^3-12X^2-6X-12$ of the recurrence relation of the system $U$ \cite{Bertrand-Mathis:1989}. 
\end{example}


\section{Some classical lemmas}\label{sec3}

A set $X\subseteq\mathbb{N}$ is {\em ultimately periodic} if its characteristic sequence $\mathbf{1}_X\in\{0,1\}^\mathbb{N}$ is of the form $uv^\omega$ where $u,v$ are two finite words over $\{0,1\}$. It is assumed that $u,v$ are chosen of minimal length. Hence the {\em period} of $X$ denoted by $\pi_X$ is the length $|v|$ and its preperiod is the length $|u|$. We say that $X$ is {\em (purely) periodic} whenever the preperiod is zero. The following lemma is a simple consequence of the minimality of the
period chosen to represent an ultimately periodic set.

\begin{lemma}\label{lem:per}
    Let $X\subseteq\mathbb{N}$ be an ultimately periodic set of period $\pi_X$ and let $i,j$ be integers greater than or equal to the preperiod of $X$. If $i\not\equiv j\pmod{
\pi_X}$ then there exists $r<\pi_X$ such that either $i+r\in X$ and
    $j+r\not\in X$ or, $i+r\not\in X$ and $j+r\in X$.
\end{lemma}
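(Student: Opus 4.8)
I would prove the contrapositive: assuming that for every $r$ with $0\le r<\pi_X$ one has $i+r\in X\iff j+r\in X$, I will show $i\equiv j\pmod{\pi_X}$. Fix a minimal-length factorization $\mathbf{1}_X=uv^\omega$, so that $\pi_X=|v|$ and the preperiod is $\mu:=|u|$. The feature of minimality I intend to use is this: if the tail $(\mathbf{1}_X(n))_{n\ge\mu}$ is eventually periodic with some period $p$ satisfying $p\mid\pi_X$, then $p=\pi_X$ — otherwise $v$ would be the $(\pi_X/p)$-th power of its length-$p$ prefix and the representation $uv^\omega$ could be shortened.

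Assume $i\le j$ (the statement is symmetric in $i,j$) and set $d:=j-i\ge 0$. The first step is to promote the finite window of agreement into an eventual shift-invariance. For every $n\ge i$, dividing $n-i$ by $\pi_X$ to write $n=i+r+k\pi_X$ with $0\le r<\pi_X$ and $k\ge 0$, I chain the $\pi_X$-periodicity of the tail with the agreement hypothesis:
$$\mathbf{1}_X(n)=\mathbf{1}_X(i+r)=\mathbf{1}_X(j+r)=\mathbf{1}_X(j+r+k\pi_X)=\mathbf{1}_X(n+d),$$
all the indices occurring being $\ge\mu$, so that each invocation of periodicity is legitimate. Hence the tail admits $d$ as a period, and of course it also admits $\pi_X$.

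The second step is the classical fact that the periods of an eventually periodic one-sided sequence are closed under taking gcd: starting from a Bézout identity $\gcd(d,\pi_X)=\alpha d-\beta\pi_X$ with $\alpha,\beta\in\N$, and keeping the indices $\ge\mu$ throughout, one deduces that $g:=\gcd(d,\pi_X)$ is again a period of $(\mathbf{1}_X(n))_{n\ge i}$. Since $g\mid\pi_X$, the minimality recalled above forces $g=\pi_X$, that is $\pi_X\mid d$, which is precisely $i\equiv j\pmod{\pi_X}$, as desired.

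I do not expect a genuine obstacle here, the argument being elementary. The only mild care points are the bookkeeping of inequalities on indices (everything must stay $\ge\mu$ whenever periodicity is used) and making explicit why an eventual period of the tail that properly divides $\pi_X$ contradicts the minimality of $uv^\omega$: one reduces to the purely periodic part $v^\omega$, which has $\pi_X$ as a \emph{pure} period, and then checks that any eventual divisor-period of $v^\omega$ is in fact a pure period of $v$.
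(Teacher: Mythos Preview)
Your argument is correct. The paper does not actually supply a proof of this lemma; it merely states that it ``is a simple consequence of the minimality of the period chosen to represent an ultimately periodic set.'' Your contrapositive argument --- promoting the window of agreement to an eventual period $d=j-i$ of the tail, then using closure of periods under $\gcd$ together with the minimality of $\pi_X$ --- is exactly the standard way to unpack that sentence, so there is nothing to compare. One tiny bookkeeping point: the case $d=0$ (i.e.\ $i=j$) should be disposed of separately before speaking of $d$ as a period, but this is trivial.
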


Our assumption (H2) permits us to extend greedy $U$-representations with some extra leading digits. See \cite[Lemma~7]{BCFR} for a proof.

\begin{lemma}\label{lem:condlim}
    Let $U$ be a numeration system satisfying (H2).
For all greedy $U$-representations $w$, there exists arbitrarily large $r$ such that the word $10^{r}w$ is also a greedy $U$-representation.
\end{lemma}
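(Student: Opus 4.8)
The plan is to prove Lemma~\ref{lem:condlim} by a direct greedy argument, exploiting (H2) to find an index large enough that inserting a single leading $1$ respects the greediness condition.

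Write $w = w_\ell \cdots w_0$ for the given greedy $U$-representation and let $n = \val_U(w)$, so that $n < U_{\ell+1}$ by greediness (or $n = 0$ if $w = \varepsilon$). The word $10^r w$ with $r \ge 0$ is, by definition, the greedy $U$-representation of $U_{\ell+1+r} + n$ precisely when two conditions hold: first, the "internal" greediness inequalities of $w$ (which we already have), and second, the single new inequality $\sum_{i \le \ell+r} (\text{digit})\, U_i < U_{\ell+1+r}$, which here reads $n < U_{\ell+1+r} - 0 = U_{\ell+1+r}$; wait, more carefully, the digits of $10^r w$ at positions $0,\dots,\ell$ are those of $w$, positions $\ell+1,\dots,\ell+r$ are $0$, and position $\ell+1+r$ is $1$. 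So the only new greediness constraints to check are: for each $t$ with $\ell \le t \le \ell+r$, $\sum_{i=0}^{t}(\text{digit}_i)\,U_i = n < U_{t+1}$, and that the leading digit is nonzero (it is $1$). Thus I need $n < U_{t+1}$ for all $t \in \{\ell,\dots,\ell+r\}$, i.e.\ $n < U_{\ell+1}$ (already known) together with $U_{t+1} > n$ for all larger $t$ up to $\ell+1+r$; since $U$ is increasing this is automatic once $n < U_{\ell+1}$. So in fact $10^r w$ is a greedy representation for \emph{every} $r \ge 0$? That would be too strong, and it is — the subtlety is that I must also ensure there is no shorter greedy representation of the same integer, i.e.\ that the digit at position $\ell+1+r$ really is forced to be $1$ and not $0$; equivalently I need $U_{\ell+1+r} \le U_{\ell+1+r} + n$, trivially true, but also that $U_{\ell+1+r} + n < U_{\ell+2+r}$ is \emph{not} required — rather, what can fail is that the greedy algorithm applied to $U_{\ell+1+r} + n$ might not produce a $0$ in some intermediate position. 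Let me reconsider: the actual obstruction is that we need $U_{\ell+1+r} + n < U_{\ell+2+r}$ to guarantee the leading $1$ is at position $\ell+1+r$ and not higher — no wait, that is not needed either since we are free to choose where to put the leading $1$. The genuine issue, and the reason (H2) is invoked, is the following: for $10^r w$ to be \emph{a} greedy representation we need, at position $\ell+1+r$, that after placing the digit $1$ there the remainder $n$ satisfies the greedy constraints below — which it does — \emph{and} we need that placing a $0$ at position $\ell+1+r$ would be wrong, i.e.\ $U_{\ell+1+r} + n \ge U_{\ell+1+r}$, trivial; hmm. The point I keep circling is that the digit at position $\ell+1+r$ being $1$ requires $U_{\ell+1+r} \le U_{\ell+1+r}+n < 2\,U_{\ell+1+r}$... but we might have $U_{\ell+1+r}+n \ge U_{\ell+2+r}$, forcing the leading digit to a higher position with possibly a nonzero digit in between. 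Here is where (H2) enters: I want to choose $r$ so that $U_{\ell+1+r} - U_{\ell+r}$ is large, specifically $U_{\ell+1+r} - U_{\ell+r} > n$; combined with $n < U_{\ell+1}$ this is where the argument has teeth. Actually the cleanest route, and the one I would take, is to pick $r$ with $U_{\ell+1+r} - U_{\ell+r} > n$ (possible by (H2), since the gap sequence is unbounded, so infinitely many such $r$ exist), and then verify directly that $10^r w$ satisfies all the greedy inequalities: the new ones are $n < U_{t+1}$ for $\ell \le t \le \ell + r$, all implied by $n < U_{\ell+1}$ and monotonicity, so in fact \emph{any} $r$ works for being a valid representation, and the role of (H2) together with the gap condition is to guarantee it is the \emph{greedy} one — i.e.\ that the greedy algorithm on input $U_{\ell+1+r}+n$ indeed outputs $10^r w$. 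I would spell this out: run the greedy algorithm; the largest $U_j \le U_{\ell+1+r} + n$ is $U_{\ell+1+r}$ precisely when $U_{\ell+1+r}+n < U_{\ell+2+r}$, i.e.\ $n < U_{\ell+2+r} - U_{\ell+1+r}$, which holds by our choice of $r$; subtracting, the remainder is $n$, whose greedy representation is $w$ (padded with the $r$ zeros since $n < U_{\ell+1} \le U_{\ell+1}, \dots, U_{\ell+r}$, so the greedy algorithm outputs $0$ at positions $\ell+1,\dots,\ell+r$ before reaching $w$).

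So the key steps, in order: (1) set $n = \val_U(w)$ and recall $n < U_{\ell+1}$ from greediness of $w$; (2) invoke (H2) to select arbitrarily large $r$ with $U_{\ell+1+r} - U_{\ell+r} > n$ — more precisely, since $\limsup (U_{i+1}-U_i) = +\infty$, there are infinitely many indices $j$ with $U_{j+1}-U_j > n$, and I take $j = \ell+r$ with $r$ ranging over an infinite set; (3) check that $U_{\ell+1+r}$ is the largest term of $U$ not exceeding $U_{\ell+1+r}+n$, using $n < U_{\ell+2+r}-U_{\ell+1+r}$, which follows from $U_{\ell+2+r}-U_{\ell+1+r} \ge U_{\ell+1+r} - U_{\ell+r} > n$ when the gap is non-decreasing there — but since we do not assume (H3) in this lemma, I instead pick $r$ so that the gap \emph{at} position $\ell+1+r$ exceeds $n$, i.e.\ $U_{\ell+2+r} - U_{\ell+1+r} > n$, again possible by (H2); (4) conclude the greedy algorithm emits $1$ at position $\ell+1+r$, then $0$ at positions $\ell+r, \dots, \ell+1$ (because the remaining value $n$ is less than $U_{\ell+1} \le U_{m}$ for all $m$ in that range), then the digits of $w$; hence $\rep_U(U_{\ell+1+r}+n) = 10^r w$.

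The main obstacle — really the only point requiring care — is the bookkeeping in step (3)–(4): ensuring that after the leading $1$ the greedy algorithm genuinely produces the block $0^r$ and not some shorter or differently-padded word, and correctly handling the degenerate case $w = \varepsilon$ (where $n = 0$ and $10^r$ is the greedy representation of $U_{\ell+1+r} = U_{r}$ for appropriate indexing). This amounts to checking the inequality $U_{\ell+1+r}+n < U_{\ell+2+r}$, which is exactly what the choice of $r$ via (H2) delivers, so the obstacle is more notational than mathematical. I would present this compactly, possibly just citing \cite[Lemma~7]{BCFR} for the full verification as the excerpt already suggests, but include the above selection of $r$ and the one-line greedy-algorithm check so the argument is self-contained.
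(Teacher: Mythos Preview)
Your proposal is correct: the final enumerated argument (steps (1)--(4)) is exactly the standard proof, and the key inequality you isolate, $U_{\ell+2+r}-U_{\ell+1+r}>n$, is precisely the greediness condition at the top position, obtainable for infinitely many $r$ from (H2). The paper itself does not give a proof but simply refers to \cite[Lemma~7]{BCFR}; your argument is essentially the one found there, so there is nothing to compare beyond noting that your write-up, once stripped of the exploratory back-and-forth, matches the expected direct verification.
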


When $\mathbb{N}$ is $U$-recognizable, using a pumping-like argument, we can give an upper bound on the number of zeroes to be inserted.

{
\begin{lemma}\label{lem:h12}
Let $U$ be a numeration system satisfying (H1) and (H2). Then there is an integer constant $C >0$ such that if $w$ is a greedy $U$-representation, then for some $\ell < C$, $10^{\ell}w$ is also a greedy $U$-representation.
\end{lemma}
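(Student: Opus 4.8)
The plan is to combine Lemma~\ref{lem:condlim} with a pigeonhole (``pumping'') argument on a deterministic automaton recognizing $\rep_U(\N)$. First I would invoke (H1) to fix a DFA $\mathcal{M}$ with state set $Q$, initial state $q_0$ and transition function $\delta$ such that $L(\mathcal{M})=\rep_U(\N)$, and set $C\colonequal\size{Q}\ge 1$. The point of recording this automaton at the outset is that, since $L(\mathcal{M})$ is \emph{exactly} the set of greedy $U$-representations, a word is a greedy $U$-representation if and only if it is accepted by $\mathcal{M}$; so no ad hoc verification of the greediness conditions will be needed below.

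Next, starting from an arbitrary greedy $U$-representation $w$, Lemma~\ref{lem:condlim} (this is the only place where (H2) enters) gives some $r\ge 0$ with $10^r w$ a greedy $U$-representation, hence a word accepted by $\mathcal{M}$; fix an accepting run of $\mathcal{M}$ on $10^r w$. Along its prefix $10^r$ this run passes through the states $s_j\colonequal\delta(q_0,10^j)$ for $j=0,1,\ldots,r$. If $r\ge C$, then among the $C+1$ states $s_0,\ldots,s_C$ two coincide, say $s_{j_1}=s_{j_2}$ with $0\le j_1<j_2\le C$; excising from the run the loop labelled $0^{\,j_2-j_1}$ at $s_{j_1}=s_{j_2}$ yields an accepting run of $\mathcal{M}$ on $10^{\,r-(j_2-j_1)}w$, and the exponent is still non-negative because $j_2-j_1\le C\le r$. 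Each such reduction strictly decreases the number of leading zeros, so iterating it finitely many times produces a word $10^{\ell}w$ with $\ell<C$ that is accepted by $\mathcal{M}$, and hence — by the first paragraph — is a greedy $U$-representation, which is what is wanted.

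I do not expect any real obstacle here: the argument is routine. The only two things to be careful about are (i) choosing the repeated state among the first $C+1$ states visited, so that the excised loop is short enough not to drive the exponent negative, and (ii) the (immediate) observation that deleting a cycle from an accepting run of a DFA leaves an accepting run on the shorter word. One could instead phrase the reduction non-constructively — $\rep_U(\N)$ is regular, so the set $\{\ell\ge 0 : 10^\ell w\in\rep_U(\N)\}$ is a union of residue classes past a bounded threshold — but the direct pumping argument already delivers the explicit constant $C=\size{Q}$, which is all the statement requires.
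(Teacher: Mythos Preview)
Your proposal is correct and follows essentially the same route as the paper: fix a DFA for $\rep_U(\N)$ with $C$ states, use Lemma~\ref{lem:condlim} to obtain some $10^r w\in\rep_U(\N)$, and then pump down on the block of zeros until the exponent drops below $C$. The only cosmetic difference is that the paper invokes Lemma~\ref{lem:condlim} directly with $r\ge C$ (since that lemma provides arbitrarily large $r$) and concludes in one step, whereas you start from an arbitrary $r$ and iterate; both arrive at the same bound $C=\size{Q}$.
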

\begin{proof}
By assumption (H1), there is a DFA, say with $C$ states, accepting the numeration language $\rep_U (\N)$. Let $w$ be a greedy $U$-representation. Then from Lemma~\ref{lem:condlim}, there is $r \geq C$ such that $10^rw \in \rep_U(\N)$. The path of label $10^rw$ starting from the initial state is accepting. Since $r \geq C$, a state is visited at least twice when reading the block $0^r$. Thus there is an accepting path of label $10^{\ell}w$ with $\ell < C$.
\end{proof}

Let us introduce a constant $Z$.

\begin{definition}\label{def:Z}
Let $U$ be a numeration system satisfying (H1), (H2) and (H3). Thanks to (H2), there exist infinitely many $R$ such that
\[
U_{R+1} - U_R \geq U_{i+1} - U_i
\]
for all $i \leq  R$. We may choose the least $R$ with this property and such that $R\ge G$ where $G$ is the constant given in (H3). We set 
\[
Z = \max \{ R, C \}
\]
where $C$ is the constant given in Lemma~\ref{lem:h12}.
\end{definition}
}

{
In view of Remark~\ref{rem:hard} about the status of the general decision problem about (H3), we assume that $G$ is given as an input with the numeration system. Hence the constants $C,R,Z$ can be effectively computed. Indeed, $C$ can be deduced from the automaton accepting the language of the numeration. Then $R$ can be computed by an exhaustive search and finally, one has to choose $Z = \max \{ R,C\}$.
}

{
Thanks to (H3), we have more flexibility about the inserted zeroes: we can add as many zeroes as needed to greedy representations and obtain again greedy representations.

\begin{lemma}\label{lem:h3}
Let $U$ be a numeration system satisfying (H1), (H2) and (H3). If $w$ is a greedy $U$-representation, then for all $z \geq Z$, $10^zw$ is also a greedy $U$-representation.
\end{lemma}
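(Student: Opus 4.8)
The plan is to translate the claim, via the definition of a greedy representation, into a single inequality between $\val_U(w)$ and a gap $U_{j+1}-U_j$, and then to transport that inequality upward using Lemma~\ref{lem:h12} together with hypothesis~(H3).

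Write $m=|w|$ and $n=\val_U(w)$; we may assume $w\neq\varepsilon$, the empty case being immediate since $\rep_U(U_z)=10^z$ for every $z$. For any $z\ge 0$ the word $10^zw$ represents $U_{z+m}+n$ and has leading digit $1\neq 0$, so it is a greedy $U$-representation if and only if all of its greediness inequalities hold. Those at positions below $m$ coincide with the corresponding inequalities for $w$; those at positions $m\le s\le z+m-1$ read $n<U_{s+1}$, which holds because $n<U_m\le U_{s+1}$; and the topmost one, at position $z+m$, reads $n<U_{z+m+1}-U_{z+m}$. Hence it suffices to prove that $n<U_{z+m+1}-U_{z+m}$ for every $z\ge Z$.

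First I would invoke Lemma~\ref{lem:h12} to get some $\ell_0<C\le Z$ with $10^{\ell_0}w$ greedy; applying the reduction above with $z=\ell_0$ gives $n<U_{\ell_0+m+1}-U_{\ell_0+m}$. Writing $g(i)\colonequal U_{i+1}-U_i$, it then remains to check that $g(\ell_0+m)\le g(z+m)$ for all $z\ge Z$. This is where (H3) and the index $R$ of Definition~\ref{def:Z} enter: (H3) makes $g$ non-decreasing on $\{i\ge G\}$, while $R\ge G$ is chosen so that $g(R)\ge g(i)$ for all $i\le R$. If $\ell_0+m\le R$, then $g(\ell_0+m)\le g(R)\le g(z+m)$, the last inequality because $z+m\ge Z\ge R\ge G$. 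If $\ell_0+m>R$, then $\ell_0+m\ge G$ and moreover $z+m>\ell_0+m$ because $z\ge Z\ge C>\ell_0$, so $g(\ell_0+m)\le g(z+m)$ again by monotonicity of $g$ on $\{i\ge G\}$. In either case $n<g(\ell_0+m)\le g(z+m)=U_{z+m+1}-U_{z+m}$, as required.

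I do not foresee a real obstacle: once the reduction is in place the argument is elementary, and the only delicate point is the bookkeeping of indices. The value $Z=\max\{R,C\}$ is tuned so that $z\ge Z$ forces simultaneously $z>\ell_0$ (to dominate the length of the short extension supplied by Lemma~\ref{lem:h12}) and $z\ge R$ (to reach, via the maximal-gap index $R$, the range where $g$ is non-decreasing). The one subtlety is the regime $\ell_0+m<G$, where (H3) cannot be applied at index $\ell_0+m$ itself; this is precisely why the defining property of $R$ is stated for all $i\le R$ rather than only for $i\ge G$.
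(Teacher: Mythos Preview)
Your proof is correct and follows essentially the same approach as the paper's. Both arguments invoke Lemma~\ref{lem:h12} to obtain a short greedy extension $10^{\ell}w$, then use the defining property of $R$ together with (H3) to propagate the top greediness inequality $n<U_{j+1}-U_j$ to all larger indices; your explicit introduction of the gap function $g$ and your case split at $R$ (rather than at $Z$) are cosmetic rephrasings of the same idea.
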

\begin{proof}
Let $w$ be a greedy $U$-representation. By Lemma~\ref{lem:h12}, there is $\ell < C$ such that $10^{\ell}w$ is a greedy $U$-representation. Let $i = \ell + \vert w \vert$. Let $n = \val_U(w)$. We have $U_i + n < U_{i+1}$.
\begin{itemize}
\item If $i \geq Z$, similarly as in Remark~\ref{rem:h3w}, since $Z\ge G$, 
$$U_{i+1} + n = U_{i+1} - U_i + U_i + n \leq U_{i+2} - U_{i+1} + U_i + n < U_{i+2}.$$
Hence $U_j + n < U_{j+1}$ for all $j \geq i$. Otherwise stated, $10^{\ell^{\prime}}w$ is a greedy $U$-representation for all $\ell^{\prime} \geq \ell$. In particular, since $\ell < Z$, for all $z \geq Z$, $10^zw$ is a greedy $U$-representation.
\item If $i < Z$, then
\begin{align*}
U_{Z} + n &= U_{Z} - U_i + U_i + n < U_{Z} - U_{i} + U_{i+1}\\
&\leq U_{Z+1} - U_{i+1} + U_{i+1} \leq U_{Z+1}
\end{align*}
because $Z\ge R\ge G$. Hence $10^{Z {-} \vert w \vert}w$ is a greedy $U$-representation. We conclude by applying the first part of the proof.
\end{itemize}
\end{proof}

}

\begin{example}\label{exa:noth2}
	The sequence $1,2,4,5,16,17,64,65,\ldots$ is a solution of the linear recurrence $U_{i+4}=5U_{i+2}-4U_i$ but it does not satisfy (H3). The property stated in Lemma~\ref{lem:h3} does not hold: only some shifts to the left of the leading coefficient $1$ lead to valid greedy expansions. The word $1001$ is the greedy representation of $6$ but for all $t\ge 1$, $1(00)^t1001$ is not a greedy representation.
\end{example}

\begin{example}\label{exa:noth2-newh2ok}
    The sequence $1,2,3,4,8,12,16,32,48,64,128,\ldots$ is a solution of the linear recurrence $U_{i+3}=4U_i$. The numeration language $0^*\rep_U(\mathbb{N})$ is the set of suffixes of $\{000,001,010,100\}^*$, hence (H1) holds. For all $i\ge 0$, $U_{i+1}-U_i=4^{\lfloor i/3\rfloor}$. Therefore, (H2) and (H3) are also verified.
\end{example}

We will also make use of the following folklore result. See, for instance, \cite[Prop.~3.1.9]{cant}. It relies on the fact that a linear recurrence sequence is ultimately periodic modulo~$Q$. 

\begin{prop}
\label{pro:up}
    Let $Q,r\ge 0$. Let $A\subseteq\mathbb{N}$ be a finite alphabet. 
If $U$ is a linear numeration system, then 
$$\left\{w\in A^*\mid 
\val_U(w)\in Q\, \mathbb{N}+r\right\}$$
is accepted by a DFA that can be effectively constructed. In
particular, whenever $\mathbb{N}$ is $U$-recognizable, i.e.\  under (H1), then any ultimately periodic
set is $U$-recognizable.
\end{prop}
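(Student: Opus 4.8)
The plan is to treat the two assertions separately, the first one — regularity of $\{w\in A^*:\val_U(w)\in Q\,\N+r\}$ — being where the work lies. First I would isolate the only number‑theoretic input, namely that for $Q\ge 1$ the sequence $(U_i\bmod Q)_{i\ge 0}$ is ultimately periodic, with effectively computable preperiod $m$ and period $\pi$. Since $U$ is linear, there are $k$, $a_0,\dots,a_{k-1}\in\Z$ and $N$ with $U_{i+k}=a_{k-1}U_{i+k-1}+\dots+a_0U_i$ for all $i\ge N$ (the recurrence, a fortiori $N$, being part of the data), so the $k$‑tuples $v_i=(U_i,\dots,U_{i+k-1})\bmod Q$ satisfy $v_{i+1}=\phi(v_i)$ for $i\ge N$, where $\phi\colon(\Z/Q\Z)^k\to(\Z/Q\Z)^k$ is induced by the recurrence. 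Being the forward orbit of $v_N$ under a self‑map of a finite set, $(v_i)_{i\ge N}$ is ultimately periodic, hence so is $(U_i\bmod Q)_{i\ge 0}$; one obtains $m$ and $\pi$ by iterating $\phi$ from $v_N$ until a value recurs.

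Next I would build the automaton, and the one genuine point here is the reading direction. Under the MSDF convention a word $w=w_\ell\cdots w_0$ cannot be processed left to right while only tracking $\val_U$ of the current prefix modulo $Q$: appending a digit renumbers all the weights $U_i$, so the induced transition is not a function of the residue alone. I would therefore pass to reversed words, which is harmless since a language is regular — effectively — if and only if its reversal is. Concretely, I build a DFA over $A$ reading $w_0w_1\cdots w_\ell$, with states $(t,s)$ where $s\in\Z/Q\Z$ stores $\sum_{i=0}^{j-1}w_iU_i\bmod Q$ after $j$ digits and $t$ records the position $j$ inside the ultimately periodic pattern: $t$ runs through $\{0,1,\dots,m+\pi-1\}$, incrementing on each digit and wrapping from $m+\pi-1$ back to $m$, while reading the digit $a$ at position $t$ performs $s\mapsto s+a\,(U_t\bmod Q)$. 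With initial state $(0,0)$ and final states $\{(t,s):s\equiv r\pmod Q\}$, this DFA accepts $\{w:\val_U(w)\equiv r\pmod Q\}$, and reversing it yields a DFA for the same language under the MSDF convention. To land on $Q\,\N+r$ rather than a full residue class I would then note that $Q\,\N+r$ differs from $\{n\ge 0:n\equiv r\pmod Q\}$ only by the finitely many integers in $[0,r)$ that are $\equiv r\pmod Q$, and that for each fixed value $n_0$ the set $\{w\in A^*:\val_U(w)=n_0\}$ equals $0^*S$, where $S$ is the finite (and listable) set of leading‑zero‑free words of value $n_0$ — finite because $U_i\to\infty$ bounds the length of such a word — hence is regular; intersecting with the complement of these finitely many $0^*S$ produces exactly $\{w:\val_U(w)\in Q\,\N+r\}$. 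The degenerate case $Q=0$ is simply $\{w:\val_U(w)=r\}=0^*S$.

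For the second assertion I would use that, under (H1), $\rep_U(\N)$ is regular and effectively given. An ultimately periodic $X$ with period $\pi_X$ and preperiod $s$ decomposes as $X=F\cup\bigcup_j(\pi_X\N+b_j)$ with $F\subseteq[0,s)$ finite and finitely many $b_j\ge s$ (one per position of a $1$ in the periodic block). Since $\val_U(w)$ is the integer whose greedy representation is $w$ whenever $w\in\rep_U(\N)$, one has $\rep_U(X)=\rep_U(F)\cup\bigcup_j\big(\rep_U(\N)\cap\{w:\val_U(w)\in\pi_X\N+b_j\}\big)$, a finite union of the finite language $\rep_U(F)$ with languages that are regular by the first assertion together with the regularity of $\rep_U(\N)$. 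All constructions being effective, one obtains a DFA for $\rep_U(X)$.

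The main obstacle is precisely the point flagged in the second paragraph: one must read the least significant digit first (equivalently, work with reversals), because under the paper's MSDF convention no finite automaton can track $\val_U\bmod Q$ of prefixes. Once this is recognised, the remaining ingredients — ultimate periodicity of $(U_i\bmod Q)$, the $0^*S$ normal form for a fixed value, and the assembly in the last step — are routine.
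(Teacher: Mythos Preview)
Your proof is correct. The paper does not give its own proof of this proposition; it records it as a folklore result, citing \cite[Prop.~3.1.9]{cant} and noting only that ``it relies on the fact that a linear recurrence sequence is ultimately periodic modulo~$Q$.'' Your argument is precisely the standard one behind that citation: compute the preperiod $m$ and period $\pi$ of $(U_i\bmod Q)_{i\ge 0}$ from the recurrence, build an LSDF automaton whose state $(t,s)$ tracks the position within the $(m,\pi)$ pattern together with the running residue, then reverse to recover the MSDF language. Your treatment of the edge cases ($Q=0$ and $r\ge Q$) via the $0^*S$ description of fixed-value sets, and your decomposition for the second assertion, are also correct and routine.
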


Under assumption (H1) the formal series $\sum_{i\ge 0}U_i\, X^i$ is $\mathbb{N}$-rational because $U_i$ is the number of words of length less than or equal to $i$ in the regular language $\rep_U(\mathbb{N})$. One can therefore make use of Soittola's theorem \cite[Thm.~10.2]{SS}: The series is the merge of rational series with dominating eigenvalues and polynomials. We thus define the following quantities.

\begin{definition}\label{def:uT}
  We introduce an integer $u$ and a real number ${\beta}$ depending only on the numeration system. From Soittola's theorem, there exist an integer $u\ge 1$, real numbers $\beta_0,\ldots,\beta_{u-1}\ge 1$ and non-zero polynomials $P_0,\ldots,P_{u-1}$ such that for $r\in\{0,\ldots,u-1\}$ and large enough $i$, {say $i\ge I_1$,}
  \begin{equation}
    \label{eq:soit}
    U_{ui+r}= P_r(i)\, \beta_r^i +Q_r(i)
  \end{equation}
where $\frac{Q_r(i)}{\beta_r^i}\to 0$ when $i\to\infty$. Since $(U_i)_{i\ge 0}$ is increasing, for $r<s<u$, for all $i\ge I_1$, we have
$$U_{ui+r}<U_{ui+s}<U_{u(i+1)+r}.$$
By letting $i$ tend to infinity, this shows that we must have $\beta_0=\cdots=\beta_{u-1}$ which we denote by $\beta$ and $\deg(P_0)=\cdots=\deg(P_{u-1})$ which we denote by~$d$. Otherwise stated, $U_{ui+r}\sim c_r i^{d} \beta^i$ for some constant $c_r$. Finally, let $T$ be such that $c_T=\max_{0\le r<u} c_r$. Otherwise stated, we highlight with $T$ a subsequence $(U_{ui+T})_{i\ge 0}$ with the maximal dominant coefficient.
{ Since $(U_i)_{i \in \N}$ is increasing and $\frac{Q_T(i)}{\beta^i} \rightarrow 0$ when $i \rightarrow + \infty$, there is $I_2 > 0$ such that $P_T(i) > 0$, for all $i \geq I_2$. Moreover, there is $I_3 > 0$ such that $P_T$ is non-decreasing ``after $I_3$'', i.e.\ $P_T(n)\le P_T(n+1)$ for all $n\ge I_3$. Finally, let $I$ be the positive integer $\max \{ I_1 , I_2 , I_3 \}$.
  }
\end{definition}

Note that if a numeration system has a dominant root, i.e.\  the minimal recurrence relation satisfied by $(U_i)_{i\ge 0}$ has a unique root $\beta>1$, possibly with multiplicity greater than 1, of maximum modulus, then $u=1$.

\begin{lemma}\label{lem:length}
With the notation of Definition~\ref{def:uT}, if $\beta>1$ then there exist non-negative constants $K$ and $L$ such that for all $n$, $$\size{\rep_U(n)} < u \log_{\beta}(n) + K$$
and $$\size{\rep_U(n)}  > u \log_\beta(n)- u \log_\beta(P_T(\log_\beta(n)+K/u))-L.$$
\end{lemma}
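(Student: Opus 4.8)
The plan is to extract, from the greedy $U$-representation of $n$, information about its length $\ell = \size{\rep_U(n)}$ by sandwiching $n$ between two values of the sequence $U$. Write $\rep_U(n) = w_\ell \cdots w_0$ with $w_\ell \neq 0$. By the greediness condition applied at $t = \ell$, we have $n < U_{\ell+1}$, and since $w_\ell \geq 1$ we have $n \geq U_\ell$. So the basic estimate is
$$U_\ell \leq n < U_{\ell+1}.$$
Now write $\ell = ui + r$ with $0 \leq r < u$, so that $i = \lfloor \ell/u \rfloor$. The idea is to use \eqref{eq:soit} to convert these inequalities on $U$ into inequalities on $\ell$ and $i$, taking logarithms in base $\beta$. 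The subtlety in both directions is the polynomial factor $P_r(i)$ and the error term $Q_r(i)$; since $\beta > 1$, these contribute only lower-order terms, but one must be careful to express the final bound using only the highlighted subsequence via $T$ and the fixed polynomial $P_T$.

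\textbf{Upper bound on the length.} From $U_\ell \leq n$ and $U_\ell = U_{ui+r} = P_r(i)\beta_r^i + Q_r(i)$ with $\beta_r = \beta$, together with the fact that for large $i$ we have $U_{ui+r} \geq \beta^i$ (because $P_r(i)\beta^i$ dominates and $P_r(i) \to \infty$ or is at least eventually $\geq 1$, as $P_r$ is non-zero with positive leading coefficient), we get $\beta^i \leq n$ for $n$ large, hence $i \leq \log_\beta(n)$, hence $\ell = ui + r < u\log_\beta(n) + u$. For the finitely many small $n$ one absorbs the discrepancy into the additive constant $K$. This gives the first inequality $\size{\rep_U(n)} < u\log_\beta(n) + K$ for a suitable non-negative constant $K$.

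\textbf{Lower bound on the length.} This is the main obstacle, because here the polynomial factor works against us: from $n < U_{\ell+1}$ we want a lower bound on $\ell$, and $U_{\ell+1}$ is roughly $P_\cdot(\cdot)\beta^{\cdot}$, so the polynomial inflates $U_{\ell+1}$ and a priori forces $\ell$ to be smaller. The plan is: from $n < U_{\ell+1}$, and choosing an index in the class $T$ so that we may bound $U_{\ell+1}$ above by something like $U_{u(i+1)+T}$ (using that $T$ has the maximal dominant coefficient $c_T$ and monotonicity of the subsequence for large $i$, after comparing $\ell+1 = ui + (r+1)$ with nearby multiples of $u$), we obtain $n < P_T(i+1)\beta^{i+1}(1+o(1))$, hence roughly $\beta^{\ell/u} \leq \beta^{i+1} \geq n / (c\, P_T(i+1))$ for a constant $c$. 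Taking $\log_\beta$, $\ell/u \gtrsim \log_\beta(n) - \log_\beta(P_T(i+1)) - O(1)$. Finally, to replace the argument $i+1$ of $P_T$ by the quantity $\log_\beta(n) + K/u$ appearing in the statement, I use the already-established upper bound $\ell < u\log_\beta(n) + K$, so $i + 1 \leq \ell/u + 1 < \log_\beta(n) + K/u + 1 \leq \log_\beta(n) + K/u$ after enlarging $K$ if necessary, combined with the fact that $P_T$ is eventually non-decreasing (the constant $I_3$, hence $I$, in Definition~\ref{def:uT}) so that $P_T(i+1) \leq P_T(\log_\beta(n) + K/u)$ for $n$ large. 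Absorbing all the $O(1)$ terms and the finitely many exceptional small $n$ into a single non-negative constant $L$ yields
$$\size{\rep_U(n)} > u\log_\beta(n) - u\log_\beta\!\big(P_T(\log_\beta(n) + K/u)\big) - L.$$
The bookkeeping of which multiple of $u$ to compare against, and ensuring $P_T$ is evaluated in its monotone range with a positive argument, are where the constants $I_1, I_2, I_3$ and $T$ from Definition~\ref{def:uT} are used; these are the only genuinely delicate points, and everything else is routine manipulation of the asymptotics in \eqref{eq:soit}.
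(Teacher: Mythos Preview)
Your proof sketch is correct and follows essentially the same route as the paper: sandwich $n$ between consecutive $U$-values, pass to the $T$-indexed subsequence via monotonicity of $U$, take base-$\beta$ logarithms, and feed the already-established upper bound back into the argument of $P_T$ using its eventual monotonicity (the constant $I_3$ from Definition~\ref{def:uT}). The only cosmetic slips are an off-by-one in your indexing (if $\size{\rep_U(n)}=\ell$ then the word is $w_{\ell-1}\cdots w_0$, giving $U_{\ell-1}\le n<U_\ell$) and that the comparison $U_{\ell+1}\le U_{u(i+1)+T}$ holds simply because $U$ is increasing and $\ell+1\le u(i+1)+T$, not because $c_T$ is maximal.
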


This lemma shows that the length of the greedy $U$-representation of $n$ grows at most like $\log_{\beta^{1/u}}(n)$. If $P_T$ is a constant polynomial, the lower bound is of the form $u \log_\beta(n) -L'$ for some non-negative constant~$L'$. From this result, we may express the weaker information (on ratios instead of differences) that $\size{\rep_U(n)}\sim u \log_\beta(n)$. The intricate form of the lower bound can be seen on an example such as $(U_i)_{i\ge 0}=(i^d\, 2^i)_{i\ge 0}$. In such a case, we get $\log_2(n)<\size{\rep_U(n)}+d\, \log_2(\size{\rep_U(n)})$. Hence a lower bound for $\size{\rep_U(n)}$ is less than $\log_2(n)$.

\begin{proof}
We have $\size{\rep_U(n)}=\ell$ if and only if $U_{\ell-1}\le n <U_\ell$. 
We make use of Definition~\ref{def:uT} for $u$, $\beta$, $T$ and $I$.  
Let $j=\lfloor\frac{\ell-1-T}{u}\rfloor$.

{
  Suppose that $n$ is large enough so that $j \geq I$. Since $U$ is increasing and $j \geq I$, \eqref{eq:soit} gives 
}
$$U_{\ell-1}\ge U_{uj+T} =P_T(j) \beta^j + Q_T(j).$$
We get
$$\log_{\beta}(n)
\ge \log_{\beta}(U_{\ell-1})
\ge j+\log_\beta(P_T(j))+\log_{\beta}\left(1 + \frac{Q_T(j)}{P_T(j)\beta^j}\right).$$
{ Note that, for large enough $n$, we can suppose that $1 + \frac{Q_T(j)}{P_T(j)\beta^j} >0$ (since $\frac{Q_T(i)}{\beta^i} \rightarrow 0$ when $i \rightarrow + \infty$ and $P_T$ is non-decreasing after $I$), so that the last logarithm in the above inequality is well defined.\\
Hence
\[
j \leq \log_{\beta}(n) - \log_{\beta}(P_T(j)) -\log_{\beta} \left( 1 + \frac{Q_T(j)}{P_T(j)\beta^j} \right).
\]
Moreover, $j > \frac{\ell {-} 1 {-} T}{u} {-} 1 \geq  \frac{\ell {-} u}{u} {-} 1 \geq \frac{\ell}{u} {-} 2$. We obtain
\[
\ell < u(j+2) \leq u \log_{\beta}(n) + 2u - u \log_{\beta}(P_T(j))- u \log_{\beta} \left( 1 + \frac{Q_T(j)}{P_T(j)\beta^j} \right).
\]
Since $j \geq I$ and $P_T$ is non-decreasing after $I$, we get
\[
\ell < u(j+2) \leq u \log_{\beta}(n) + 2u - u\log_{\beta}(P_T(I))- u \log_{\beta}\left( 1 + \frac{Q_T(j)}{P_T(j) \beta^j} \right). 
\]
Finally, since $\frac{Q_T(i)}{\beta^i} \rightarrow 0$ when $i \rightarrow + \infty$, there is a constant $K \geq 0$ such that
\[
\ell < u(j+2) \leq u\log_{\beta}(n) + K.
\]
We have supposed $n$ to be large enough so that $j \geq I$ and $1+\frac{Q_T(j)}{P_T(j)\beta^j}>0$. There is only a finite number of integers not fulfilling these conditions. Hence, possibly increasing the value of the constant $K$, we can assume that the above inequality holds for all integers $n$.
}

We proceed similarly to get a lower bound for $\ell$. Let $k=\lfloor\frac{\ell-T}{u}\rfloor$.
{ Observe that $j \leq k$, hence $k \geq I$. Since $U$ is increasing, we have
\[
U_{\ell} < U_{u(k+1)+T} = P_T(k+1) \beta^{k+1} + Q_T(k+1).
\]
We obtain
\[
\log_{\beta}(n) < \log_{\beta}(U_{\ell}) < k+1 + \log_{\beta} (P_T(k+1)) + \log_{\beta} \left( 1 + \frac{Q_T(k+1)}{P_T(k+1)\beta^{k+1}} \right).
\]
As in the first part of the proof, we can suppose that $n$ is large enough to get $1 + \frac{Q_T(k+1)}{P_T(k+1)\beta^{k+1}}>0$.\\
Observe that $k \leq j+1$. Hence, from the first part, we get
\[
k+1 \leq j+2 \leq \log_{\beta}(n) + \frac{K}{u}.
\]
Since $k \leq \frac{\ell {-} T}{u} \leq \frac{\ell}{u}$, we have
\[
\ell \geq uk > u \log_{\beta}(n) - u -u \log_{\beta}(P_T(k+1)) - u \log_{\beta}\left( 1 + \frac{Q_T(k+1)}{P_T(k+1)\beta^{k+1}} \right).
\]
We have $k+1 > k \geq I$ and recall that $P_T$ is non-decreasing after $I$, hence
\[
P_T(k+1) \leq P_T\left( \log_{\beta}(n) + \frac{K}{u} \right).
\]
Hence
\begin{align*}
\ell > u  \log_{\beta}(n) - u - u \log_{\beta}&\left( P_T\left( \log_{\beta}(n)+\frac{K}{u} \right) \right)\\
&- u \log_{\beta} \left( 1 + \frac{Q_T(k+1)}{P_T(k+1)\beta^{k+1}} \right).
\end{align*}
Furthermore, since $\frac{Q_T(i)}{\beta^i} \rightarrow 0$ when $i \rightarrow + \infty$ and $P_T$ is non-decreasing after $I$, there is a constant $L \geq 0$ such that
\[
\ell > u \log_{\beta}(n) - u \log_{\beta} \left( P_T \left( \log_{\beta}(n) +  \frac{K}{u} \right) \right)  - L.
\]
As in the first part of the proof, we only considered those $n$ such that $j \geq I$ and $1 + \frac{Q_T(k+1)}{P_T(k+1)\beta^{k+1}} >0$. Possibly increasing the value of $L$, we can assume that the above inequality is satisfied for all integers $n$.
  }
\end{proof}

\begin{example}
\label{exa:merge}
Consider the sequence $1,2,6,12,36,72,\ldots$ defined by $U_0=1$, $U_{2i+1}=2 U_{2i}$ and $U_{2i+2}=3U_{2i+1}$. Then for all $i\ge 0$, $U_{i+2}=6U_{i}$. It is easily seen that $U_{2i}=6^i$ and $U_{2i+1}=2\cdot 6^i$. With the notation of Definition~\ref{def:uT}, $u=2$, $\beta=6$, $d=0$ and $P_T=c_T=2$.
{
  The language $0^*\rep_U(\mathbb{N})$ is made of words where in even (resp.\ odd) positions digits belong to $0,1$ (resp.\ $0,1,2$), i.e. 
  $$0^*\rep_U(\mathbb{N})=(\varepsilon+0+1)((0+1+2)(0+1))^*.$$
}If $\size{\rep_U(n)}=2 \ell+1$ then $U_{2\ell}=6^{\ell}\le n<U_{2\ell+1}=2\cdot 6^{\ell}$, so $\size{\rep_U(n)}\le 2\log_6(n)+1$  and $\size{\rep_U(n)}>2\log_6(\frac{n}{2})+1=2\log_6(n)-2\log_6(2)+1$.
If $\size{\rep_U(n)}=2 \ell$ then $U_{2\ell-1}=2\cdot 6^{\ell-1}\le n<U_{2\ell}= 6^{\ell}$, so $\size{\rep_U(n)}\le 2\log_6(3n)= 2\log_6(n)+2\log_6(3)$ and $\size{\rep_U(n)}>2\log_6(n)$.
\end{example}

\begin{example}
Consider the sequence $1,3,8,20,48,112,\ldots$ defined by $U_0=1$, $U_1=3$ and $U_{i+2}=4U_{i+1}-4U_i$. Then $U_i=(\frac{i}{2}+1)2^i$. With the notation of Definition~\ref{def:uT}, $u=1$, $\beta=2$, $d=1$ and $P_T(n)=\frac{n}{2}+1$.
If $\size{\rep_U(n)}=\ell$ then $U_{\ell-1}=(\frac{\ell-1}{2}+1)2^{\ell-1}\le n<U_{\ell}=(\frac{\ell}{2}+1)2^{\ell}$, so $\size{\rep_U(n)}< \log_2(n)+1$ and $\size{\rep_U(n)}> \log_2(n)-\log_2(\frac{\ell}{2}+1)>\log_2(n)-\log_2(\frac{1}{2}\log_2(n)+\frac{3}{2})$. With the notation of Lemma~\ref{lem:length}, $K=1$ and $P_T(\log_2(n)+K)=\frac{1}{2}\log_2(n)+\frac{3}{2}$.
\end{example}

As shown by the next result. It is enough to obtain a bound on the possible periods of $X$. In \cite[Prop.~44]{BCFR}, the result is given in a more general setting (i.e.\  for abstract numeration systems) and we restate it in our context.

\begin{prop}\label{bound:prep}
    Let $U$ be a numeration system satisfying (H1), let $X\subseteq\mathbb{N}$ be an ultimately periodic set and let $\mathcal{A}_X$ be a DFA accepting $\rep_U(X)$. Then the preperiod of $X$ is bounded by a computable constant depending only on the size of $\mathcal{A}_X$ and the period $\pi_X$ of $X$.
\end{prop}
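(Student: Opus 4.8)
The plan is to fix an ultimately periodic set $X$ with preperiod $\mu_X$ and period $\pi_X$, to let $\mathcal{A}_X$ be a DFA accepting $\rep_U(X)$ with $N$ states, and to show that $\mu_X$ cannot be too large relative to $N$ and $\pi_X$. The idea is purely combinatorial on the automaton: if the preperiod were very large, then the ``transient'' behaviour of $X$ would have to be encoded by arbitrarily long non-repeating computations in $\mathcal{A}_X$, contradicting the bound on the number of states. Concretely, suppose for contradiction that $\mu_X$ exceeds a threshold to be determined. Pick an index $i$ with $\mu_X \le i < \mu_X + \pi_X$; since $i$ is at least the preperiod, $i$ sits in the periodic part, so one can relate membership of $i, i+1, i+2, \dots$ in $X$ to membership of $i - \pi_X, i - \pi_X +1, \dots$, and these two integers differ on some shift by Lemma~\ref{lem:per}. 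The point will be to translate ``$i$ is smaller than the preperiod'' versus ``$i$ is at least the preperiod'' into a statement distinguishing two states of $\mathcal{A}_X$ that are reached by reading $U$-representations of $i-\pi_X$ and $i$, or rather suitable zero-padded prefixes thereof.

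The key steps, in order, are as follows. First, using Lemma~\ref{lem:length}, bound $\size{\rep_U(n)}$ by $u\log_\beta(n)+K$ (when $\beta>1$; the degenerate case $\beta=1$ means $U$ grows polynomially and can be handled separately, or is excluded by (H2)), so that the length of the representation of any integer below some bound is controlled. Second, for an integer $i$ in the range $[\mu_X, \mu_X+\pi_X)$ assumed large, consider the greedy representation $\rep_U(i)$ and, using Lemmas~\ref{lem:condlim}–\ref{lem:h3}, the family of representations $10^z \rep_U(i)$ for $z\ge Z$, which are all valid greedy $U$-representations of the integers $U_{z+|\rep_U(i)|}+i$. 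As $z$ ranges over a long interval, these integers are pairwise distinct and all lie in the periodic part of $X$ (their sizes are at least the preperiod once $z$ is moderately large), and their membership in $X$ is governed by $i \bmod \pi_X$. Third, run $\mathcal{A}_X$ on these words: the prefix $1$ leads to some state, then reading $0^z$ we traverse a cycle of length at most $N$, so for two values $z_1 \equiv z_2 \pmod{\mathrm{lcm}}$ (or simply $z_1,z_2$ both large with $z_1 - z_2$ a multiple of the cycle length) the computation arrives at the same state before reading $\rep_U(i)$; hence $\mathcal{A}_X$ accepts $10^{z_1}\rep_U(i)$ iff it accepts $10^{z_2}\rep_U(i)$, i.e. $U_{z_1+t}+i \in X \iff U_{z_2+t}+i\in X$ where $t=|\rep_U(i)|$. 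Fourth, pick $z_1, z_2$ so that $U_{z_1+t}$ and $U_{z_2+t}$ are incongruent modulo $\pi_X$ — possible because the periodic residues of $(U_i)$ modulo $\pi_X$ are not all equal once $\pi_X\nmid$ everything, and one uses Lemma~\ref{lem:per} to produce a shift $r<\pi_X$ witnessing that $U_{z_1+t}+i+r$ and $U_{z_2+t}+i+r$ cannot both behave the same way. This contradicts the previous paragraph once the preperiod is below the reach of these shifted integers, and quantifying ``how large $z$ must be before the integers $U_{z+t}+i$ land beyond the preperiod'' is exactly where $\size{\rep_U(\cdot)}\approx u\log_\beta(\cdot)$ and the constant $Z$ enter, producing the claimed computable bound on $\mu_X$ in terms of $N$ and $\pi_X$.

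I expect the main obstacle to be the bookkeeping that makes the bound genuinely \emph{computable} and \emph{explicit} in $N$ and $\pi_X$: one must control, simultaneously, (i) the number of zeroes $Z$ needed before Lemma~\ref{lem:h3} applies, (ii) the length $t=|\rep_U(i)|$ which via Lemma~\ref{lem:length} is $O(\log i)$ and hence itself depends on $\mu_X,\pi_X$, (iii) the cycle length $\le N$ traversed while reading $0^z$, and (iv) the requirement that the relevant shifted integers $U_{z+t}+i+r$ exceed the preperiod $\mu_X$ so that Lemma~\ref{lem:per} is legitimately invoked. Since (iv) involves $\mu_X$, which is the very quantity we are bounding, the argument is implicitly a fixed-point/contradiction argument: assuming $\mu_X$ strictly larger than an explicit function of $N$ and $\pi_X$ yields a contradiction, and one has to check the inequalities close up. The existence of the required pair $z_1,z_2$ with $U_{z_1+t}\nequiv U_{z_2+t}\pmod{\pi_X}$ — equivalently, that the residue sequence $(U_j \bmod \pi_X)_{j}$ is not eventually constant unless $X$ is trivially periodic — is a minor point that can be dispatched by noting that if all large $U_j$ are congruent mod $\pi_X$ then either $\pi_X$ divides all differences, which forces $X$ to in fact have a smaller period contradicting minimality of $\pi_X$, or one reduces to a subcase already handled. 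Since the statement cites \cite[Prop.~44]{BCFR} as the source, I would also remark that the proof there goes through verbatim under our hypotheses (H1)–(H3) because the only numeration-theoretic facts used are Lemmas~\ref{lem:condlim}, \ref{lem:h3} and \ref{lem:length}, all of which we have established.
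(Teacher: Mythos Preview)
Your proposal has a genuine gap: the argument you sketch is essentially the technique used in Proposition~\ref{pro:cas2a} to bound the \emph{period}, and it does not bound the \emph{preperiod} at all. In your fourth step you seek $z_1,z_2$ reaching the same state after $10^{z_j}$ yet with $U_{z_1+t}\not\equiv U_{z_2+t}\pmod{\pi_X}$, then invoke Lemma~\ref{lem:per} to contradict the same-state property. But the integers $U_{z_j+t}+i+r$ already exceed $\mu_X$ automatically (since $i\ge\mu_X$ and $U_{z_j+t}>0$), so the ``contradiction'' you obtain is completely independent of how large $\mu_X$ is; if it went through, it would contradict the ultimate periodicity of $X$ with period $\pi_X$, not produce a bound on the preperiod. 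Nowhere does the minimality of $\mu_X$ enter in a way that constrains it. A secondary issue is that Proposition~\ref{bound:prep} assumes only (H1), whereas you freely use Lemmas~\ref{lem:condlim}, \ref{lem:h3} and \ref{lem:length}, which require (H2), (H3) and $\beta>1$ respectively.

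The paper does not give its own proof here; it cites \cite[Prop.~44]{BCFR}, and the argument there is entirely different and much more direct. Let $Y$ be the purely periodic set of period $\pi_X$ agreeing with $X$ on $[\mu_X,\infty)$. By Proposition~\ref{pro:up} (which needs only (H1)), $\rep_U(Y)$ is accepted by a DFA whose size is bounded by a computable function of $\pi_X$ alone; taking a product with $\mathcal{A}_X$, the same holds for $\rep_U(X\triangle Y)$. Since $X\triangle Y\subseteq\{0,\ldots,\mu_X-1\}$ is finite, every accepted word has length strictly less than the number $M$ of states of this product automaton, so $\max(X\triangle Y)<U_M$. Minimality of the preperiod gives $\mathbf{1}_X[\mu_X-1]\neq\mathbf{1}_X[\mu_X-1+\pi_X]=\mathbf{1}_Y[\mu_X-1]$, hence $\mu_X-1\in X\triangle Y$ and therefore $\mu_X\le U_M$. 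To make $Y$ effective without knowing $\mu_X$ in advance, note there are only $2^{\pi_X}$ candidate purely periodic sets of period dividing $\pi_X$, and the correct $Y$ is the unique one for which $\rep_U(X\triangle Y)$ is finite, a decidable property of regular languages. This uses only (H1), matching the hypothesis of the proposition.
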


{
Thus, our aim is to bound the period $\pi_X$ only in terms of the given automaton recognizing $X$. 
}


\section{Number of states}\label{sec4}

We follow Honkala's strategy introduced in \cite{Honkala}. A DFA $\mathcal{A}$ accepting $\rep_U(X)$ is given as input. Assuming that $X$ is ultimately periodic, the number of states of $\mathcal{A}$ should provide an upper bound on the possible period and preperiod of $X$. Roughly speaking, the minimal preperiod/period should not be too large compared with the size of $\mathcal{A}$. This should leave us with a finite number of candidates to test. Thanks to Proposition~\ref{pro:up}, one therefore builds a DFA for each pair of admissible preperiod/period. Equality of regular languages being decidable, we compare the language accepted by this DFA and the one accepted by $\mathcal{A}$. If an agreement is found, then $X$ is ultimately periodic, otherwise it is not. As a consequence of Proposition~\ref{bound:prep}, we only focus on the admissible periods.

For an ultimately periodic set $X\subseteq\N$, we consider the prime decomposition of its period $\pi_X$. There are two types of prime factors.
\begin{enumerate}
  \item[(T1)] Those that do not simultaneously divide all the coefficients of the recurrence relation. 
  \item[(T2)] The primes dividing all the coefficients of the recurrence relation.
\end{enumerate}
Our strategy is to bound those two types of factors separately. We depart from the strategy developed in \cite{BCFR} because we have to deal with the case of what we call a zero period discussed below.

\subsection{Prime factors of the period that do not divide all the coefficients of the recurrence relation}
\label{sec:fpc}

{ If a prime factor $p$ of the candidate period for $X$ does not divide all the coefficients of the recurrence relation, we will show that, for some integer $\mu\ge1$, the periodic part of the sequence $(U_i\bmod p^\mu)_{i\ge 0}$ contains a non-zero element. This fact will provide us with an upper bound on $p$ and its exponent in the prime decomposition of the candidate period.}

\begin{definition}
We say that an ultimately periodic sequence has a \emph{zero period} (or, zero periodic part) if it has period $1$ and the repeated element is $0$. Otherwise stated, the sequence has a tail of zeroes.
\end{definition}

\begin{remark}
Let $\mu\ge 1$. Observe that if the periodic part of $(U_i\bmod p^\mu)_{i\ge 0}$ contains a non-zero element, then the same property holds for all sequences $(U_i\bmod p^{\mu'})_{i\ge 0}$ with $\mu'\ge \mu$. 

Furthermore, assume that for infinitely many $\mu$, $(U_i\bmod p^\mu)_{i\ge 0}$ has a zero period. Then from the previous paragraph, we conclude that $(U_i\bmod p^\mu)_{i\ge 0}$ has a zero period for all $\mu\ge 1$. 
\end{remark}

\begin{example} We give a sequence where only finitely many sequences modulo $p^\mu$ have a zero period. 
    Take the sequence $U_0=1$, $U_1=4$, $U_2=8$ and $U_{i+2}=U_{i+1}+U_i$ for $i\ge 1$. Then the sequence $(U_i\bmod 2^\mu)_{i\ge 0}$ has a zero period for $\mu=1,2$ because of the particular initial conditions. But it is easily checked that it has a non-zero period for all $\mu\ge 3$.
\end{example}

The next result is a special instance of \cite[Thm.~32]{BCFR} and its proof turns out to be much simpler. 

\begin{theorem}\label{the:jason}
    Let $p$ be a prime. The sequence $(U_i\bmod p^\mu)_{i\ge 0}$ has a zero period for all $\mu\ge 1$ if and only if all the coefficients $a_0,\ldots,a_{k-1}$ of the linear relation~\eqref{eq:linrec} are divisible by $p$.
\end{theorem}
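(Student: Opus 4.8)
The statement is an equivalence, so I would prove the two implications separately. The easy direction is ``$\Leftarrow$'': suppose every coefficient $a_j$ of~\eqref{eq:linrec} is divisible by $p$. Fix $\mu\ge 1$. Then for all $i\ge N$ the recurrence $U_{i+k}=a_{k-1}U_{i+k-1}+\cdots+a_0U_i$ shows $U_{i+k}\equiv 0\pmod p$, since every term on the right is a multiple of $p$. Thus $U_i\equiv 0\pmod p$ for all $i\ge N+k$; in particular the sequence $(U_i\bmod p)_{i\ge 0}$ is eventually $0$. To get the same conclusion modulo $p^\mu$ I would iterate: once all sufficiently large $U_i$ lie in $p\Z$, write $U_i=pU_i'$ for $i$ large; actually it is cleaner to argue directly that if $a_j=p b_j$ then for $i$ large $U_{i+k}=p(b_{k-1}U_{i+k-1}+\cdots+b_0U_i)$, and then substitute this identity back into itself, which forces $v_p(U_i)$ to grow without bound as a function of the position. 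Formally, a short induction shows that for every $\mu$ there is $M_\mu$ with $U_i\equiv 0\pmod{p^\mu}$ for all $i\ge M_\mu$, because after $\mu$ applications of the recurrence every term carries a factor $p^\mu$. Hence $(U_i\bmod p^\mu)_{i\ge 0}$ has a zero period for every $\mu$.

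For the forward direction ``$\Rightarrow$'', I argue by contraposition: assume that not all the $a_j$ are divisible by $p$, and produce a $\mu$ for which $(U_i\bmod p^\mu)_{i\ge 0}$ does \emph{not} have a zero period. Suppose for contradiction that for every $\mu$ the sequence $(U_i\bmod p^\mu)_{i\ge 0}$ is eventually $0$. Then for each $\mu$ pick $M_\mu$ with $p^\mu\mid U_i$ for all $i\ge M_\mu$; since $U_0=1$ we certainly have $M_\mu\to\infty$, and we may also take $M_\mu$ increasing in $\mu$. Let $j$ be the \emph{smallest} index with $p\nmid a_j$ (such $j$ exists by hypothesis), and recall $a_0\ne 0$. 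Now fix a large $\mu$ and apply the recurrence at an index $i$ with $i\ge N$ and $i+j\ge M_1$ but where $U_{i+j}$ has a \emph{controlled}, not-too-large $p$-adic valuation. The idea is: the terms $a_0U_i,\dots,a_{j-1}U_{i+j-1}$ on one side are all divisible by $p$ (either because $p\mid a_t$ for $t<j$, or—no, $a_t$ for $t<j$ are divisible by $p$ by minimality of $j$), whereas $a_jU_{i+j}$ is divisible by exactly $v_p(U_{i+j})$ times a unit. Rearranging~\eqref{eq:linrec} to isolate $a_jU_{i+j}$ and noting $p\nmid a_j$, we get
\[
v_p(U_{i+j})=v_p\!\Big(U_{i+k}-\sum_{t\ne j}a_tU_{i+t}\Big)\ge\min_{t\ne j}\big(v_p(a_t)+v_p(U_{i+t})\big),
\]
and I want to turn this into a \emph{lower bound forcing a contradiction}, or rather an \emph{upper bound} relating consecutive valuations.

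The cleanest way to organize this is to pass to the $p$-adic setting. The generating series $\sum U_iX^i$ is rational, so $U_i$ satisfies, for $i\ge N$, a linear recurrence with characteristic polynomial $Q(X)=X^k-a_{k-1}X^{k-1}-\cdots-a_0$. The hypothesis ``$p$ divides every $a_j$'' is equivalent to $Q(X)\equiv X^k\pmod p$. So the claim becomes: $(U_i\bmod p^\mu)_i$ has a zero period for all $\mu$ if and only if $Q(X)\equiv X^k\pmod p$. If $Q(X)\not\equiv X^k\pmod p$, then $Q$ has a nonzero root $\lambda$ over $\overline{\mathbb{F}_p}$ (since $Q\bmod p$ is not a power of $X$, it has a root other than $0$, in some finite extension $\mathbb{F}_{p^s}$). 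I would use this to show $(U_i\bmod p)_i$ is \emph{not} eventually $0$: if it were, then $U_i\bmod p$ would be eventually $0$, but the reduction mod $p$ of the sequence still satisfies the recurrence with characteristic polynomial $Q\bmod p$, and an eventually-zero solution of a linear recurrence whose characteristic polynomial has a nonzero root is impossible unless the zero tail is ``explained'' only by the factor $X^{\text{(power)}}$ — more precisely, if $U_i\equiv 0\pmod p$ for all $i\ge M$, then plugging into the recurrence backwards (using $a_0\not\equiv 0$ is not available here, but we can use that the recurrence lets us solve for $U_i$ in terms of later terms precisely when the \emph{constant} term $a_0$ is a unit mod $p$; when $p\mid a_0$ this backward step fails, which is exactly the case we are \emph{not} in). So I split: if $p\nmid a_0$, then from $U_{i+k}=a_{k-1}U_{i+k-1}+\cdots+a_0U_i$ we solve $U_i\equiv a_0^{-1}(U_{i+k}-\cdots)\pmod p$, and an eventually-zero tail propagates backward to force $U_i\equiv 0\pmod p$ for all $i\ge N$, contradicting $U_0=1$ once $N$-many initial terms are pinned — this needs a little care but is routine. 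If $p\mid a_0$ but $p\nmid a_j$ for some $j>0$, then $Q(X)\bmod p=X^{j'}R(X)$ with $R(0)\ne 0$ and $\deg R\ge 1$; the sequence $U_i\bmod p$ for $i$ large is then governed by the nonzero-root part $R$, which again cannot be eventually zero. In both sub-cases one reaches the desired contradiction, completing the contrapositive.

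\textbf{Main obstacle.} The subtle point — and the reason this needs an argument rather than a one-line invocation — is the mixed case where $p$ divides $a_0$ (so one cannot run the recurrence ``backwards'' mod $p$) but $p$ does not divide \emph{all} the $a_j$. Here a naive forward-propagation of ``eventually zero'' does not immediately contradict anything, and a naive backward-propagation is blocked by the non-invertibility of $a_0$ mod $p$. The right tool is to factor $Q(X)\bmod p=X^e R(X)$ with $R(0)\neq 0$ and $R$ non-constant, observe that the ``$X^e$ part'' can only contribute a zero tail of bounded length (at most $e$ terms worth, up to preperiod) while the ``$R$ part'' contributes a genuinely nonzero eventually-periodic sequence over $\mathbb{F}_p$ (a linear recurrence over a finite field is periodic, not merely ultimately periodic, once the characteristic polynomial has nonzero constant term — and a periodic nonzero sequence is never eventually zero). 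Making this decomposition and the ``zero tail of $X^e$ is bounded'' assertion precise is where the real content lies; once that is in place, $\mu=1$ already suffices and the full strength of arbitrary $\mu$ is not even needed for this direction. I would therefore present the forward direction as: $Q\not\equiv X^k\pmod p$ $\Rightarrow$ $(U_i\bmod p)_i$ is not eventually zero $\Rightarrow$ it does not have a zero period, which is the cheap special case of \cite[Thm.~32]{BCFR} advertised in the statement.
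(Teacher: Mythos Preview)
Your backward direction is fine and matches the paper. The forward direction, however, has a genuine gap: the claim that $\mu=1$ already suffices---i.e.\ that $Q\not\equiv X^k\pmod p$ forces $(U_i\bmod p)_{i\ge 0}$ not to be eventually zero---is false. In your decomposition $Q\bmod p = X^e R(X)$ with $R(0)\neq 0$, you assert that ``the $R$ part contributes a genuinely nonzero eventually-periodic sequence over $\mathbb{F}_p$'', but this is exactly what fails: the $R$-component of $(U_i\bmod p)$ is determined by the initial conditions and can perfectly well vanish, even when $k$ is the minimal order of the recurrence over~$\mathbb{Z}$. Minimality over $\mathbb{Z}$ does not descend to minimality over~$\mathbb{F}_p$.

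Concretely, take $U_0=1$, $U_1=4$ and $U_{i+2}=3U_{i+1}+2U_i$ for $i\ge 0$. This is a numeration system in the sense of Definitions~\ref{def:numsys} and~\ref{def:linsys} with $N=0$, $k=2$, $a_1=3$, $a_0=2$; one checks $k=2$ is minimal (the sequence $1,4,14,50,\ldots$ satisfies no first-order recurrence). Here $p=2$ does not divide $a_1$, and $Q\bmod 2 = X(X-1)$, so $e=1$ and $R(X)=X-1$ is non-constant. Yet $(U_i\bmod 2)_{i\ge 0}=1,0,0,0,\ldots$ is eventually zero: the $(X-1)$-component is identically~$0$. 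The theorem is still correct, of course, because $(U_i\bmod 4)_{i\ge 0}=1,0,2,2,2,\ldots$ does \emph{not} have a zero period; but you need $\mu=2$ to see this, and your argument never leaves $\mu=1$.

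The paper's proof avoids this trap by using the hypothesis ``for all~$\mu$'' in full: it is equivalent to $|U_i|_p\to 0$, which forces the rational generating series $\sum U_i x^i$ to converge on the closed unit disc of~$\mathbb{Q}_p$, hence all its poles satisfy $|\rho|_p>1$. Minimality of $k$ is used precisely to identify the poles with the roots of the reciprocal characteristic polynomial (no cancellation), after which the coefficients $a_j$, being symmetric functions of elements of $p$-adic absolute value $<1$, are all divisible by~$p$. To repair your approach you would need to lift the decomposition argument to $\mathbb{Z}/p^\mu\mathbb{Z}$ for growing~$\mu$ and track how the ``$R$-part'' eventually becomes nonzero; this can be done but is substantially more work than the $p$-adic route.
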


\begin{proof}
  {
    Let $N$ be given in Definition~\ref{def:linsys}. It is clear that if $a_0,\ldots,a_{k-1}$ are divisible by $p$, then for any choice of initial conditions $U_0,\ldots,U_{N+k-1}$, the elements $U_{N+k},\ldots,U_{N+2k-1}$ are divisible by $p$, hence the elements $U_{N+2k},\ldots,U_{N+3k-1}$ are divisible by $p^2$, and so on and so forth. Otherwise stated, for all $\mu\ge 1$ and all $i\ge N+\mu k$, $U_i$ is divisible by $p^{\mu}$.}

We turn to the converse. 
Since the sequence $(U_i)_{i\ge 0}$ {ultimately} satisfies a linear recurrence relation, the power series $$\mathsf{U}(x):=\sum_{i\ge 0}U_i\, x^i$$ is rational. By assumption, $(U_i\bmod p^\mu)_{i\ge 0}$ has a zero period for all $\mu\ge 1$. Otherwise stated, with the $p$-adic absolute value notation, $|U_i|_p\le p^{-\mu}$ for large enough $i$, i.e.\  $|U_i|_p\to 0$ as $i\to+\infty$. Recall that a series $\sum_{i\ge 0}\gamma_i$ converges in $\mathbb{Q}_p$ if and only if $\lim_{i\to+\infty}|\gamma_i|_p=0$. Hence the series $\mathsf{U}(x)$ converges in $\mathbb{Q}_p$ in the closed unit disc. Therefore, the poles $\rho_1,\ldots,\rho_r\in\mathbb{C}_p$ of $\mathsf{U}(x)$ must satisfy $|\rho_j|_p>1$ for $1\le j\le r$.

Let $P(x)=1-a_{k-1}x-\ldots -a_0x^k$ be the reciprocal polynomial of the linear recurrence relation~\eqref{eq:linrec}. By minimality of the order $k$ of the recurrence, the roots of $P$ are precisely the poles of $\mathsf{U}(x)$ with the same multiplicities. If we factor $$P(x)=(1-\delta_1x)\cdots (1-\delta_kx)$$ each of the $\delta_j$ is one of the $\frac{1}{\rho_1},\ldots,\frac{1}{\rho_r}$. For $n>0$, the coefficient of $x^n$ in $P(x)$ is an integer equal to a sum of product of elements of $p$-adic absolute value less than $1$. Since $|a+b|_p\le \max\{|a|_p,|b|_p\}$ and {$|ab|_p=|a|_p|b|_p$}, this coefficient is an integer with a $p$-adic absolute value less than $1$, i.e.\  a multiple of $p$.
\end{proof}

Thanks to Theorem~\ref{the:jason}, if $p$ is a prime not dividing all the coefficients of the recurrence relation~\eqref{eq:linrec} then there exists an integer $\lambda\ge 1$ such that the periodic part of $(U_i\bmod{p^\lambda})_{i\ge 0}$ contains a non-zero element.

\begin{prop}\label{pro:cas2a}
Assume (H1), (H2) and (H3). Let $p$ be a prime not dividing all the coefficients of the recurrence relation~\eqref{eq:linrec} and let $\lambda\ge 1$ be an integer such that the periodic part of $(U_i\bmod{p^\lambda})_{i\ge 0}$ contains a non-zero element. If $X\subseteq\mathbb{N}$ is an ultimately periodic $U$-recognizable set with period $\pi_X=p^\mu\cdot r$ where $\mu\ge\lambda$ and $r$ is not divisible by $p$, then the minimal automaton of $\rep_U(X)$ has at least $p^{\mu-\lambda+1}$ states.
\end{prop}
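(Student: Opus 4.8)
The plan is to exploit the fact that the periodic part of $(U_i \bmod p^\lambda)_{i\ge 0}$ contains a non-zero element, together with Lemma~\ref{lem:h3}, to build many integers that must be distinguished by the minimal automaton of $\rep_U(X)$. First I would fix the ultimate period $\pi_X = p^\mu r$ of $X$ and recall via Lemma~\ref{lem:per} that any two integers $i,j$ beyond the preperiod with $i \not\equiv j \pmod{\pi_X}$ are separated by some ``shift'' $r' < \pi_X$ in the sense that exactly one of $i+r'$, $j+r'$ lies in $X$. The idea is to choose a large index $i_0 \ge \max\{Z, \text{preperiod considerations}\}$ such that $U_{i_0} \not\equiv 0 \pmod{p^\lambda}$ — this is possible because the non-zero element in the periodic part of $(U_i \bmod p^\lambda)$ recurs infinitely often — and then consider the words $w_t = 1\,0^{t}\,(\text{representation of something})$, or more cleanly, the greedy representations of the integers $U_{i_0+e}$ themselves for a suitable arithmetic progression of exponents $e$.

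The key computation: since $U_{i_0}$ has $p$-adic valuation exactly $\lambda - 1$ (or at most $\lambda-1$), by rescaling we may assume $U_{i_0}$ is a $p$-adic unit times $p^{\lambda-1}$, and then the values $\{c \cdot U_{i_0} : 0 \le c < p^{\mu-\lambda+1}\}$ hit $p^{\mu-\lambda+1}$ distinct residues modulo $p^\mu$, hence modulo $\pi_X$ they occupy at least $p^{\mu-\lambda+1}$ distinct classes (the coprime factor $r$ only helps). For each such $c$, I want a greedy $U$-representation whose value is congruent to $c\,U_{i_0}$ modulo $\pi_X$ and which, crucially, can be obtained from a single fixed word by reading a distinguishing suffix — this is where Lemma~\ref{lem:h3} enters: using digits $c$ in position $i_0$ (legitimate since $c < C_U$ for $c$ small, or by spreading $c$ over several high positions and padding with zeroes via (H3)) one gets genuine greedy representations $10^{z_c} v$ that share a common prefix region, so that the states reached in the minimal automaton after reading them are pairwise distinct by Lemma~\ref{lem:per}. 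Counting these states gives the bound $p^{\mu-\lambda+1}$.

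The main obstacle I anticipate is the bookkeeping needed to realize arbitrary residues $c \bmod p^\mu$ as values of legitimate greedy representations that are ``synchronized'' enough for a pumping/distinguishing argument: one cannot simply put an arbitrarily large digit $c$ in one position, so one must distribute $c$ across a window of high-order positions, verify greediness there using Lemma~\ref{lem:h3} (which is exactly why hypotheses (H2), (H3) and the constant $Z$ were introduced), and ensure the low-order part is a fixed suffix so that Lemma~\ref{lem:per} can be applied uniformly. Handling the coprime cofactor $r$ and the preperiod of $X$ requires choosing $i_0$ large enough and possibly passing to the subsequence of indices where $U_i \bmod p^\lambda$ realizes its non-zero periodic value; these are routine but need care. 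Once the $p^{\mu-\lambda+1}$ integers with pairwise inequivalent residues modulo $\pi_X$ are constructed as suffixes appended to a common context, the Myhill--Nerode argument immediately yields that the minimal automaton has at least $p^{\mu-\lambda+1}$ states.
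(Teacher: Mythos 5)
Your high-level plan does match the paper's: pick a nonzero residue $R$ recurring in the periodic part of $(U_i\bmod p^\mu)_{i\ge 0}$, note that its additive order in $\mathbb{Z}/p^\mu\mathbb{Z}$ is at least $p^{\mu-\lambda+1}$ (this is exactly your observation that the multiples $c\cdot U_{i_0}$, $0\le c<p^{\mu-\lambda+1}$, are pairwise incongruent modulo $p^\mu$, hence modulo $\pi_X$), realize these multiples by greedy $U$-representations, and conclude by a Myhill--Nerode count using Lemma~\ref{lem:per} and Lemma~\ref{lem:h3}. However, there is a genuine gap in the concrete construction of the separating words.

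Your explicit candidates --- $10^{z_c}v$ with a single leading $1$ at a varying position, or the representations $10^{i_0+e}$ of $U_{i_0+e}$ --- contribute only one term $U_m$ in the high part, so as the position varies the high part ranges over the set $\{U_m\bmod p^\mu : m\text{ in the periodic part}\}$, and there is no a priori reason for this set to contain $p^{\mu-\lambda+1}$ distinct residues. For instance, with $U_i=2^i$, $p=3$, $\lambda=1$, the periodic part of $(2^i\bmod 3^\mu)$ takes exactly $2\cdot 3^{\mu-1}<3^\mu=p^{\mu-\lambda+1}$ distinct values, so a single-$1$ construction cannot reach the claimed bound. The paper's fix --- which your phrase ``spreading $c$ over several high positions'' gestures at but never pins down --- is to use $j$ leading ones, $w_j=10^{k_j}10^{k_{j-1}}\cdots 10^{k_1}0^{|\rep_U(\pi_X)|}$, with \emph{every} $1$ placed at an index $m$ such that $U_m\equiv R\pmod{p^\mu}$ for the \emph{same} fixed $R$; such indices recur with a fixed period, so they can be chosen all $\ge Z$ and far apart as Lemma~\ref{lem:h3} requires, giving $\val_U(w_j)\equiv jR\pmod{p^\mu}$. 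The crucial requirement that all chosen positions carry the same residue modulo $p^\mu$ is absent from your sketch, and your written formula $10^{z_c}v$ actually contradicts it. One smaller imprecision: the distinguishing quantity $r_{i,j}$ from Lemma~\ref{lem:per} depends on the pair $(i,j)$, so one does not get a single ``fixed suffix'' applied uniformly; the block $0^{|\rep_U(\pi_X)|}$ at the end of each $w_j$ serves as a buffer that is overwritten by a pair-dependent suffix $0^{|\rep_U(\pi_X)|-|\rep_U(r_{i,j})|}\rep_U(r_{i,j})$, and Myhill--Nerode only needs one separating word per pair.
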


\begin{proof} 
We will make use of the following observation. Let $n\ge 1$. In the additive group $\left(\mathbb{Z}/p^n\mathbb{Z},+\right)$, an element $a$ has order $p^s$ with $0< s\le n$ if and only if $a=p^{n-s}\cdot m$ where $m$ is not divisible by $p$.

By assumption the periodic part of $(U_i\bmod p^\lambda)_{i\ge 0}$ contains a non-zero element $R$ of order $\ord_{p^\lambda}(R)=p^{\theta}$ for some $\theta$ such that $0<\theta\le\lambda$.
Using the above observation twice, $R=p^{\lambda-\theta}\cdot m$ for some $m$ coprime with $p$, the order of $R$ modulo~$p^\mu$ is $s:=\ord_{p^\mu}(R)=p^{\mu-\lambda+\theta}$. 

 Let us define $s$ integers $k_1,\ldots,k_s\ge 0$ and thus $s$ words $w_1,\ldots,w_s\in\{0,1\}^*$ of the following form $$w_j:=10^{k_j}10^{k_{j-1}} \cdots 10^{k_1}0^{\size{\rep_U(\pi_X)}}.$$ 
Thanks to Lemma~\ref{lem:h3}, we may impose the following conditions.
\begin{itemize}
  \item First, $k_1$ is taken large enough to ensure that $\val_U(w_1)$ is larger than the preperiod of $X$. 
  
  \item Second, $k_1, \ldots,k_s$ are taken large enough to ensure that $w_j\in\rep_U(\mathbb{N})$ for all $j$. Simply choose $k_j\ge Z$ for all $j$.

  \item Third, we can choose $k_1,\ldots,k_s$ so that the $1$'s occur at indices $m$ such that $U_m\equiv R \pmod{p^\mu}$.
\end{itemize}
Observe that $\val_U(w_j)\equiv j\cdot R\pmod{p^\mu}$. Since $p^\mu$ divides $\pi_X$, the words $w_1,\ldots,w_s$ have pairwise distinct values modulo~$\pi_X$. 

Let $i,j\in\{1,\ldots,s\}$ such that $i\neq j$. By Lemma~\ref{lem:per}, we can assume that there exists $r_{i,j}<\pi_X$ such that $\val_U(w_i)+r_{i,j}\in X$ and $\val_U(w_j)+r_{i,j}\not\in X$ (the symmetric situation is handled similarly). In particular, $\size{\rep_U(r_{i,j})}\le \size{\rep_U(\pi_X)}$. Consider the two words 
$$10^{k_i}10^{k_{i-1}} \cdots 10^{k_1}x_{i,j}
\quad\text{ and }\quad
10^{k_j}10^{k_{j-1}} \cdots 10^{k_1}x_{i,j}$$
where $$x_{i,j}=0^{|\rep_U(\pi_X)|-|\rep_U(r_{i,j})|}\rep_U(r_{i,j}).$$ The first word belongs to $\rep_U(X)$ and the second does not. Consequently, the number of states of the minimal automaton of $\rep_U(X)$ is at least $s=p^{\mu-\lambda+\theta}$. The conclusion follows since $\theta\ge 1$.
\end{proof}

{
From the above proposition, we immediately get the following.

\begin{corollary}\label{cor:new}
  Assume (H1), (H2) and (H3). Let $p>\max\{|a_0|,U_N\}$ be a prime. If $X\subseteq\mathbb{N}$ is an ultimately periodic $U$-recognizable set with period $\pi_X=p^\mu\cdot r$ where $\mu\ge 1$ and $r$ is not divisible by $p$, then the minimal automaton of $\rep_U(X)$ has at least $p^{\mu}$ states.
\end{corollary}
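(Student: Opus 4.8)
The plan is to derive the statement directly from Proposition~\ref{pro:cas2a} by checking that, under the hypothesis $p>\max\{|a_0|,U_N\}$, one may take the exponent $\lambda$ occurring there to be $\lambda=1$. First, note that $0<|a_0|<p$, so $p$ does not divide $a_0$, and in particular $p$ does not divide all the coefficients of the recurrence~\eqref{eq:linrec}; thus Proposition~\ref{pro:cas2a} will apply as soon as we exhibit an admissible $\lambda$. It therefore suffices to prove that the periodic part of the sequence $(U_i\bmod p)_{i\ge 0}$ contains a non-zero element, for then $\lambda=1$ is admissible and (since $\mu\ge 1=\lambda$) Proposition~\ref{pro:cas2a} gives that the minimal automaton of $\rep_U(X)$ has at least $p^{\mu-\lambda+1}=p^\mu$ states.

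To establish that claim, I would argue that $(U_i\bmod p)_{i\ge 0}$ is \emph{purely} periodic from index $N$ on. Consider the state vectors $v_i:=(U_i,U_{i+1},\dots,U_{i+k-1})\bmod p$ for $i\ge N$; by~\eqref{eq:linrec} one has $v_{i+1}=M\,v_i$, where $M$ is the companion matrix of the recurrence, whose determinant is $\pm a_0$. Since $p\nmid a_0$, the matrix $M$ is invertible over $\mathbb{Z}/p\mathbb{Z}$, hence $v\mapsto Mv$ is a bijection of the finite set $(\mathbb{Z}/p\mathbb{Z})^k$; consequently the orbit of $v_N$ is a cycle and $(U_i\bmod p)_{i\ge N}$ is purely periodic. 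In particular the residue $U_N\bmod p$ occurs infinitely often and therefore lies in the periodic part of $(U_i\bmod p)_{i\ge 0}$. Finally, because $U$ is increasing with $U_0=1$ we have $1\le U_N<p$ by hypothesis, so $U_N\bmod p=U_N\neq 0$, which is exactly what was needed.

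Putting these two steps together completes the proof. The only non-bookkeeping point is the pure-periodicity observation, i.e.\ controlling the preperiod of $(U_i\bmod p)_{i\ge 0}$ by $N$; this is precisely where the invertibility of the recurrence modulo $p$ (equivalently $p\nmid a_0$, guaranteed by $p>|a_0|$) is used, while the hypothesis $p>U_N$ is what turns the generic exponent $\lambda$ of Theorem~\ref{the:jason} / Proposition~\ref{pro:cas2a} into the optimal value $\lambda=1$.
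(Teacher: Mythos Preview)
Your proof is correct and follows essentially the same approach as the paper: both argue that $p\nmid a_0$ forces $(U_i\bmod p)_{i\ge N}$ to be purely periodic, so that $U_N\bmod p$ lies in the periodic part and is nonzero since $1\le U_N<p$, whence Proposition~\ref{pro:cas2a} applies with $\lambda=1$. Your version spells out the pure-periodicity step via the invertibility of the companion matrix, which the paper leaves implicit.
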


\begin{proof}
The sequence $(U_i)_{i\ge 0}$ satisfies the recurrence relation~\eqref{eq:linrec} for all $i\ge N$. Since $p>|a_0|$, $p$ does not divide $a_0$ and $(U_{N+i}\bmod p)_{i\ge 0}$ is purely periodic. By assumption $p>U_N$, hence the first element of the periodic part equals $U_N$ and is non-zero modulo~$p$. We conclude that the non-zero element $U_N$ occurs infinitely often in the sequence $(U_i\bmod{p})_{i\ge 0}$. Hence we may apply Proposition~\ref{pro:cas2a} with $\lambda=1$.
  \end{proof}
}

\subsection{Prime factors of the period that divide all the coefficients of the recurrence relation}

We can factor the period $\pi_X$ as
\begin{equation}
    \label{eq:fQ}
    \pi_X=Q_X\cdot p_1^{\mu_{X,1}}\cdots p_t^{\mu_{X,t}}
\end{equation}
where every $p_j$ divides all the coefficients of the recurrence relation~\eqref{eq:linrec} and, for every prime factor $q$ of $Q_X$, at least one of the coefficients of the recurrence relation~\eqref{eq:linrec} is not divisible by $q$.
Otherwise stated, the factor $Q_X$ collects the prime factor of type (T1). { Note that the primes $p_j$ depend only on the numeration system $U$ (i.e.\ the coefficients of the recurrence) and their exponents depend on $\pi_X$ thus, on $X$.}

\begin{remark}
There is a finite number of primes dividing all the coefficients of the recurrence relation. Thus, we only have to obtain an upper bound on the corresponding exponents $\mu_{X,1},\ldots,\mu_{X,t}$ that may appear in~\eqref{eq:fQ}. 
\end{remark}

\begin{definition}
Let $j\in\{1,\ldots,t\}$ and $\mu\ge 1$. From Theorem~\ref{the:jason}, the sequence $(U_i\bmod p_j^{\mu})_{i\ge 0}$ has a zero period. We let $\mathsf{f}_{p_j}(\mu)$ denote the length of the preperiod, i.e.\  $U_{\mathsf{f}_{p_j}(\mu)-1}\not\equiv 0\pmod{ p_j^{\mu}}$ and $U_{i}\equiv 0\pmod{p_j^{\mu}}$ for all $i\ge \mathsf{f}_{p_j}(\mu)$. 
\end{definition}

\begin{example}
    Let us consider the numeration system from Example~\ref{exa:ppp}. The sequence $(U_i\bmod 2)_{i\ge 0}$ is $1,1,1,1,0^\omega$. Hence $\mathsf{f}_2(1)=4$.  The sequence $(U_i\bmod 4)_{i\ge 0}$ is $1, 3, 1, 3, 2, 0, 2, 2, 0^\omega$. Hence $\mathsf{f}_2(2)=8$. Continuing this way, we have $\mathsf{f}_2(3)=12$ and $\mathsf{f}_2(4)=16$. 
\end{example}

Note that $\mathsf{f}_{p_j}$ is non-decreasing:
$\mathsf{f}_{p_j}(\mu+1)\ge \mathsf{f}_{p_j}(\mu)$ {
  and
  \begin{equation}\label{eq:limfpj}
  \lim_{\mu\to+\infty} \mathsf{f}_{p_j}(\mu)=+\infty.    
  \end{equation}
}

{
\begin{lemma}{\cite[Lemma~6]{Honkala}}\label{lem:hon}
Let $X$ be an ultimately periodic set with period \eqref{eq:fQ}. There exists $r \in \{ 0,\ldots, Q_X {-}1 \}$ such that $X \cap (Q_X \N + r)$ is ultimately periodic of period $Q_X \cdot p_1^{\nu_1} \cdots p_t^{\nu_t}$ with $$\max\limits_{1\le j\le t} \mu_{X,j} = \max\limits_{1\le j\le t} \nu_j.$$
\end{lemma}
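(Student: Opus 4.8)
The plan is to split the characteristic sequence $\mathbf 1_X$ into the $Q_X$ subsequences indexed by the residue classes modulo $Q_X$, and to show that the exponents $\mu_{X,1},\dots,\mu_{X,t}$ can be ``read off'' one of these subsequences. Set $M\colonequal\pi_X/Q_X=p_1^{\mu_{X,1}}\cdots p_t^{\mu_{X,t}}$, which is an integer by \eqref{eq:fQ}, and for $r\in\{0,\dots,Q_X-1\}$ let $s_r\colonequal(\mathbf 1_X(Q_Xn+r))_{n\ge 0}$, an ultimately periodic sequence whose minimal period I denote by $\sigma_r$.

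First I would record two divisibility facts about the $\sigma_r$. Since $\mathbf 1_X(Q_Xn+r+Q_XM)=\mathbf 1_X(Q_Xn+r+\pi_X)=\mathbf 1_X(Q_Xn+r)$ for large $n$, the integer $M$ is a period of $s_r$, so $\sigma_r\mid M$ and hence $\sigma_r=p_1^{\nu_{r,1}}\cdots p_t^{\nu_{r,t}}$ with $0\le\nu_{r,j}\le\mu_{X,j}$. Conversely, writing $M'\colonequal\operatorname{lcm}(\sigma_0,\dots,\sigma_{Q_X-1})$, every $s_r$ has period $M'$, so for large $m=Q_Xn+r$ we get $\mathbf 1_X(m+Q_XM')=s_r(n+M')=s_r(n)=\mathbf 1_X(m)$; thus $Q_XM'$ is a period of $\mathbf 1_X$, and minimality of $\pi_X=Q_XM$ forces $M\mid M'$, whence $M'=M$. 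Comparing $p_j$-adic valuations, for each $j$ there is an index $r$ with $\nu_{r,j}=\mu_{X,j}$.

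Next I would fix $j_0$ realizing $\max_{1\le j\le t}\mu_{X,j}$, choose $r$ with $\nu_{r,j_0}=\mu_{X,j_0}$, and set $\nu_j\colonequal\nu_{r,j}$; then $\max_j\nu_j=\mu_{X,j_0}=\max_j\mu_{X,j}$ since $\nu_j\le\mu_{X,j}\le\mu_{X,j_0}$ for all $j$. When $\max_j\mu_{X,j}\ge 1$, the period $\sigma_r$ is divisible by $p_{j_0}\ge 2$, so $s_r$ has minimal period at least $2$ and is in particular not eventually $0$; the remaining case $\max_j\mu_{X,j}=0$ (so $\pi_X=Q_X$) is dealt with by simply picking any $r$ with $s_r$ not eventually $0$, which exists unless $\pi_X=1$, a trivial situation.

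Finally I would compute the minimal period of $Y\colonequal X\cap(Q_X\N+r)$ viewed as a subset of $\N$: its characteristic sequence vanishes off $Q_X\N+r$ and equals $s_r$ along that progression. If $p$ is any eventual period of $\mathbf 1_Y$, then choosing a large $m\equiv r\pmod{Q_X}$ with $m\in Y$ (possible because $s_r$ is not eventually $0$) forces $m+p\in Q_X\N+r$, hence $Q_X\mid p$; writing $p=Q_Xq$ and reading $\mathbf 1_Y(m+p)=\mathbf 1_Y(m)$ along the progression yields $s_r(n+q)=s_r(n)$ for large $n$, so $\sigma_r\mid q$ and $Q_X\sigma_r\mid p$. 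As $Q_X\sigma_r$ is visibly an eventual period of $\mathbf 1_Y$, the minimal period of $Y$ equals $Q_X\sigma_r=Q_X\,p_1^{\nu_1}\cdots p_t^{\nu_t}$, which together with $\max_j\nu_j=\max_j\mu_{X,j}$ proves the lemma. I expect this last paragraph to be the delicate point: one must argue simultaneously that a period of the restricted set must be a multiple of $Q_X$ and must absorb the internal period $\sigma_r$ of the chosen subsequence; everything else is bookkeeping with least common multiples and $p$-adic valuations, plus the harmless degenerate cases.
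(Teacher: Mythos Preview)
Your argument is correct: the decomposition of $\mathbf 1_X$ into the subsequences $s_r$, the identification $\operatorname{lcm}_r\sigma_r=M$, the choice of $r$ via a $j_0$ realizing the maximum, and the two-step verification that the minimal period of $X\cap(Q_X\N+r)$ equals $Q_X\sigma_r$ are all sound, including the handling of the degenerate cases. The paper does not supply its own proof of this lemma---it simply cites \cite[Lemma~6]{Honkala}---so your write-up actually fills in what the paper leaves as a reference.
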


}

\begin{definition}\label{de:Qpinu}
  {
    The quantity $r$ in the previous lemma is not necessarily unique. To avoid ambiguity, we always consider the smallest possible such $r$ denoted by $r_X$ and the associated exponents $\nu_{X,1}, \ldots , \nu_{X,t}$.  We therefore let $\rho_X$ denote the corresponding quantity $Q_X \cdot p_1^{\nu_{X,1}} \cdots p_t^{\nu_{X,t}}$.}
  
We let ${M_{\bm{\mu},X}}$ denote the maximum of the values $\mathsf{f}_{p_j}(\nu_{X,j})$ for $j\in\{1,\ldots,t\}$:
$${M_{\bm{\mu},X}}=\max_{1\le j\le t}\mathsf{f}_{p_j}(\nu_{X,j}).$$ 
Thus, ${M_{\bm{\mu},X}}$ is the least index such that for all $i\ge M_{\bm{\mu},X}$ and all $j\in\{1,\ldots,t\}$, $U_i\equiv 0 \pmod{p_j^{\nu_{X,j}}}$. By the Chinese remainder theorem, ${M_{\bm{\mu},X}}$ is also the least index such that for all $i\ge M_{\bm{\mu},X}$, $$U_i\equiv 0 \pmod{\frac{\rho_X}{Q_X}}.$$
{
  The reader may notice that $M_{\bm{\mu},X}$ only depends on the exponents $\bm{\mu}=(\mu_{X,1},\ldots,\mu_{X,t})$ occurring in \eqref{eq:fQ}.}
\end{definition}

{
From Lemma~\ref{lem:hon} and \eqref{eq:limfpj}, for each $j\in\{1,\ldots,t\}$, $\lim_{\mu_{X,j}\to+\infty} M_{\bm{\mu},X}=+\infty$.
}


\begin{example}
    Let us consider the numeration system from Example~\ref{exa:toy}. Here we have two prime factors $2$ and $3$ to take into account. Computations show that  $\mathsf{f}_2(1)=3$, $\mathsf{f}_2(2)=5$, $\mathsf{f}_2(3)=7$ and $\mathsf{f}_3(1)=3$, $\mathsf{f}_3(2)=6$, $\mathsf{f}_3(3)=9$. Assume that we are interested in a period $\rho_X/Q_X=72=2^3\cdot 3^2$. With the above definition, $M_{\bm{\mu},X}=\max\{\mathsf{f}_2(3),\mathsf{f}_3(2)\}=7$. One can check that $(U_i\bmod 72)_{i\ge 0}$ is $1, 13, 19, 30, 54, 48, 36, 0^\omega$. 
\end{example}

We introduce a quantity $\gamma_{Q_X}$ which only depends on the numeration system $U$ and the number $Q_X$ defined in~\eqref{eq:fQ}. Since we are only interested in decidability issues, there is no need to find a sharp estimate on this quantity.

\begin{definition}\label{def:gamma}
{Let $Q\ge 1$ be an integer.} Under (H1), for each $r\in\{0,\ldots,Q-1\}$, 
a DFA accepting the language $\rep_U(Q\, \mathbb{N}+r)$
can be effectively built (see Proposition~\ref{pro:up} or the construction in \cite[Prop.~3.1.9]{cant}). We let $\gamma_Q$ denote the maximum of the numbers of states of these DFAs for $r\in\{0,\ldots,Q-1\}$. 
\end{definition}

The crucial point in the next statement is that the most significant digit $1$ occurs for $U_{M_{\bm{\mu},X}-1}$ in a specific word. The proof makes use of the same kind of arguments built for definite languages as in \cite[Lemma~2.1]{LRRV}.

\begin{theorem}\label{the:main}
  {
Assume (H1), (H2) and (H3). Let $X\subseteq\mathbb{N}$ be an ultimately periodic $U$-recognizable set with period $\pi_X$ factored as in \eqref{eq:fQ}. 
Assume that ${M_{\bm{\mu},X}}-1 -\vert\rep_U(\rho_X-1)\vert \ge Z$, where $Z$ is the constant given in Definition~\ref{def:Z} and $M_{\bm{\mu},X}$ and $\rho_X$ are given in Definition~\ref{de:Qpinu}. Also assume that $M_{\bm{\mu},X}$ is greater than the preperiod of $(U_i \bmod{Q_X})_{i \in \N}$. Then the minimal automaton of $0^*\rep_U(X)$ has at least $\frac{\vert \rep_U ( \rho_X - 1) \vert + 1}{\gamma_{Q_X}}$ states.
}
\end{theorem}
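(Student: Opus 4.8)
The strategy is to exploit the fact that $U_{M_{\bm{\mu},X}-1}\not\equiv 0\pmod{\rho_X/Q_X}$ while $U_i\equiv 0\pmod{\rho_X/Q_X}$ for all $i\ge M_{\bm{\mu},X}$, so that inserting a single leading $1$ in position $M_{\bm{\mu},X}-1$ of a $U$-representation produces a controlled shift modulo $\rho_X$. First I would restrict attention to the arithmetic progression $Q_X\N+r_X$ furnished by Lemma~\ref{lem:hon}: the set $X\cap(Q_X\N+r_X)$ is ultimately periodic of period $\rho_X$, and (by Proposition~\ref{bound:prep} applied within this progression, or rather by the crude observation that intersecting with a progression and composing with the DFA for $\rep_U(Q_X\N+r_X)$ only multiplies the state count by $\gamma_{Q_X}$) it suffices to show that the minimal automaton of $0^*\rep_U\big(X\cap(Q_X\N+r_X)\big)$ — equivalently the one recognizing this set within the progression — has at least $\vert\rep_U(\rho_X-1)\vert+1$ states; dividing by $\gamma_{Q_X}$ then gives the bound for $X$ itself. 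Here the hypothesis that $M_{\bm{\mu},X}$ exceeds the preperiod of $(U_i\bmod Q_X)_{i\in\N}$ guarantees that the progression class modulo $Q_X$ is not disturbed by the manipulations below.

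Next I would build a family of $L+1$ words, where $L=\vert\rep_U(\rho_X-1)\vert$, that are pairwise Myhill--Nerode inequivalent for the language $0^*\rep_U(X)$ (or its restriction to the progression). For each $k\in\{0,1,\dots,L\}$, let $n_k$ be the integer whose representation is obtained by placing a leading $1$ in position $M_{\bm{\mu},X}-1$ of the all-zero word of suitable length, then appending $0^{?}$ so that the remaining low-order part has length exactly $M_{\bm{\mu},X}-1-L$; concretely one takes words of the form $10^{M_{\bm{\mu},X}-1-k}\,v$ and $10^{k}\,v$ style comparisons as in the proof of Proposition~\ref{pro:cas2a}, but now with the single distinguished $1$ forced at index $M_{\bm{\mu},X}-1$. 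The point is that $\val_U$ of the part strictly below index $M_{\bm{\mu},X}$ ranges over all residues modulo $\rho_X/Q_X$ that are realizable, while the leading $1$ at position $M_{\bm{\mu},X}-1$ contributes $U_{M_{\bm{\mu},X}-1}$, a non-zero residue modulo $\rho_X/Q_X$. Using Lemma~\ref{lem:h3}, the hypothesis $M_{\bm{\mu},X}-1-\vert\rep_U(\rho_X-1)\vert\ge Z$ is exactly what lets me legally prepend the leading $1$ at index $M_{\bm{\mu},X}-1$ on top of any greedy representation of length at most $L$ of a suffix, so all the words I construct are genuine greedy $U$-representations (up to leading zeroes, which is why I pass to $0^*\rep_U(X)$). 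I would then argue, as in \cite[Lemma~2.1]{LRRV}, that two such words with distinct low-order parts (hence distinct values modulo $\rho_X$ inside the progression $Q_X\N+r_X$) can be separated by a common suffix $x_{i,j}=0^{L-\vert\rep_U(r_{i,j})\vert}\rep_U(r_{i,j})$ with $r_{i,j}<\rho_X$ coming from Lemma~\ref{lem:per}: appended to the good word it lands in $X$, appended to the bad word it does not.

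The counting then goes as follows: the low-order suffixes of length at most $L$ realize at least $L+1$ distinct residues modulo $\rho_X$ within the progression — this is where I need $L=\vert\rep_U(\rho_X-1)\vert$ to be the relevant length, since $\rho_X-1$ is the largest value whose representation has length $L$, and adding one more available position (the forced leading $1$, contributing the fixed non-zero shift $U_{M_{\bm{\mu},X}-1}$) guarantees the full range is hit — so the minimal DFA of $0^*\rep_U\big(X\cap(Q_X\N+r_X)\big)$ has $\ge L+1$ states, and composing with the intersection by the progression (cost $\gamma_{Q_X}$) yields $\ge (L+1)/\gamma_{Q_X}$ states for $0^*\rep_U(X)$.

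\textbf{Main obstacle.} The delicate point is not the Myhill--Nerode separation — that is routine once the words are legal — but rather verifying that the $L+1$ constructed words really do have $L+1$ \emph{pairwise distinct} values modulo $\rho_X$ \emph{while all lying in the fixed class} $Q_X\N+r_X$, and that each is a bona fide (greedy, up to leading zeroes) $U$-representation. This is where the three technical hypotheses conspire: $M_{\bm{\mu},X}-1-L\ge Z$ makes Lemma~\ref{lem:h3} applicable so the leading $1$ at index $M_{\bm{\mu},X}-1$ is admissible; the definition of $M_{\bm{\mu},X}$ ensures $U_{M_{\bm{\mu},X}-1}\not\equiv 0$ and $U_i\equiv 0\pmod{\rho_X/Q_X}$ for $i\ge M_{\bm{\mu},X}$, so the "high" digit contributes a fixed non-zero shift and nothing above it matters modulo $\rho_X/Q_X$; and $M_{\bm{\mu},X}$ exceeding the preperiod of $(U_i\bmod Q_X)$ keeps everything inside the progression. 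Assembling these three facts into a clean statement that the residues hit are exactly a full interval of length $L+1$ (rather than merely "many") is the step requiring care.
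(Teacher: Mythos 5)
Your proposal correctly identifies the auxiliary machinery of the paper's proof: restricting to $X\cap(Q_X\N+r_X)$ of period $\rho_X$, paying a factor of $\gamma_{Q_X}$ in the state count, using Lemma~\ref{lem:h3} together with the hypothesis $M_{\bm{\mu},X}-1-\vert\rep_U(\rho_X-1)\vert\ge Z$ to legitimize a leading $1$ at index $M_{\bm{\mu},X}-1$, and invoking Lemma~\ref{lem:per} to produce a separating shift $s<\rho_X$. But the core counting step is wrong, and you yourself flag it as the "step requiring care."

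You want to exhibit $L+1$ pairwise Myhill--Nerode inequivalent \emph{prefixes} (with $L=\vert\rep_U(\rho_X-1)\vert$) by placing a forced $1$ at index $M_{\bm{\mu},X}-1$ and claiming that "the low-order suffixes of length at most $L$ realize at least $L+1$ distinct residues modulo $\rho_X$." This conflates two roles. In a Myhill--Nerode argument the low-order part is exactly what gets \emph{replaced} by the distinguishing suffix $x_{i,j}$; the residues that must differ pairwise are those carried by the \emph{prefixes} $w_i$, coming from digits at positions above the replaced block — this is how Proposition~\ref{pro:cas2a} works, where the $s$ ones at distinct high indices each contribute $R\pmod{p^\mu}$ and so give $s$ distinct residues. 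In the present theorem that route is closed: every $U_i$ with $i\ge M_{\bm{\mu},X}$ is $\equiv 0\pmod{\rho_X/Q_X}$, so the only high position with a nonzero contribution is $M_{\bm{\mu},X}-1$. Digits above $M_{\bm{\mu},X}-1$ yield exactly two attainable residues modulo $\rho_X/Q_X$, namely $0$ and $U_{M_{\bm{\mu},X}-1}$, so at most two pairwise inequivalent prefixes can be produced this way — not $L+1$. No amount of care in "assembling the three hypotheses" will produce a full interval of $L+1$ residues from the high part, because arithmetically those residues do not exist.

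The paper gets around this with a genuinely different counting device, the one actually behind the reference to \cite[Lemma~2.1]{LRRV}. It constructs just two words $u$ and $v$, both agreeing on a common suffix of length $M_{\bm{\mu},X}-1$ and differing only at position $M_{\bm{\mu},X}-1$ and above, chosen so that $u\in\rep_U(X\cap(Q_X\N+r_X))$ and $v\notin\rep_U(X\cap(Q_X\N+r_X))$. It then considers, on the states of the DFA $\mathcal{B}_X$, the family of equivalences $E_\ell(q,q')\iff$ all suffixes of length $\ge\ell$ are accepted from $q$ iff from $q'$, and lets $P_\ell$ be the number of $E_\ell$-classes. For each $i\in\{0,\dots,L\}$, truncating the last $i$ letters of $u$ and $v$ gives states that are \emph{not} $E_i$-equivalent (the dropped suffix $w_i$ witnesses this) but \emph{are} $E_{i+1}$-equivalent (because once the suffix length exceeds $i$, the difference $\val_U(u_ix)-\val_U(v_ix)$ is a multiple of $\rho_X$ and both land in the periodic part of $X\cap(Q_X\N+r_X)$). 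This forces $P_i>P_{i+1}$ at every step and yields a strictly decreasing chain $P_0>P_1>\cdots>P_L\ge1$, hence $P_0\ge L+1$. Your plan omits this "definite language" chain entirely, and without it there is no way to inflate two distinguishable words into $L+1$ states. That is the missing idea, not a technicality to be patched.
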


This result will provide us with an upper bound on $\mu_{X,1},\ldots,\mu_{X,t}$ (details are given in Section~\ref{ss52}). If $\max_j \mu_{X,j}=\max_j \nu_{X,j}\to\infty$, then  $\rho_X\to\infty$ and since $Q_X$ has been bounded in the first part of this paper, the number of states of the minimal automaton of $\rep_U(X)$ should increase.

\begin{proof}
  {
    We may apply Lemma~\ref{lem:h3}: if $w$ is a greedy $U$-representation, then, for all $z\geq Z$, $10^zw$ also belongs to $\rep_U(\mathbb{N})$. Let $r_X$ be the quantity given in Definition~\ref{de:Qpinu}. The set $X\cap (Q_X\N+r_X)$ has period $\rho_X$. Let $\mathcal{B}_X$ be the minimal automaton of $0^*\rep_U(X\cap (Q_X\N+r_X))$. We will provide a lower bound on the number of states of this automaton.  Let $g$ be large enough so that
\begin{itemize}
\item $g \ge Z$
\item $U_{M_{\bm{\mu},X}+g}$ is larger than the preperiod of $X\cap (Q_X\N+r_X)$
\item $g+1$ is a multiple of the period of $(U_i \bmod{Q_X})_{i\in\N}$.
\end{itemize}
Consider
\begin{align*}
n_1 &= \val_U((10^g)^{Q_X}10^{M_{\bm{\mu},X}-1}) = \sum_{i=0}^{Q_X}U_{M_{\bm{\mu},X} -1+ i(g+1)}\\
n_2 &= \val_U ( 10^{M_{\bm{\mu},X}+g} ) = U_{M_{\bm{\mu},X}+g}.
\end{align*}
Observe that $n_1$ and $n_2$ are both congruent to $U_{M_{\bm{\mu},X}-1}$ modulo $Q_X$ (we make use of the assumption that $M_{\bm{\mu},X}$ is greater than the preperiod of $(U_i \bmod{Q_X})_{i \in \N}$). However, by definition of ${M_{\bm{\mu},X}}$, $$n_1 \bmod\frac{\rho_X}{Q_X} = U_{M_{\bm{\mu},X}-1} \bmod \frac{\rho_X}{Q_X}\neq 0$$but $n_2$ is congruent to~$0$  modulo $\frac{\rho_X}{Q_X}$. Consequently, $n_1$ and $n_2$ are not congruent modulo $\rho_X$. By Lemma~\ref{lem:per} applied to the set $X\cap(Q_X\N+r_X)$, we may suppose that there exists $s<\rho_X$ such that 
$$n_1+s\in X\cap(Q_X\N+r_X)   \quad \text{and}\quad n_2+s\not\in X\cap(Q_X\N+r_X)$$ (the symmetrical situation can be treated in the same way). 
By assumption, $M_{\bm{\mu},X}-1-\vert\rep_U(s)\vert\ge M_{\bm{\mu},X}-1-\vert\rep_U(\rho_X{-}1)\vert\ge Z$.} Thanks to Lemma~\ref{lem:h3}, both words
    {
$$u=(10^g)^{Q_X} 1 0^{{M_{\bm{\mu},X}}{-}1{-}\vert\rep_U(s)\vert}\rep_U(s)$$ and
$$v=10^{g} 0 0^{{M_{\bm{\mu},X}}{-}1{-}\vert\rep_U(s)\vert}\rep_U(s)$$
}are greedy $U$-representations. For all $\ell\ge 0$, define an equivalence relation $E_\ell$ on the set of states of~$\mathcal{B}_X$: 
$$E_\ell(q,q')\Leftrightarrow (\forall x\in A_U^*)\bigl[|x|\ge \ell\Rightarrow (\delta(q,x)\in \mathcal{F}\Leftrightarrow \delta(q',x)\in \mathcal{F})\bigr]$$ where $\delta$ (resp.\ $\mathcal{F}$) is the transition function (resp.\ the set of final states) of $\mathcal{B}_X$. Let us denote the number of equivalence classes of $E_\ell$ by $P_\ell$. Clearly, $E_\ell(q,q')$ implies $E_{\ell+1}(q,q')$, and thus $P_\ell\ge P_{\ell+1}$. {One can already observe that $P_0$ is the number of states of $\mathcal{B}_X$.}

Let $i\in\{0,\ldots,\vert\rep_U(\rho_X{-}1)\vert\}$. By assumption, $\vert\rep_U(\rho_X{-}1)\vert<M_{\bm{\mu},X}$. Since $u$ and $v$ have the same suffix of length $M_{\bm{\mu},X}-1$, we can factorize these words as
$$u=u_i w_i
\quad\text{and}\quad
v=v_i w_i$$
where $|w_i|=i$. Let $q_0$ be the initial state of $\mathcal{B}_X$. By construction, $\delta(q_0,u_i w_i)\in\mathcal{F}$ whereas $\delta(q_0,v_i w_i)\notin\mathcal{F}$, hence the states $\delta(q_0,u_i)$ and $\delta(q_0,v_i)$ are not in relation with respect to $E_i$. Let us show that, for all $j> i$, they satisfy $E_{j}$. It is enough to show that
\begin{equation}
\label{eq:Ei+1}
E_{i+1}(\delta(q_0,u_i),\delta(q_0,v_i)).
\end{equation}

Figures~\ref{fig:ti} and~\ref{fig:ti2} can help the reader. Let $x$ be such that $|x|=i+t$, with $t\ge 1$. Let $p$ be the prefix of $\rep_U(s)$ of length $\size{\rep_U(s)}-i$, this prefix~$p$ being empty whenever this difference is negative. 
If we replace $w_i$ by $x$ in $u$ and $v$, we get 
{
$$u_ix=(10^g)^{Q_X}10^{M_{\bm{\mu},X}{-}1{-}\vert px\vert +t}px
\quad\text{and}\quad
v_ix=10^g00^{M_{\bm{\mu},X}{-}1{-}\vert px \vert+t}px.$$
}
Then $$\val_U(u_i x)-\val_U(v_i x)=U_{M_{\bm{\mu},X}+t{-}1} + \sum_{i=2}^{Q_X} U_{M_{\bm{\mu},X}-1 + i(g+1) + t}.$$
Since by assumption, $M_{\bm{\mu},X}$ is larger than the preperiod of $(U_i \bmod{Q_X})_{i\in\N}$, 
this quantity is congruent to $0$ modulo $Q_X$ and by definition of ${M_{\bm{\mu},X}}$, it is also congruent to $0$ modulo ${\frac{\rho_X}{Q_X}}$. Hence, $\val_U(u_i x)$ and $\val_U(v_i x)$ belong to the periodic part of $X\cap (Q_X\N+r_X)$ and they differ by a multiple of the period $\rho_X$. Therefore, $\val_U(u_i x)$ belongs to $X\cap (Q_X\N+r_X)$ if and only if $\val_U(v_i x)$ also does. 
\begin{figure}[h!t]
    \centering
\begin{tikzpicture}
\draw [dashed] (3.05,.6) -- (3.05,-2.4);
\draw [dashed] (.7,2.2) -- (.7,1.2);
\draw [dashed] (.7,.5) -- (.7,-0.5);
\draw (-1,0) rectangle (3,.6);
\node [above left] at (-1,0) {$u$:};
\draw [<->] (3.1,.8) -- (5.1,.8);
\node [above] at (4.1,.8) {$i$};
\draw [<->] (2,1.5) -- (5.1,1.5);
\node [above] at (3.55,1.5) {$\le\vert\rep_U(\rho_X{-}1)\vert$};
\draw [<->] (1,2.2) -- (5.1,2.2);
\node [above] at (3.55,2.2) {$M_{\bm{\mu},X}-1$};
\node [left] at (.95,.9) {$1$};
\node [left] at (.95,-.7) {$0$};
\draw (3.1,0) rectangle (5.1,.6);
\node [above] at (1,0) {$u_i$}; 
\node [above] at (4.1,0) {$w_i$}; 
\draw (2,-.8) rectangle (5.1,-.2);
\node [above left] at (-1,-.8) {$\rep_U(s)$:};
\node [above] at (2.5,-.8) {$p$}; 
\draw (-1,-1.6) rectangle (3,-1);
\draw (3.1,-1.6) rectangle (5.1,-1);
\node [above] at (1,-1.6) {$v_i$};
\node [above left] at (-1,-1.6) {$v$:}; 
\node [above] at (4.1,-1.6) {$w_i$};
\draw (3.1,-2.4) rectangle (6.5,-1.8);
\draw [<->] (5.2,-1.6) -- (6.5,-1.6);
\node [above] at (5.85,-1.6) {$t$};
\node [above] at (4.8,-2.4) {$x$};
\end{tikzpicture}    
    \caption{The different words (case where $i\le \size{\rep_U(s)}$).}
    \label{fig:ti}
\end{figure}

\begin{figure}[h!t]
    \centering
\begin{tikzpicture}
\draw [dashed] (3.05,.6) -- (3.05,-2.4);
\draw [dashed] (.7,2.2) -- (.7,1.2);
\draw [dashed] (.7,.5) -- (.7,-0.5);
\draw (-1,0) rectangle (3,.6);
\node [above left] at (-1,0) {$u$:};
\draw [<->] (3.1,.8) -- (5.1,.8);
\node [above] at (4.1,.8) {$i$};
\draw [<->] (2,1.5) -- (5.1,1.5);
\node [above] at (3.55,1.5) {$\le\vert\rep_U(\rho_X{-}1)\vert$};
\draw [<->] (1,2.2) -- (5.1,2.2);
\node [above] at (3.55,2.2) {$M_{\bm{\mu},X}-1$};
\node [left] at (.95,.9) {$1$};
\node [left] at (.95,-.7) {$0$};
\draw (3.1,0) rectangle (5.1,.6);
\node [above] at (1,0) {$u_i$}; 
\node [above] at (4.1,0) {$w_i$}; 
\draw (3.5,-.8) rectangle (5.1,-.2);
\node [above left] at (-1,-.8) {$\rep_U(s)$:};
\draw (-1,-1.6) rectangle (3,-1);
\draw (3.1,-1.6) rectangle (5.1,-1);
\node [above] at (1,-1.6) {$v_i$};
\node [above left] at (-1,-1.6) {$v$:}; 
\node [above] at (4.1,-1.6) {$w_i$};
\draw (3.1,-2.4) rectangle (6.5,-1.8);
\draw [<->] (5.2,-1.6) -- (6.5,-1.6);
\node [above] at (5.85,-1.6) {$t$};
\node [above] at (4.8,-2.4) {$x$};
\end{tikzpicture}    
    \caption{The different words (case where $i> \size{\rep_U(s)}$).}
    \label{fig:ti2}
\end{figure}
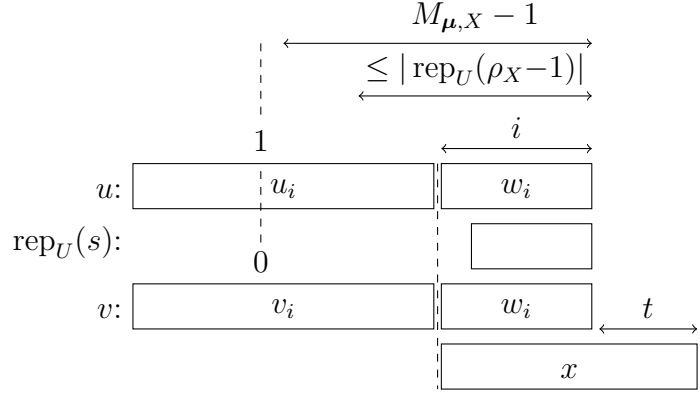
{
In order to obtain~\eqref{eq:Ei+1}, it remains to show that either both $u_ix$ and $v_ix$ are valid greedy $U$-representations or both are not. If the word $px$ is not a greedy $U$-representation then neither $u_ix$ nor $v_ix$ can be valid. Assume now that $px$ is a greedy $U$-representation. 
Note that in both situations described in Figures~\ref{fig:ti} and~\ref{fig:ti2}, $\vert px\vert \le \vert\rep_U(\rho_X{-}1)\vert+t$. Thanks to the assumption, we obtain $M_{\bm{\mu},X}-1-\vert px \vert +t\ge M_{\bm{\mu},X}-1-\vert\rep_U(\rho_X{-}1)\vert \ge Z$. The greediness of $px$ and Lemma~\ref{lem:h3} imply that 
$10^{M_{\bm{\mu},X}{-}1{-}\vert px\vert +t}px$
is a greedy $U$-representation. Since $g\ge Z$, $u_ix$ is also a greedy $U$-representation and the same observation trivially holds for $v_ix$.

We conclude that 
$$P_0 > P_1 > \cdots > P_{\vert\rep_U(\rho_X{-}1)\vert}\ge 1.$$
Since $P_0$ is the number of states of $\mathcal{B}_X$, the automaton $\mathcal{B}_X$ has at least $\vert\rep_U(\rho_X{-}1)\vert+1$ states.

Finally, let $\mathcal{A}_X$ and $\mathcal{A}_r$ be the minimal automata of $0^*\rep_U(X)$ and $0^*\rep_U(Q_X\mathbb{N}+r_X)$ respectively. The number of states of $\mathcal{A}_r$ is bounded by $\gamma_{Q_X}$. The DFA $\mathcal{B}_{X}$ is a quotient of the product automaton $\mathcal{A}_X\times \mathcal{A}_r$, hence the number of states of $\mathcal{B}_{X}$ is at most the number of states of $\mathcal{A}_X$ times $\gamma_{Q_X}$. We thus obtain that the number of states of $\mathcal{A}_X$ is at least $\frac{\vert\rep_U(\rho_X{-}1)\vert+1}{\gamma_{Q_X}}$.
}
\end{proof}


\section{Cases we can deal with}\label{sec5}

\subsection{The gcd of the coefficients of the recurrence relation is~$1$.}

In this case, for any ultimately periodic set $X$, the 
factorization of the period $\pi_X$ given in \eqref{eq:fQ} has the special form $\pi_X=Q_X$ and the addressed decision problem turns out to be decidable.

\begin{theorem}\label{thm:main1}
    Let $U$ be a linear numeration system satisfying (H1), (H2) and (H3), and such that the  gcd of the coefficients of the recurrence relation~\eqref{eq:linrec} is $1$. Given a DFA  accepting a language contained in the numeration language $\rep_U(\mathbb{N})$, it is decidable whether this DFA recognizes an ultimately periodic set.
\end{theorem}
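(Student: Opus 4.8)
The plan is to follow Honkala's scheme exactly as set up in Section~\ref{sec4}: given the input DFA $\mathcal{A}$ accepting $0^*\rep_U(X)$ for some (unknown) $X$, we assume $X$ is ultimately periodic and use the number of states of $\mathcal{A}$ to bound the admissible period $\pi_X$, and then the preperiod via Proposition~\ref{bound:prep}. Once finitely many candidate pairs $(s,p)$ of preperiod/period remain, we build, for each such pair, a DFA $\mathcal{A}_{s,p}$ accepting $0^*\rep_U(Y)$ where $Y$ is the ultimately periodic set with preperiod $s$, period $p$, and the prescribed finite pattern; this is effective by Proposition~\ref{pro:up}. We then decide whether $\mathcal{A}$ and $\mathcal{A}_{s,p}$ accept the same language (equality of regular languages is decidable). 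If agreement is found for some pair, $X$ is ultimately periodic; otherwise it is not. Thus the only thing to prove is that the number of states of $\mathcal{A}$ bounds $\pi_X$.

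Here the hypothesis that the gcd of $a_0,\ldots,a_{k-1}$ is $1$ does all the work of simplification: in the factorization \eqref{eq:fQ}, there are no primes $p_j$ of type (T2), so $\pi_X = Q_X$ and every prime factor of $\pi_X$ is of type (T1), i.e.\ does not divide all the coefficients of the recurrence. The second step is therefore to bound each such prime power. Fix a prime $p$ dividing $\pi_X$; write $\pi_X = p^{\mu}\cdot r$ with $p\nmid r$. Since $p$ does not divide all the coefficients, Theorem~\ref{the:jason} tells us that $(U_i\bmod p^{\mu})_{i\ge 0}$ does \emph{not} have a zero period for all $\mu$; hence there is a least $\lambda_p\ge 1$ (depending only on $U$ and $p$) such that the periodic part of $(U_i\bmod p^{\lambda_p})_{i\ge 0}$ contains a non-zero element. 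For the finitely many ``small'' primes $p\le \max\{|a_0|,U_N\}$, the value $\lambda_p$ is a constant depending only on $U$, computable by inspecting the eventually periodic sequence $(U_i\bmod p^{\mu})_{i\ge 0}$. For the ``large'' primes $p>\max\{|a_0|,U_N\}$ we invoke Corollary~\ref{cor:new} directly with $\lambda = 1$.

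The third step combines these bounds with Proposition~\ref{pro:cas2a}. If $\mathcal{A}$ has $M$ states, then its minimal automaton has at most $M$ states; hence minimizing $0^*\rep_U(X)$ gives a DFA with at most $M$ states. For each prime $p$ dividing $\pi_X$ with exponent $\mu\ge \lambda_p$, Proposition~\ref{pro:cas2a} forces $p^{\mu-\lambda_p+1}\le M$, i.e.\ $\mu \le \lambda_p - 1 + \log_p M$; and trivially $\mu<\lambda_p$ is bounded as well. So each prime power $p^{\mu}$ appearing in $\pi_X$ is at most $p^{\lambda_p - 1}\cdot M$. Moreover only primes $p\le M$ can divide $\pi_X$ with a positive exponent bound coming from a $p^{\mu-\lambda_p+1}\le M$ inequality once $\mu\ge\lambda_p$, while for $\mu<\lambda_p$ we use that $\lambda_p$ is bounded: for large primes $\lambda_p=1$ so $\mu=0$, and for the finitely many small primes $p$, $p^{\lambda_p}$ is an absolute constant. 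Putting this together, $\pi_X$ is bounded by a computable function of $M$ and of the numeration system alone: $\pi_X \le \bigl(\prod_{p\le M \text{ or } p\le\max\{|a_0|,U_N\}} p^{\lambda_p - 1 + \lfloor\log_p M\rfloor}\bigr)$, a finite effectively computable product. (One should phrase it as: for each prime $p$, the exponent of $p$ in $\pi_X$ is $<\lambda_p + \log_p M$, and $\lambda_p=1$ for all but finitely many effectively determined $p$; hence $\pi_X$ divides a computable integer $B=B(U,M)$.)

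The main obstacle is entirely bookkeeping rather than mathematical depth: one must be careful to separate the finitely many primes $p\le\max\{|a_0|,U_N\}$ — for which $\lambda_p$ must be computed from the eventually periodic sequence $(U_i\bmod p^{\mu})_{i\ge 0}$ and is an $a$~$priori$ quantity depending only on $U$ — from the large primes where Corollary~\ref{cor:new} applies with $\lambda=1$, and to note that only finitely many primes can have a nonzero exponent in $\pi_X$ so the bound $B$ is a genuine finite product. Once $B$ is in hand, $\pi_X\le B$; by Proposition~\ref{bound:prep} the preperiod is bounded by a computable constant depending on $M$ and $\pi_X\le B$; this leaves finitely many pairs $(s,p)$ to test, each handled by Proposition~\ref{pro:up} and the decidability of regular-language equality, completing the decision procedure.
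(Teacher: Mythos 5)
Your proposal is correct and follows essentially the same route as the paper: the same split into primes at most $\max\{|a_0|,U_N\}$ (handled via Theorem~\ref{the:jason} and Proposition~\ref{pro:cas2a} with a computed $\lambda_p$) versus larger primes (handled via Corollary~\ref{cor:new} with $\lambda=1$), yielding a computable bound on $\pi_X$, followed by Proposition~\ref{bound:prep} and the Honkala enumeration/equality-of-regular-languages test. No gaps; this is the paper's argument.
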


\begin{proof} Let $\mathcal{A}$ be a DFA accepting a language contained in the numeration language. Let $X$ be the set of integers recognized by $\mathcal{A}$.
  
  Assume that $X$ is an ultimately periodic set with period $\pi_X$. 
Let $p$ be a prime that divides $\pi_X$. {Either $p\le \max\{|a_0|,U_N\}$ or $p> \max\{|a_0|,U_N\}$.

In the former case, there is only a finite number of such primes. By assumption, $p$ does not divide all the coefficients of the recurrence relation. Then thanks to Theorem~\ref{the:jason}, there exists $\lambda\ge 1$ such that the periodic part of the sequence $(U_i\bmod{p^\lambda})_{i\ge 0}$ contains a non-zero element. By an exhaustive search, one can determine the value of the least such $\lambda$: one finds the period of a sequence $(U_i\bmod{p^\lambda})_{i\ge 0}$ as soon as two $k$-tuples $(U_i \bmod{p^\lambda},\ldots,U_{i+k-1} \bmod{p^\lambda})$ are identical (where $k$ is the order of the recurrence). 
We then apply Proposition~\ref{pro:cas2a}. For any $\mu\ge 1$, if $p^\mu$ divides $\pi_X$ then either $\mu<\lambda$ or $p^{\mu-\lambda+1}$ is bounded by the number $S$ of states of $\mathcal{A}$. So we have bounded the exponent $\mu$ of those primes that may occur in $\pi_X$  by $\max\{\lambda,\log_p(S)+\lambda-1\}$.

In the latter case, thanks to Corollary~\ref{cor:new}, for any $\mu\ge 1$, if $p^\mu$ divides $\pi_X$ then $p^\mu$ is bounded by the number of states of $\mathcal{A}$. }

The previous discussion provides us with an upper bound on $\pi_X$, i.e.\ on the admissible periods for $X$. Then from Proposition~\ref{bound:prep}, associated with each admissible period, there is a computable bound for the corresponding admissible preperiods for $X$. We conclude that there is a finite number of pairs of candidates for the preperiod and period of $X$.  Similar to Honkala's scheme, we therefore have a decision procedure by enumerating a finite number of candidates. For each pair $(a,b)$ of possible preperiods and periods, there are $2^a2^b$ corresponding ultimately periodic sets $X$. For each such candidate $X$, we build a DFA accepting $\rep_U(X)$ and compare it with $\mathcal{A}$. We can conclude since equality of regular languages is decidable. 
\end{proof}

There exist recurrence relations {satisfying the assumptions of the above theorem} but that were not handled in \cite{BCFR}. Take \cite[Example~35]{BCFR} 
$$U_{i+5} = 6 U_{i +4} + 3U_{i+3} - U_{i+2} + 6U_{i+1} +  3U_{i},\ \forall i\ge 0.$$ 
For this recurrence relation, $\mathcal{N}_U(3^i)\not\to\infty$. The characteristic polynomial has the dominant root $3+2\sqrt{3}$ and it also has three roots of modulus $1$. Therefore, no decision procedure was known. 
But thanks to Theorem~\ref{thm:main1}, we can handle such new cases under our mild assumptions (H1), (H2) and (H3). Indeed, by applying Bertrand's theorem with the initial conditions $1,7,45,291,1881$, the numeration language $0^*\rep_U(\mathbb{N})$ is the set of words over $\{0,1,\ldots,6\}$ avoiding the factors $63,64,65,66$, hence (H1) holds. Moreover, it is easily checked that for all $i\ge 0$, $U_{i+1}-U_i\ge 5 U_i$. Therefore, the system $U$ also satisfies (H2) and (H3).

\subsection{The gcd of the coefficients of the recurrence relation is larger than 1.}\label{ss52} 
If $X$ is an ultimately periodic set with period $\pi_X=Q_X\cdot p_1^{\mu_{X,1}}\cdots p_t^{\mu_{X,t}}$ with $t\ge 1$ as in~\eqref{eq:fQ}, then the quantity $M_{\bm{\mu},X}$ is well defined. Theorem~\ref{the:main} has a major assumption. 
{
 The quantity\index{$n_X$}
\[
n_X=M_{\bm{\mu},X}- 1 - \left\vert\rep_U\left(\rho_X-1\right)\right\vert
\]
should be larger than some positive constant $Z$, which only depends on the numeration system $U$.
}
\begin{theorem}
\label{thm:main2} 
Let $U$ be a linear numeration system satisfying (H1), (H2) and (H3), and such that the gcd of the coefficients of the recurrence relation~\eqref{eq:linrec} is larger than 1.
{
  Let $Z$ be the constant given in Definition~\ref{def:Z}. }
Assume there exists a computable positive integer $D$ such that for all ultimately periodic sets $X$ of period $\pi_X=Q_X\cdot p_1^{\mu_{X,1}}\cdots p_t^{\mu_{X,t}}$ as in~\eqref{eq:fQ} with $t\ge 1$, if $\max(\mu_{X,1},\ldots,\mu_{X,t})\ge D$ then $n_X\ge Z$. Then, given a DFA accepting a language contained in the numeration language $\rep_U(\mathbb{N})$, it is decidable whether this DFA recognizes an ultimately periodic set.
\end{theorem}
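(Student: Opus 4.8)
The plan is to mirror Honkala's scheme exactly as done in the gcd-one case (Theorem~\ref{thm:main1}), with the new ingredient being that the decidability hypothesis on the computable constant $D$ lets us bound the ``bad'' exponents $\mu_{X,1},\ldots,\mu_{X,t}$ attached to primes of type (T2). Concretely, suppose $\mathcal{A}$ is the given DFA, with $S$ states, and let $X$ be the set it recognizes; assume $X$ is ultimately periodic with period $\pi_X=Q_X\cdot p_1^{\mu_{X,1}}\cdots p_t^{\mu_{X,t}}$ as in~\eqref{eq:fQ}. First I would bound $Q_X$, i.e.\ the part carrying the type-(T1) primes, in precisely the same way as in the proof of Theorem~\ref{thm:main1}: each prime factor $p\le\max\{|a_0|,U_N\}$ of $Q_X$ has its exponent bounded via Proposition~\ref{pro:cas2a} (after computing the least $\lambda$ with a non-zero element in the periodic part of $(U_i\bmod p^\lambda)_{i\ge0}$) by $\max\{\lambda,\log_p(S)+\lambda-1\}$, and each prime factor $p>\max\{|a_0|,U_N\}$ has $p^\mu\le S$ by Corollary~\ref{cor:new}. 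Since the primes of type (T2) are finite in number and depend only on $U$, this gives a computable bound $B$ on $Q_X$.

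Next I would bound the type-(T2) exponents. There are only finitely many primes $p_1,\ldots,p_t$ dividing all the coefficients, so it suffices to bound $\max(\mu_{X,1},\ldots,\mu_{X,t})$. Here the dichotomy is: either $\max(\mu_{X,1},\ldots,\mu_{X,t})<D$, in which case nothing more is needed, or $\max(\mu_{X,1},\ldots,\mu_{X,t})\ge D$, in which case the hypothesis of the theorem guarantees $n_X\ge Z$, so Theorem~\ref{the:main} applies (its remaining hypothesis, that $M_{\bm{\mu},X}$ exceeds the preperiod of $(U_i\bmod Q_X)_{i\ge0}$, holds for all large enough exponents because $M_{\bm{\mu},X}\to\infty$ as any $\mu_{X,j}\to\infty$, while the preperiod of $(U_i\bmod Q_X)_{i\ge0}$ is bounded uniformly over the finitely many $Q_X\le B$). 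Theorem~\ref{the:main} then forces the minimal automaton of $0^*\rep_U(X)$ to have at least $\frac{|\rep_U(\rho_X-1)|+1}{\gamma_{Q_X}}$ states. Since $\gamma_{Q_X}\le\max_{Q\le B}\gamma_Q$ is bounded and, by Lemma~\ref{lem:length}, $|\rep_U(\rho_X-1)|$ grows (roughly like $u\log_\beta\rho_X$, using $\beta>1$) as $\rho_X\to\infty$, and $\rho_X\ge p_j^{\nu_{X,j}}$ with $\max_j\nu_{X,j}=\max_j\mu_{X,j}\to\infty$, the state count of $\mathcal{A}$ (which is at least that of the minimal automaton of $\rep_U(X)$, up to the standard adjustment between $\rep_U(X)$ and $0^*\rep_U(X)$) would exceed $S$ unless $\max_j\mu_{X,j}$ is bounded by a computable constant depending only on $S$ and $U$.

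Combining the two bounds, $\pi_X$ itself is bounded by a computable constant depending only on $S$ and $U$. Then Proposition~\ref{bound:prep} supplies, for each admissible period, a computable bound on the admissible preperiods, so there are finitely many candidate pairs $(a,b)$ of preperiod and period, hence finitely many candidate ultimately periodic sets $X$. For each, one builds a DFA for $\rep_U(X)$ via Proposition~\ref{pro:up} and tests language equality with $\mathcal{A}$; if some candidate agrees, $X$ is ultimately periodic, otherwise it is not. This is a decision procedure.

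The main obstacle is purely bookkeeping rather than conceptual: one must check carefully that in the case $\max_j\mu_{X,j}\ge D$ the two ``side hypotheses'' of Theorem~\ref{the:main} are automatically satisfied (that $M_{\bm{\mu},X}$ beats the preperiod of $(U_i\bmod Q_X)_{i\ge0}$, using $M_{\bm{\mu},X}\to\infty$ from Lemma~\ref{lem:hon} and~\eqref{eq:limfpj} together with the finiteness of the possible $Q_X$), and that the inequality ``state count $\ge\frac{|\rep_U(\rho_X-1)|+1}{\gamma_{Q_X}}$'' genuinely blows up with $\max_j\mu_{X,j}$ — for which one invokes Lemma~\ref{lem:length} to see $|\rep_U(\rho_X-1)|\to\infty$, noting that $\beta>1$ holds since $|a_0|\ge1$ forces the dominant root to exceed $1$. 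One also has to be slightly careful to handle the degenerate situation $D=1$ (where the first case of the dichotomy is vacuous) and the transition between $\rep_U(X)$ and $0^*\rep_U(X)$, but these introduce only additive constants and do not affect decidability.
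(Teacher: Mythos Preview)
Your proposal is correct and follows essentially the same approach as the paper's proof: bound $Q_X$ exactly as in Theorem~\ref{thm:main1}, then use the hypothesis on $D$ together with Theorem~\ref{the:main} to bound the type-(T2) exponents, and finish with Honkala's enumeration scheme. The only cosmetic difference is that the paper makes explicit what you leave as ``bookkeeping'': it introduces a second computable threshold $D'$ (ensuring $M_{\bm{\mu},X}$ exceeds the maximal preperiod of $(U_i\bmod b)_{i\ge0}$ over $b\le B$) and runs the dichotomy at $E=\max(D,D')$ rather than at $D$, so that both hypotheses of Theorem~\ref{the:main} hold simultaneously in the second branch; your parenthetical remarks show you see this adjustment is needed, and it is indeed straightforward.
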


\begin{proof} Let $\mathcal{A}$ be a DFA accepting a language contained in the numeration language. Let $X$ be the set of integers recognized by $\mathcal{A}$.
  
    Assume that $X$ is an ultimately periodic set with period $\pi_X=Q_X\cdot p_1^{\mu_{X,1}}\cdots p_t^{\mu_{X,t}}$ as in~\eqref{eq:fQ}. Note that there are only finitely many primes dividing all the coefficients of the recurrence relation~\eqref{eq:linrec}, hence the possible $p_1,\ldots,p_t$ belong to a finite set depending only on the numeration system $U$.
    
    Applying the same reasoning as in the proof of Theorem~\ref{thm:main1}, $Q_X$ is bounded by a constant $B$ deduced from $\mathcal{A}$. So the quantity~$\gamma_{Q_X}$ introduced in Definition~\ref{def:gamma} is also bounded. 

    {
Compute the greatest preperiod $P$ of the sequences $(U_i \bmod{b})_{i \in \N}$, for $b \in \{1,\ldots, B \}$. Then by definition of $M_{\bm{\mu},X}$, there exists a computable constant $D^\prime$ such that if $\max (\mu_{X,1}, \cdots , \mu_{X,t}) \geq D^\prime$, then $M_{\bm{\mu},X}$ is greater than $P$.

By hypothesis, there is a computable positive integer constant $D$ such that if $\max(\mu_{X,1}, \cdots,\mu_{X,t})\ge D$ then $n_X\ge Z$. Let $E = \max (D, D^{\prime})$. The number of $t$-uples $(\mu_{X,1}, \cdots,\mu_{X,t})$ in $ \{0,\ldots,E{-}1\}^t$ is finite. Hence there is a finite number of periods $\pi_X$ of the form $Q_X\cdot p_1^{\mu_{X,1}}\cdots p_t^{\mu_{X,t}}$ with $Q_X$ bounded by $B$ and $(\mu_{X,1}, \cdots,\mu_{X,t})$ in this set. We can enumerate them and proceed as in the last paragraph of the proof of Theorem~\ref{thm:main1}.

We may now assume that $\max(\mu_{X,1}, \cdots,\mu_{X,t})\ge E$. In this case, $n_X\ge Z$. Moreover, $M_{\bm{\mu},X}$ is greater than $P$. We are thus able to apply Theorem~\ref{the:main}\footnote{Considering leading zeroes or not does not change the reasoning.}: it provides a bound on $\rho_X$ and thus on the possible exponents $\mu_{X,1},\ldots,\mu_{X,t}$ depending only on $\mathcal{A}$. We conclude in the same way as in the proof of Theorem~\ref{thm:main1}.
      }
\end{proof}

In the last part of this section, we present a possible way to tackle new examples of numeration systems by applying Theorem~\ref{thm:main2}. We stress the fact that when $\pi_X$ is increasing then potentially both terms $M_{\bm{\mu},X}$ and $\size{\rep_U({\rho_X}-1)}$ are increasing. If $\beta>1$ (see Definition~\ref{def:uT}), then the growth of $\size{\rep_U({\rho_X}-1)}$ has a logarithmic bound thanks to Lemma~\ref{lem:length}, so we need insight on $\mathsf{f}_{p_j}(\mu)$ to be able to guarantee $n_X\ge Z$. { In the next few pages we therefore try to obtain conditions allowing us to apply the decision procedure of Theorem~\ref{thm:main2} and, facing non-trivial number theoretic problems, we discuss how far it is possible to go.}

The \emph{$p$-adic valuation} of an integer $n$, denoted $\nu_p(n)$, is the exponent of the highest power of $p$ dividing $n$. There is a clear link between $\nu_{p_j}$ and $\mathsf{f}_{p_j}$: for all non-negative integers $\mu$ and $N$,
$$\mathsf{f}_{p_j}(\mu)=N \iff 
(\nu_{p_j}(U_{N-1})<\mu\ \wedge\ \forall i\ge N,\, \nu_{p_j}(U_i)\ge \mu).$$

\begin{remark}
With our Example~\ref{exa:toy} and initial conditions $1,2,3$, computing the first few values of $\nu_2(U_i)$, as shown in Figure~\ref{fig: third-order p=2}, might suggest that it is bounded by a function of the form $\frac{i}{2}+c$, for some constant $c$.
\begin{figure}
	\includegraphics[width=.5\textwidth]{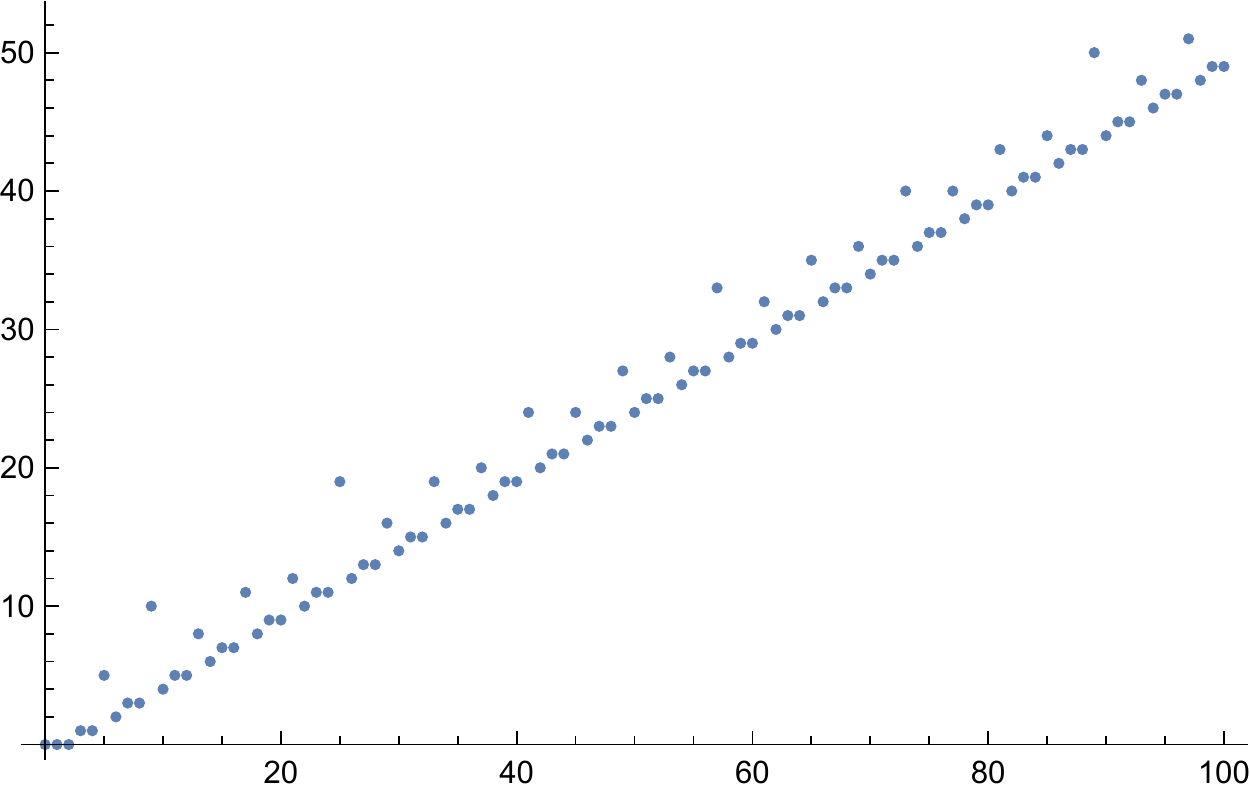}
	\caption{Plot of the $2$-adic valuation of the sequence in Example~\ref{exa:toy}.}
	\label{fig: third-order p=2}
\end{figure}
Nevertheless, computing more terms we get the following pairs $(i,\nu_2(U_i))$: $(67, 44)$, $(2115, 1070)$, $(10307, 5172)$, $(534595, 267318)$, $(2631747, 1315896)$. The constant $c$ suggested by each of these points is respectively $\frac{21}{2}$, $\frac{25}{2}$, $\frac{37}{2}$, $\frac{41}{2}$, $\frac{45}{2}$, which is increasing. This example explains the second term $g(i)$ in the function bounding $\nu_{p_j}(U_i)$ in the next statement. 
\end{remark}

In the next statement, the reader can think about logarithm function instead of a general function $g$. Indeed, for any $\epsilon>0$, for large enough~$i$, $\log(i)< \epsilon\, i$. We also keep context and notation from \eqref{eq:fQ}.

\begin{lemma}\label{lem:valp}
Let $j\in\{1,\ldots,t\}$ and let $\beta$ as in Definition~\ref{def:uT}. Assume that $\beta>1$ and that there exist $\alpha,\epsilon\in\mathbb{R}_{>0}$ and a non-decreasing function $g$ such that
$$\nu_{p_j}(U_i)<\lfloor \alpha i\rfloor +g(i)$$
and there exists $N$ such that $g(i)<\epsilon\, i$ for all $i>N$. Then, for large enough $\mu$, 
$$\mathsf{f}_{p_j}(\mu)> \frac{\mu}{\alpha+\epsilon}.$$
\end{lemma}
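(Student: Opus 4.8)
\textbf{Proof plan for Lemma~\ref{lem:valp}.}
The plan is to unwind the definition connecting $\mathsf{f}_{p_j}$ to $\nu_{p_j}$ and to exploit the hypothesized upper bound on $\nu_{p_j}(U_i)$ to force $\mathsf{f}_{p_j}(\mu)$ to be large. Recall from the displayed equivalence just before the statement that $\mathsf{f}_{p_j}(\mu)=N$ precisely when $\nu_{p_j}(U_{N-1})<\mu$ and $\nu_{p_j}(U_i)\ge\mu$ for all $i\ge N$. In particular, at the index $i=\mathsf{f}_{p_j}(\mu)$ itself we must have $\nu_{p_j}(U_{\mathsf{f}_{p_j}(\mu)})\ge\mu$. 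Write $N'=\mathsf{f}_{p_j}(\mu)$ for brevity. Combining this lower bound with the hypothesis $\nu_{p_j}(U_{N'})<\lfloor \alpha N'\rfloor+g(N')$ gives $\mu<\lfloor\alpha N'\rfloor+g(N')\le \alpha N'+g(N')$.

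Next I would invoke the asymptotic condition on $g$. Since $\lim_{\mu\to+\infty}\mathsf{f}_{p_j}(\mu)=+\infty$ (this is \eqref{eq:limfpj}, following from the fact that $(U_i\bmod p_j^\mu)_{i\ge 0}$ has a zero period for every $\mu$ by Theorem~\ref{the:jason}), for all sufficiently large $\mu$ we have $N'=\mathsf{f}_{p_j}(\mu)>N$, where $N$ is the constant from the hypothesis with $g(i)<\epsilon i$ for $i>N$. For such $\mu$ we therefore obtain
$$\mu<\alpha N'+g(N')<\alpha N'+\epsilon N'=(\alpha+\epsilon)\,N'=(\alpha+\epsilon)\,\mathsf{f}_{p_j}(\mu),$$
and dividing by the positive quantity $\alpha+\epsilon$ yields $\mathsf{f}_{p_j}(\mu)>\frac{\mu}{\alpha+\epsilon}$, as desired.

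This argument is essentially a direct chain of inequalities, so there is no serious obstacle; the only point requiring mild care is making sure the threshold ``large enough $\mu$'' is chosen so that $\mathsf{f}_{p_j}(\mu)$ exceeds both $N$ and is large enough for the $g(i)<\epsilon i$ bound to apply, which is exactly what \eqref{eq:limfpj} provides. (Note also that the hypothesis $\beta>1$ is not strictly needed for this particular inequality; it is included because the lemma is applied in a context where Lemma~\ref{lem:length} is simultaneously invoked, and it guarantees that the relevant quantities are genuinely growing.)
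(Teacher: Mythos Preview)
Your proof is correct and in fact more direct than the paper's. The paper proceeds by first sandwiching $\mu$ between consecutive values of the auxiliary function $i\mapsto\lfloor\alpha i\rfloor+g(i)$, obtaining an index $i$ with $\lfloor\alpha i\rfloor+g(i)\le\mu<\lfloor\alpha(i+1)\rfloor+g(i+1)$; from the left inequality it deduces $\mu\ge\nu_{p_j}(U_i)+1$, hence $\mathsf{f}_{p_j}(\mu)\ge\mathsf{f}_{p_j}(\nu_{p_j}(U_i)+1)\ge i+1$ via the monotonicity of $\mathsf{f}_{p_j}$, and from the right inequality it gets $i+1>\frac{\mu}{\alpha+\epsilon}$. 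You instead go straight to $N'=\mathsf{f}_{p_j}(\mu)$ and use the defining property $\nu_{p_j}(U_{N'})\ge\mu$ to obtain the same conclusion in one stroke. Your route avoids both the sandwich construction and the appeal to monotonicity of $\mathsf{f}_{p_j}$; it also happens not to use the hypothesis that $g$ is non-decreasing, which the paper's sandwich implicitly relies on. The paper's approach has the minor conceptual advantage of making the contrapositive relationship $\mathsf{f}_{p_j}(\nu_{p_j}(U_i)+1)\ge i+1$ explicit, but yours is the cleaner argument.
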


\begin{proof}
By definition of the $p$-adic valuation, $p_j^{\nu_{p_j}(U_i)} \mid U_i$ and $p_j^{\nu_{p_j}(U_i)+1}\nmid U_i$. Thus, 
by definition of $\mathsf{f}_{p_j}$, for all $i$,  
$$\mathsf{f}_{p_j}(\nu_{p_j}(U_i)+1)\ge i+1.$$ 
For all $\mu$, there exists $i$ such that 
$$\lfloor \alpha i\rfloor+g(i) \le \mu < \lfloor \alpha (i+1)\rfloor+g(i+1).$$
Take $\mu$ large enough so that $i\ge N$. 
Using the right-hand side inequality, $\mu< \alpha(i+1)+\epsilon(i+1)$ and we get 
$$i> \frac{\mu}{\alpha+\epsilon}-1.$$
Using the left-hand side inequality,  $\mu\ge \lfloor \alpha i\rfloor+g(i) > \nu_{p_j}(U_i)$. Since we have integers on both sides, $\mu\ge \nu_{p_j}(U_i)+1$. 
Since $\mathsf{f}_{p_j}$ is non-decreasing, for all large enough $\mu$, 
\[\mathsf{f}_{p_j}(\mu) \ge \mathsf{f}_{p_j}(\nu_{p_j}(U_i)+1)\ge i+1> \frac{\mu}{\alpha+\epsilon}. \qedhere\]
\end{proof}

We look for a lower bound for $n_X$. 
Suppose that for each $j\in\{1,\ldots,t\}$, there exists $\alpha_j,\epsilon_j,g_j$ and $N_j$ as in the above lemma. {Then
$${M_{\bm{\mu},X}}=\max_j \mathsf{f}_{p_j}(\nu_{X,j}) 
>\max_j \left(\frac{\nu_{X,j}}{\alpha_j+\epsilon_j}\right) 
\ge \frac{\max_j \nu_{X,j}}{\max_j (\alpha_j+\epsilon_j)}.$$
}Second, let $u
$ and $\beta$ as in Definition~\ref{def:uT}. By hypothesis, $\beta>1$. Applying Lemma~\ref{lem:length}, there exists a constant $K$ such that
$$\size{\rep_U(\frac{\pi_X}{Q_X}-1)}\le u \log_{\beta} \left(\prod_j p_j^{\mu_{X,j}}\right)+K.$$
The right hand side is
$$u \sum_j \mu_{X,j} \log_{\beta} (p_j) +K \le u (\max_j \mu_{X,j}) \sum_j \log_{\beta} p_j +K.$$
Recall that  $\max_j \nu_{X,j}=\max_j \mu_{X,j}$ (see Lemma~\ref{lem:hon}). Consequently, 
$$n_X 
\ge \max_j \mu_{X,j} \left( \frac{1}{\max_j (\alpha_j+\epsilon_j)} - u \sum_j \log_{\beta} p_j\right) -K-1.$$

If $\pi_X$ tends to infinity (and assuming that the corresponding factor~$Q_X$ remains bounded as explained in the proof of Theorem~\ref{thm:main2}), then $\max_j \mu_{X,j}$ must also tend to infinity. So we are able to conclude, i.e.\  $n_X$ tends to infinity and in particular, $n_X$ will {
  become larger than~$Z$} (the constant from Definition~\ref{def:Z}) whenever
\begin{equation}
    \label{eq:test}
    \frac{1}{\max_j (\alpha_j+\epsilon_j)} > u \sum_j \log_{\beta} p_j.
\end{equation}

Actually, we don't need $n_X$ tending to infinity, we have the weaker requirement $n_X\ge Z$. The constant $D$ from Theorem~\ref{thm:main2} can be obtained as follows. To ensure that $n_X\ge Z$, it is enough to have 
\begin{equation}
    \label{eq:test2}
\max_j \mu_{X,j} \ge \frac{Z+K+1}{\frac{1}{\max_j (\alpha_j+\epsilon_j)} - u \sum_j \log_{\beta} p_j}
\end{equation}
and the right hand side only depends on the numeration system $U$.

As a conclusion, we simply define the constant $D$ as the right hand side in \eqref{eq:test2} and, under the assumption of Lemma~\ref{lem:valp} about the behavior of the $p_j$-adic valuations of $(U_i)_{i\ge 0}$, the decision procedure of Theorem~\ref{thm:main2} may thus be applied. From a practical point of view, even though $n_X$ tending to infinity is not required, trying to make a conjecture on \eqref{eq:test} is relatively easy as seen in the following remark. This is not a formal proof, simply rough computations suggesting what could be the value of $\alpha$ in Lemma~\ref{lem:valp}.

\begin{remark}
    One can first make some computational experiments. Take the numeration system of Example~\ref{exa:ppp}. If we compute $\nu_2(U_i)$, the values for $41\le i\le 60 $ are given by 
$$ 10, 10, 10, 11, 12, 11, 11, 12, 12, 12, 12, 13, 16, 13, 13, 14, 14, 14, 14, 15.$$
This sequence is plotted in Figure~\ref{fig: fourth-order p=2}.
\begin{figure}
	\includegraphics[width=.5\textwidth]{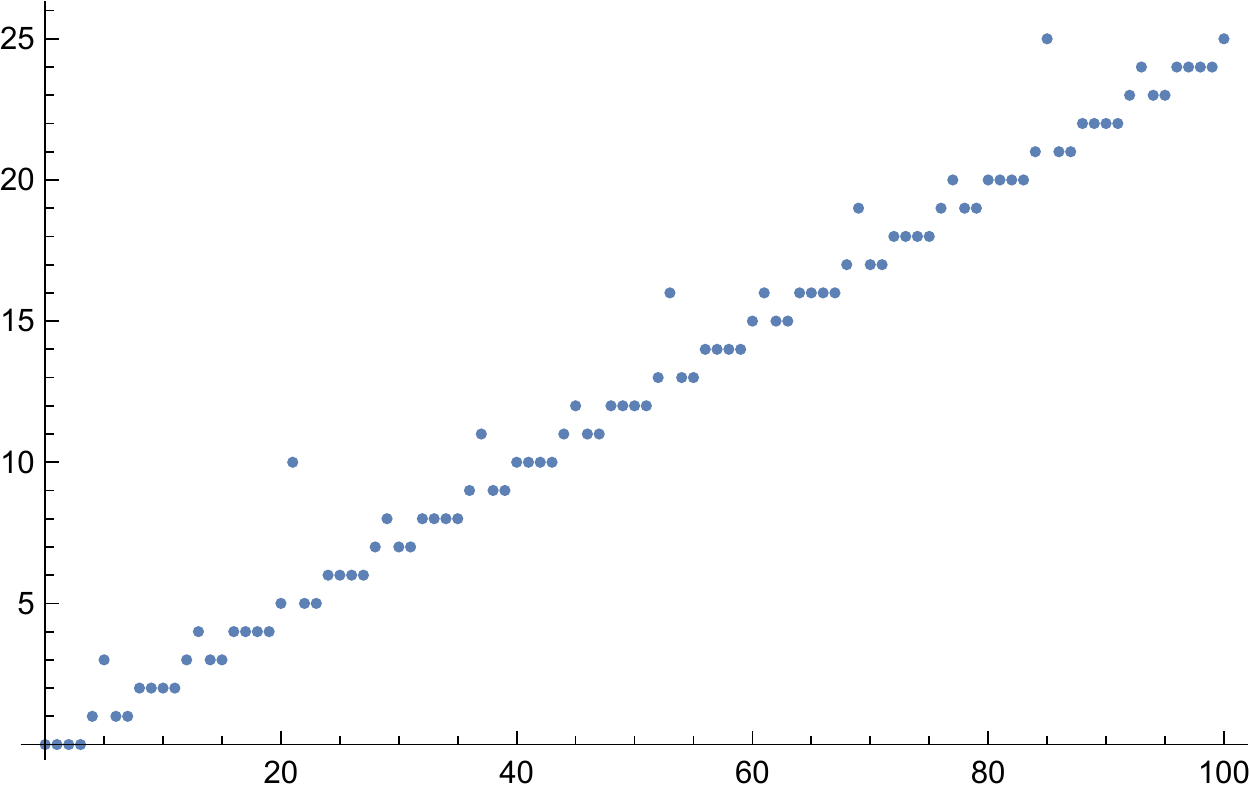}
	\caption{Plot of the $2$-adic valuation of the sequence in Example~\ref{exa:ppp}.}
	\label{fig: fourth-order p=2}
\end{figure}
Hence, one can conjecture that $\alpha_1=\frac{1}{4}$ and, assuming $\epsilon_1$ to be negligible, the above condition~\eqref{eq:test} (with $u=1$) becomes 
$$4>\log_{2.804}(2)\simeq 0.672.$$

Take the numeration system of Example~\ref{exa:toy}. 
If we compute $\nu_2(U_i)$, the values for $41\le i\le 60 $ are given by 
$$24, 20, 21, 21, 24, 22, 23, 23, 27, 24, 25, 25, 28, 26, 27, 27, 33, 
28, 29, 29$$ and, similarly, 
for $\nu_3(U_i)$
$$13, 14, 14, 14, 15, 15, 15, 16, 17, 16, 17, 17, 17, 18, 18, 18, 19, 
20, 19, 20.$$
These sequences are plotted in Figures~\ref{fig: third-order p=2} and \ref{fig: third-order p=3}.
\begin{figure}
	\includegraphics[width=.5\textwidth]{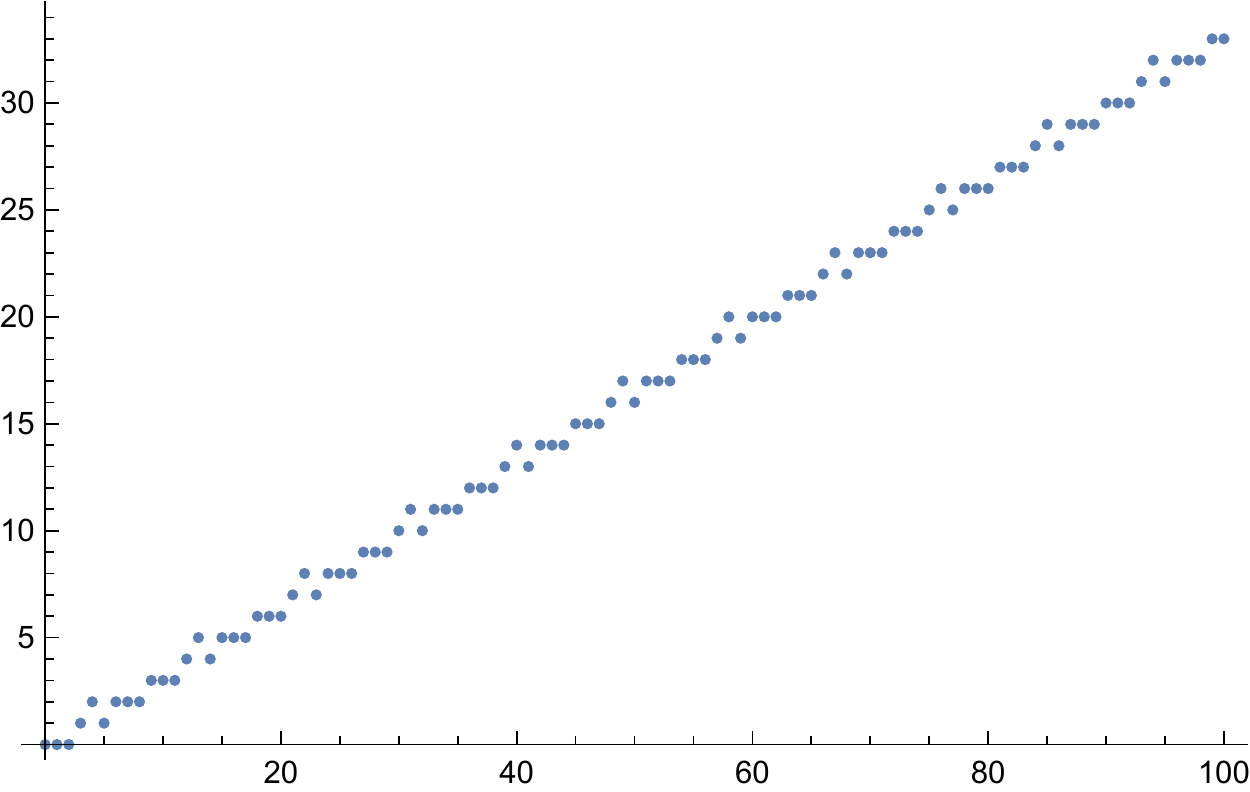}
	\caption{Plot of the $3$-adic valuation of the sequence in Example~\ref{exa:toy}.}
	\label{fig: third-order p=3}
\end{figure}
Hence, one can conjecture that $\alpha_1=\frac{1}{2}$ and $\alpha_2=\frac{1}{3}$. The recurrence has a real dominant root $\beta\simeq 12.554$. 
Assuming $\epsilon_1$ and  $\epsilon_2$ to be negligible, the condition~\eqref{eq:test} is therefore
$$2>\log_{12.554}(2)+\log_{12.554}(3)\simeq 0.708.$$
\end{remark}


\section{An incursion into $p$-adic analysis}\label{sec:ap}

In this section, we discuss the requirement on the $p$-adic valuation given in Lemma~\ref{lem:valp}.
We are able to show that this condition holds in certain cases.
In other cases, obtaining this condition requires information about the blocks of zeroes in the digit sequences of certain $p$-adic numbers, and in general it is not known how to get this information.

\subsection{A third-order sequence}

We reconsider our toy example.
Throughout this section, let $U_{i + 3} = 12 U_{i + 2} + 6 U_{i + 1} + 12 U_i$ with initial conditions $U_0 = 1, U_1 = 13, U_2 = 163$ be the sequence of Example~\ref{exa:toy}. The $3$-adic valuation of $U_i$, shown in Figure~\ref{fig: third-order p=3}, has a simple structure.

\begin{theorem}\label{p = 3 valuation}
For all $i \geq 0$,
\[
	\nu_3(U_i)
	= \left\lfloor\frac{i}{3}\right\rfloor +
	\begin{cases}
		1	& \text{if }i \equiv 4 \pmod 9 \\
		0	& \text{if }i \nequiv 4 \pmod 9.
	\end{cases}
\]
\end{theorem}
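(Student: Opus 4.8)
The natural approach is induction on $i$, but a direct induction on the stated formula is awkward because the recurrence mixes three consecutive terms and the ``$+1$ when $i\equiv 4\pmod 9$'' correction is not linear. Instead I would work $3$-adically and control $U_i / 3^{\lfloor i/3\rfloor}$. Write $i = 3q+r$ with $r\in\{0,1,2\}$. The plan is to show that the three subsequences $(U_{3q})_q$, $(U_{3q+1})_q$, $(U_{3q+2})_q$ each have a predictable $3$-adic valuation, and the only subtlety lives in one of these three residue classes (the one containing $i\equiv 4\pmod 9$, i.e.\ $r=1$, $q\equiv 1\pmod 3$).

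First I would reduce the recurrence modulo powers of $3$. Since $12 = 3\cdot 4$ and $6 = 3\cdot 2$, we have $U_{i+3} = 3(4U_{i+2} + 2U_{i+1} + 4U_i)$, so $3\mid U_{i+3}$ for all $i$, and more precisely $\nu_3(U_{i+3}) \geq 1 + \min\{\nu_3(U_{i+2}),\nu_3(U_{i+1}),\nu_3(U_i)\}$ with equality unless there is cancellation. This already gives $\nu_3(U_i) \geq \lfloor i/3\rfloor$ for $i\geq 3$ by an easy induction (checking the base cases $U_0=1$, $U_1=13$, $U_2=163$, all coprime to $3$). The work is to pin down exactly when equality fails. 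Setting $V_q^{(r)} := U_{3q+r}/3^q$ (a $3$-adic integer, in fact one can check it is an ordinary integer for the relevant ranges, or just work in $\Z_3$), the recurrence $U_{i+3} = 3(4U_{i+2}+2U_{i+1}+4U_i)$ translates into a relation expressing $V_{q+1}^{(r)}$ as a $\Z_3$-linear combination of $V_q^{(0)}, V_q^{(1)}, V_q^{(2)}$ with unit coefficients (namely $4,2,4$ suitably placed). Then I would compute $V_q^{(r)} \bmod 3$ and show it satisfies a linear recurrence over $\mathbb{F}_3$ of small order; one checks this recurrence is purely periodic with period $9$ (or a divisor), and reading off the period one sees $V_q^{(r)}\equiv 0 \pmod 3$ exactly when $3q+r\equiv 4\pmod 9$, i.e.\ $r=1$ and $q\equiv 1\pmod 3$. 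In that single case one needs to go one step further and show $\nu_3(V_q^{(1)})$ is exactly $1$, not more — i.e.\ $V_q^{(1)}\nequiv 0\pmod 9$ when $q\equiv 1\pmod 3$ — which again is a finite check on a linear recurrence, this time over $\Z/9\Z$.

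Concretely, the cleanest route: let $W_i = U_i \bmod 3^M$ for a fixed modest $M$ (say $M=3$ suffices, but I would take $M$ large enough to be safe), observe $(W_i)$ is eventually periodic, and that the $3$-adic valuation of $U_i$ is determined by $U_i \bmod 3^{\lfloor i/3\rfloor + 2}$. Because of the factor $3$ pulled out at each step of length $3$, knowing $U_i$ modulo a bounded power of $3$ for the three seed values in each ``block'' propagates; so the whole claim reduces to a finite computation: verify the formula for all $i$ up to some explicit bound $i_0$ (covering at least two full periods of the mod-$9$ behaviour of the relevant reduced sequences, so $i_0$ around $20$ or so is ample), and then show the inductive step ``formula holds for $i, i+1, i+2$ $\Rightarrow$ formula holds for $i+3$'' using $U_{i+3} = 3(4U_{i+2}+2U_{i+1}+4U_i)$ together with the observation that the minimum of $\{\lfloor i/3\rfloor, \lfloor(i+1)/3\rfloor, \lfloor(i+2)/3\rfloor\}$ with its correction term is achieved with multiplicity and the unit combination $4x+2y+4z$ of the corresponding units is again a unit mod $3$ except in the one flagged residue class, where it is divisible by exactly $3$.

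The main obstacle is the bookkeeping in the inductive step: one must track not just $\nu_3$ but the residue of $U_i/3^{\nu_3(U_i)}$ modulo $3$ (and, in the exceptional class, modulo $9$), because the recurrence $4x+2y+4z$ can in principle create or destroy divisibility, and one has to verify no unexpected cancellation occurs in the non-exceptional classes and exactly the right amount occurs in the exceptional one. This is a finite but slightly delicate $\mathbb{F}_3$-linear-algebra computation on the three reduced subsequences; once it is set up correctly it is routine, and it is genuinely finite (unlike the $p=2$ case discussed later in the paper, which is why that case is only conjectural).
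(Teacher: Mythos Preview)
Your approach is correct and follows the same overall strategy as the paper: normalise $U_i$ by the expected power of~$3$, then reduce the normalised sequence modulo a bounded power of~$3$ and read off the exact valuation from a (purely) periodic pattern. The difference lies in the implementation. You split into three integer subsequences $V_q^{(r)}=U_{3q+r}/3^q$ and track the vector $(V_q^{(0)},V_q^{(1)},V_q^{(2)})$ through a $3\times 3$ linear recurrence over $\Z/3\Z$ and $\Z/9\Z$. The paper instead sets $T_i=U_i/3^{(i-2)/3}$, observes $T_i\in\Z[3^{1/3}]$ and that $(T_i)$ satisfies the single recurrence $T_{i+3}=4\cdot 3^{2/3}T_{i+2}+2\cdot 3^{1/3}T_{i+1}+4T_i$, and then computes $(T_i)$ modulo $9\Z[3^{1/3}]$, finding period length~$27$; the valuation formula is read off directly from that period. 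Your route stays over~$\Z$ and is more elementary but requires the case split on~$r$ and the bookkeeping you flag in your final paragraph; the paper's route buys a single uniform sequence and a single finite computation at the cost of working in the extension ring $\Z[3^{1/3}]$.
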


\begin{proof}
Let $T_i = U_i / 3^{\frac{i - 2}{3}}$.
Since $U_{i + 3} = 12 U_{i + 2} + 6 U_{i + 1} + 12 U_i$, the sequence $(T_i)_{i \geq 0}$ satisfies the recurrence $T_{i + 3} = 4 \cdot 3^{2/3} T_{i + 2} + 2 \cdot 3^{1/3} T_{i + 1} + 4 T_i$.
The initial terms are $T_0 = 3^{2/3}, T_1 = 13 \cdot 3^{1/3}, T_2 = 163$, so it follows that $T_i \in \Z[3^{1/3}]$ for all $i \geq 0$.
Modulo $9 \Z[3^{1/3}]$, one computes that the sequence $(T_i)_{i \geq 0}$ is periodic with period length $27$ and period
\[
\begin{array}{ccccccccc}
	3^{2/3}, & 4 \cdot 3^{1/3}, & 1, & 7 \cdot 3^{2/3}, & 3 \cdot 3^{1/3}, & 1, & 2 \cdot 3^{2/3}, & 2 \cdot 3^{1/3}, & 4, \\
	3^{2/3}, & \phantom{1 \cdot {}} 3^{1/3}, & 7, & 7 \cdot 3^{2/3}, & 3 \cdot 3^{1/3}, & 7, & 8 \cdot 3^{2/3}, & 5 \cdot 3^{1/3}, & 1, \\
	3^{2/3}, & 7 \cdot 3^{1/3}, & 4, & 7 \cdot 3^{2/3}, & 3 \cdot 3^{1/3}, & 4, & 5 \cdot 3^{2/3}, & 8 \cdot 3^{1/3}, & 7.
\end{array}
\]
Therefore the sequence $(\nu_3(T_i))_{i \geq 0}$ of $3$-adic valuations is 
$$\frac{2}{3},\  \frac{1}{3},\  0,\  \frac{2}{3},\  \frac{4}{3},\  0,\  \frac{2}{3},\  \frac{1}{3},\  0,\  \dots$$ with period length $9$.
(Here we use the natural extension of $\nu_3$ to a function $\nu_3 \colon \Z[3^{1/3}] \to \frac{1}{3} \Z$.)
Equivalently,
\[
	\nu_3(T_i)
	= \left\lfloor\frac{i}{3}\right\rfloor - \frac{i - 2}{3} +
	\begin{cases}
		1	& \text{if }i \equiv 4 \pmod 9 \\
		0	& \text{if }i \nequiv 4 \pmod 9.
	\end{cases}
\]
It follows that
\[
	\nu_3(U_i)
	= \frac{i - 2}{3} + \nu_3(T_i)
	= \left\lfloor\frac{i}{3}\right\rfloor +
	\begin{cases}
		1	& \text{if }i \equiv 4 \pmod 9 \\
		0	& \text{if }i \nequiv 4 \pmod 9
	\end{cases}
\]
for all $i \geq 0$.
\end{proof}

Theorem~\ref{p = 3 valuation} implies $\frac{i - 2}{3} \leq \nu_3(U_i) \leq \frac{i + 2}{3}$ for all $i \geq 0$. In particular, $\nu_3(U_i) < \lfloor\frac{i}{3}\rfloor + 2$, so the condition of Lemma~\ref{lem:valp} is satisfied, and therefore for every $\epsilon > 0$ we have
\[
	\mathsf{f}_3(\mu) > \frac{\mu}{\frac{1}{3} + \epsilon}
\]
for large enough $\mu$.
This takes care of one of the two primes dividing $6$, the gcd of the coefficients of the recurrence relation.
{ To apply Theorem~\ref{thm:main2}, it remains to bound $\nu_2(U_i)$.}

However, Theorem~\ref{p = 3 valuation} is not representative of the behavior of $\nu_p(s_i)$ for a general sequence $(s_i)_{i \geq 0}$ satisfying a linear recurrence with constant coefficients.
For instance, the $2$-adic valuation of $(U_i)_{i \geq 0}$ is (much) more complicated. To study the more general setting, we will make use of the field $\Q_p$ of $p$-adic numbers and its ring of integers $\Z_p$.
The $p$-adic valuation $\nu_p(x)$ of an element $x \in \Q_p$ is related to its $p$-adic absolute value $\size{x}_p$ by $\size{x}_p = p^{-\nu_p(x)}$.
For an introduction to $p$-adic analysis, see \cite{Gouvea}.

Let $\size{\rep_p(n)}$ be the number of digits in the standard base-$p$ representation of $n$.
For all $n \geq 1$, we can bound $\nu_p(n)$ as
\[
	\nu_p(n) \leq \size{\rep_p(n)} - 1 = \left\lfloor \tfrac{1}{\log(p)} \log(n) \right\rfloor \leq \tfrac{1}{\log(p)} \log(n).
\]
(We avoid writing ``$\log_p(n)$'' here to reserve $\log_p$ for the $p$-adic logarithm, which will come into play shortly.)
Proposition~\ref{log upper bound} below gives the analogous upper bound on $\nu_p(n - \zeta)$ when $\zeta$ is a $p$-adic integer whose sequence of base-$p$ digits does not have blocks of consecutive $0$s that grow too quickly.

\begin{notation*}
Let $p$ be a prime, and let $\zeta \in \Z_p \setminus \N$.
Write $\zeta = \sum_{i \geq 0} d_i p^i$, where each $d_i \in \{0, 1, \dots, p - 1\}$.
For each $a \geq 0$, let $\ell_\zeta(a) \geq 0$ be maximal such that $0 = d_a = d_{a + 1} = \dots = d_{a + \ell_\zeta(a) - 1}$.
\end{notation*}

\begin{prop}\label{log upper bound}
Let $p$ be a prime, and let $\zeta \in \Z_p \setminus \N$.
If there exist real numbers $C, D$ such that $C > 0$, $D \geq -(C + 1)$, and $\ell_\zeta(a) \leq C a + D$ for all $a \geq 2$, then $\nu_p(n - \zeta) \leq \frac{2 C + D + 2}{\log(p)} \log(n)$ for all $n \geq p$.
\end{prop}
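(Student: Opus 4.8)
First I would observe that the quantity $\nu_p(n-\zeta)$ measures how many initial base-$p$ digits of $n$ and $\zeta$ agree. More precisely, write $\zeta = \sum_{i\geq 0} d_i p^i$ and $n = \sum_{i=0}^{L-1} e_i p^i$ with $e_{L-1}\neq 0$, so that $L = \size{\rep_p(n)}$. If $\nu_p(n-\zeta) = v$, then $d_i = e_i$ for $0\leq i < v$ and $d_v \neq e_v$. The key point is that $n$ has only $L$ digits, so $e_i = 0$ for all $i\geq L$; hence for every index $a$ with $L \leq a < v$ we must have $d_a = 0$ as well. In other words, if $v > L$, then $\zeta$ has a block of consecutive zero digits starting at position $L$ of length at least $v - L$, i.e.\ $\ell_\zeta(L) \geq v - L$.

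Next I would feed this into the hypothesis $\ell_\zeta(a) \leq Ca + D$. There are two cases. If $v \leq L$, then immediately $\nu_p(n-\zeta) = v \leq L = \size{\rep_p(n)} \leq \frac{1}{\log(p)}\log(n) + 1$, which is certainly at most $\frac{2C+D+2}{\log(p)}\log(n)$ for $n\geq p$ once one checks the constant is large enough (using $C>0$, $D\geq -(C+1)$, so $2C+D+2 \geq C+1 > 1$, and $\log(n)\geq\log(p)$ handles the additive $1$). If instead $v > L$, then we need $a = L \geq 2$ to invoke the bound; since $n\geq p$ gives $L = \size{\rep_p(n)}\geq 2$, this is fine. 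Then $v - L \leq \ell_\zeta(L) \leq CL + D$, so $v \leq (C+1)L + D$. Now substitute $L = \size{\rep_p(n)} \leq \frac{1}{\log(p)}\log(n) + 1$ (or more carefully $L - 1 \leq \frac{\log(n)}{\log(p)}$, i.e.\ $L \leq \frac{\log(n)}{\log(p)} + 1$), giving
\[
v \leq (C+1)\left(\frac{\log(n)}{\log(p)} + 1\right) + D = (C+1)\frac{\log(n)}{\log(p)} + (C+1+D).
\]
Since $D \geq -(C+1)$, we have $C+1+D\geq 0$; and since $n\geq p$, $1 \leq \frac{\log(n)}{\log(p)}$, so $C+1+D \leq (C+1+D)\frac{\log(n)}{\log(p)}$. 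Hence $v \leq (2C + 2 + D)\frac{\log(n)}{\log(p)} = \frac{2C+D+2}{\log(p)}\log(n)$, as desired.

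I do not expect a serious obstacle here: the argument is essentially bookkeeping with base-$p$ digit expansions, and the only subtlety is making sure the hypotheses $C>0$, $D\geq -(C+1)$, $n\geq p$, and "$a\geq 2$" in $\ell_\zeta(a)\leq Ca+D$ are all used correctly to absorb the additive constants into the logarithmic term. The mild care point is that the bound $\ell_\zeta(a)\leq Ca+D$ is only assumed for $a\geq 2$, so I must verify $\size{\rep_p(n)}\geq 2$ whenever $n\geq p$, which is immediate. One should also double-check the degenerate subcase where $v = L$ exactly (no zero block is forced), which falls under the first case above. Assembling these pieces gives the claimed inequality for all $n\geq p$.
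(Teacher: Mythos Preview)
Your approach is essentially the same as the paper's: both arguments observe that $\nu_p(n-\zeta)$ being large forces a long block of zeros in $\zeta$ beginning at position $\size{\rep_p(n)}$, and then feed this into the hypothesis $\ell_\zeta(a)\le Ca+D$. The paper phrases the two cases as $n\le\zeta_b$ versus $n>\zeta_b$ (where $b=\nu_p(n-\zeta)$ and $\zeta_b=\zeta\bmod p^b$), which unwinds to exactly $v\ge L$ versus $v<L$ in your notation.

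There is one small inaccuracy in your Case~1 bookkeeping. You want $\frac{\log n}{\log p}+1\le(2C+D+2)\frac{\log n}{\log p}$, which is equivalent to $(2C+D+1)\frac{\log n}{\log p}\ge 1$; since $\frac{\log n}{\log p}\ge 1$, this needs $2C+D\ge 0$, not merely $2C+D+2>1$. Your derivation ``$2C+D+2\ge C+1>1$'' only yields $2C+D>-1$. Two easy fixes: either note that the hypothesis at $a=2$ gives $0\le\ell_\zeta(2)\le 2C+D$; or, more cleanly, split the cases as $v<L$ versus $v\ge L$ (as the paper does). Then Case~1 becomes $v\le L-1\le\frac{\log n}{\log p}$ with no additive constant to absorb, and your Case~2 argument goes through verbatim for $v\ge L$ since $\ell_\zeta(L)\ge v-L$ holds trivially when $v=L$.
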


\begin{proof}
Write $\zeta = \sum_{i \geq 0} d_i p^i$, where each $d_i \in \{0, 1, \dots, p - 1\}$.
For each $a \geq 0$, define the integer $\zeta_a \colonequal \left(\zeta \bmod p^a\right) = \sum_{i = 0}^{a - 1} d_i p^i$.
Then $\nu_p(\zeta_a - \zeta) = a + \ell_\zeta(a)$.

Let $n \geq p$, and let $a \colonequal \size{\rep_p(n)} \geq 2$.
Since $\zeta \notin \N$, the $p$-adic valuation $b \colonequal \nu_p(n - \zeta)$ is an integer.
There are two cases.

If $n \leq \zeta_b$, then in fact $n = \zeta_b$; this is because $n \leq \zeta_b < p^b$, so $n \neq \zeta_b$ implies $n - \zeta_b \nequiv 0 \pmod{p^b}$, which contradicts $b = \nu_p(n - \zeta)$.
Since $\size{\rep_p(n)} = a$ and $n = \zeta_b$, we have $0 = d_a = \dots = d_{b - 1}$.
Therefore $\zeta_a = \zeta_b = n \geq p^{a - 1}$, and
\[
	\frac{\nu_p(n - \zeta)}{\log(n)}
	= \frac{\nu_p(\zeta_a - \zeta)}{\log(\zeta_a)}
	\leq \frac{a + \ell_\zeta(a)}{\log(p^{a - 1})}
	\leq \frac{a + C a + D}{(a - 1) \log(p)}
	\leq \frac{2 + 2 C + D}{\log(p)},
\]
where the final inequality follows from $1 + C + D \geq 0$.

If $n > \zeta_b$, then $n = \zeta_b + p^b m$ for some positive integer $m$.
Therefore $n  \geq p^b$, so
\[
	\frac{\nu_p(n - \zeta)}{\log(n)}
	\leq \frac{b}{\log(p^b)}
	= \frac{1}{\log(p)}
	< \frac{1 + C}{\log(p)}
	\leq \frac{2 + 2 C + D}{\log(p)}
\]
if $b \geq 1$ and $\frac{\nu_p(n - \zeta)}{\log(n)} = 0 < \frac{2 + 2 C + D}{\log(p)}$ if $b = 0$.
\end{proof}

We now turn our attention to the sequence of $2$-adic valuations $\nu_2(U_i)$.
The following result concerns the local peaks in Figure~\ref{fig: third-order p=2}.

\begin{theorem}\label{p = 2 zero}
There exists a unique $2$-adic integer $\zeta$ with the property that if $(i_n)_{n \geq 0}$ is a sequence of non-negative integers such that {$\nu_2(U_{i_n}) - \frac{i_n}{2} \to \infty$} then $i_n \to \zeta$ in $\Z_2$.
\end{theorem}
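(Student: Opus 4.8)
The statement asserts the existence and uniqueness of a $2$-adic limit point $\zeta$ governing the local peaks of $\nu_2(U_i)$. My plan is to set things up exactly as in the proof of Theorem~\ref{p = 3 valuation}: renormalize by a power of $\beta$-type factor. Concretely, the characteristic polynomial $X^3 - 12X^2 - 6X - 12$ factors over $\Q_2$. Since $12 = 4\cdot 3$ and $6 = 2\cdot 3$, reduction mod $2$ gives $X^3$, so all three roots have positive $2$-adic valuation; by the Newton polygon of $X^3 - 12X^2 - 6X - 12$ (vertices coming from $\nu_2$ of the coefficients $1, 12, 6, 12$, i.e. $0, 2, 1, 2$) one reads off that there is a single root $\lambda$ of slope $1/2$ lying in a ramified quadratic extension together with one more root, or that all three roots have valuation $2/3$ --- the correct reading is that the lower hull has one segment from $(0,0)$ to $(3,2)$ of slope $2/3$, so all three roots $\lambda_1,\lambda_2,\lambda_3$ have $\nu_2(\lambda_j) = 2/3$ in $\Q_2(2^{1/3})$ or a cubic ramified extension. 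Write $U_i = c_1 \lambda_1^i + c_2 \lambda_2^i + c_3 \lambda_3^i$ for constants $c_j$ in the splitting field, with $\nu_2(\lambda_j) = 2/3$ for each $j$; this accounts for the dominant linear term $\tfrac{2i}{3}$ --- wait, the claim involves $i/2$, so in fact the relevant reading must be that exactly one root, say $\lambda_1$, satisfies $\nu_2(\lambda_1) = 1/2$ while the other two have larger valuation (this is what makes $\tfrac{i}{2}$ the generic slope and allows peaks above it). I would first nail down the Newton polygon carefully to justify that there is a single root $\lambda_1 \in \Q_2$ (or a quadratic ramified extension) with $\nu_2(\lambda_1) = 1/2$, and that $\nu_2(\lambda_2), \nu_2(\lambda_3) > 1/2$.

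Granting this, set $\zeta \colonequal -c_2\lambda_2^{\,?} \cdots$ --- more precisely, $\nu_2(U_i) - \tfrac{i}{2} \to \infty$ forces the dominant term $c_1\lambda_1^i$ to be cancelled. The idea: $\nu_2(U_i - c_1\lambda_1^i) = \nu_2(c_2\lambda_2^i + c_3\lambda_3^i) \geq i\min(\nu_2(\lambda_2),\nu_2(\lambda_3)) = (\tfrac12 + \delta)i$ for some fixed $\delta > 0$. So $\nu_2(U_{i_n}) - \tfrac{i_n}{2}\to\infty$ holds iff $\nu_2(c_1\lambda_1^{i_n}) - \tfrac{i_n}{2}\to\infty$ as well (since $\nu_2(c_1\lambda_1^i) = \nu_2(c_1) + \tfrac{i}{2}$ is constant $+ \tfrac i2$, this can only happen if $c_1\lambda_1^i$ is small, i.e. not at all unless we've mislabelled) --- the right statement is that $U_{i_n}$ being much more divisible than generic forces $i_n$ to approach the value making a combination vanish. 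The clean route: define $\zeta$ as the unique $2$-adic integer such that $U_i \equiv$ (something)$\pmod{2^{\lceil i/2\rceil + M}}$ exactly when $i$ is close to $\zeta$; formally, one shows $\nu_2(U_i) > \tfrac i2 + M$ for arbitrarily large $M$ only along sequences $i_n$ that are $2$-adically Cauchy, and the limit is independent of the sequence. The existence of $\zeta$ comes from extracting the limit of any one such sequence $(i_n)$ with $\nu_2(U_{i_n}) - i_n/2 \to\infty$ (such a sequence exists because the data in Figure~\ref{fig: third-order p=2} already exhibits arbitrarily high peaks, and one proves this happens infinitely often by a pigeonhole/compactness argument in $\Z_2$), and uniqueness follows because two such limit points would both have to annihilate the same analytic gadget.

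The technical heart is to express "$\nu_2(U_i) - \tfrac i2$ is large" as "$i$ is near a zero of a $p$-adic analytic function." Writing $U_i = A\cdot\lambda^i + (\text{error of valuation} \geq (\tfrac12+\delta)i)$ with $\lambda \in \Z_2[\sqrt 2]$ say and $A$ a constant, we have $\nu_2(U_i) - \tfrac i2$ large iff $\nu_2(A\lambda^i)$ is large relative to $\tfrac i2$, which (since $\nu_2(A\lambda^i) = \nu_2(A) + \tfrac i2$) never happens --- so the correct mechanism must be that $U_i = A_1\lambda^i + A_2\bar\lambda^i + \cdots$ where $\lambda,\bar\lambda$ are conjugate with $\nu_2(\lambda) = \nu_2(\bar\lambda) = \tfrac12$ but $A_1\lambda^i + A_2\bar\lambda^i$ can have extra cancellation. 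Then $\nu_2(A_1\lambda^i + A_2\bar\lambda^i) - \tfrac i2 = \nu_2(A_1(\lambda/\bar\lambda)^i + A_2) - \tfrac i2 + \tfrac i2\cdot(\text{stuff})$; writing $\mu = \lambda/\bar\lambda$, a unit, this becomes $\nu_2(\mu^i + A_2/A_1)$, and $\mu^i$ is a $p$-adic analytic (Mahler/interpolation) function of $i \in \Z_2$ via $\mu^i = \exp(i\log\mu)$ once $\mu \equiv 1$ to sufficient precision (passing to $i$ in a fixed residue class if needed). So $\nu_2(U_i) - \tfrac i2 \to\infty$ iff $\mu^i \to -A_2/A_1$, iff $i \to \zeta \colonequal \frac{1}{\log\mu}\log(-A_2/A_1) \in \Z_2$ (the Iwasawa logarithm), provided $-A_2/A_1$ lies in the image, which one checks; existence and uniqueness of $\zeta$ are then immediate from injectivity of $\exp$ on the relevant disc. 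The main obstacle is exactly this reduction: getting the Newton polygon right, verifying $\mu$ (after restricting to a residue class of $i$) is in the domain of $\log_2$ and $\exp_2$, and checking $-A_2/A_1$ is genuinely in the range so that $\zeta$ exists. Routine $2$-adic estimates, deferred, handle the rest; I would also remark that $\zeta\notin\N$, so Proposition~\ref{log upper bound} will later apply to $\nu_2(i - \zeta)$.
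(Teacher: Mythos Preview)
Your final paragraph lands on essentially the paper's approach, but the route there is confused in a way that matters, and two genuine steps are missing.

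\textbf{Newton polygon.} Your reading is backwards. For $X^3-12X^2-6X-12$ the points $(i,\nu_2(a_i))$ are $(0,2),(1,1),(2,2),(3,0)$; the lower hull has a segment of slope $-1$ over $[0,1]$ and a segment of slope $-1/2$ over $[1,3]$. Hence there is \emph{one} root $\beta_1\in\Z_2$ with $\nu_2(\beta_1)=1$ and a conjugate pair $\beta_2,\beta_3$ in a ramified quadratic extension with $\nu_2(\beta_2)=\nu_2(\beta_3)=\tfrac12$. The two roots of valuation $\tfrac12$ are the $2$-adically \emph{dominant} ones (smallest valuation $\Leftrightarrow$ largest $|\cdot|_2$), which is why the generic slope of $\nu_2(U_i)$ is $i/2$; the $\beta_1$ term is the one that decays. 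Your paragraph beginning ``the correct mechanism must be\ldots'' has this right, but the earlier attempt (``one root of valuation $1/2$, others larger'') would not work and contradicts it.

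\textbf{What the paper actually does.} It factors $U_i=\beta_2^i\bigl(c_1(\beta_1/\beta_2)^i+c_2+c_3(\beta_3/\beta_2)^i\bigr)$, discards the first term as $2$-adically negligible, and interpolates $(\beta_3/\beta_2)^i$ by $\exp_2$/$\log_2$. This matches your $\mu=\lambda/\bar\lambda$ idea. Two points you wave at but do not supply:

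\emph{(i) Residue classes.} One only has $\lvert(\beta_3/\beta_2)^m-1\rvert_2<\tfrac12$ after raising to the $4$th power, so the interpolation is piecewise on the four classes $i\equiv r\pmod 4$. For $r\in\{0,2,3\}$ the equation $c_2+c_3(\beta_3/\beta_2)^r\exp_2(Lx)=0$ has no solution (the target is not in the image of $\exp_2$), and only $r=1$ yields a zero. This is why $\zeta$ is unique and why the formula carries the offset $1+4(\cdot)$. Your ``passing to a fixed residue class if needed'' hides a case analysis that is part of the content.

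\emph{(ii) $\zeta\in\Z_2$.} A priori your $\zeta=\dfrac{\log_2(-A_2/A_1)}{\log_2\mu}$ lives only in the quadratic extension $K=\Q_2(\beta_2)$; ``which one checks'' is not enough. The paper shows $\zeta\in\Q_2$ by a Galois symmetry: if $\sigma$ swaps $\beta_2,\beta_3$, then both $\log_2(-c_2\beta_2/c_3\beta_3)$ and $\log_2((\beta_3/\beta_2)^4)$ change sign under $\sigma$ (because $\sigma$ sends each argument to its reciprocal), so their ratio is $\sigma$-invariant, hence in $\Q_2$; combined with $|\zeta|_2=1$ this gives $\zeta\in\Z_2$. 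Without this step the theorem is not proved, since the statement requires $\zeta\in\Z_2$.
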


A formula for $\zeta$ is given by Equation~\eqref{zeta definition} in the proof.
In particular, $\zeta$ is a computable number, and one computes $\zeta \equiv 660098850944665 \pmod{2^{50}}$.

\begin{proof}[Proof of Theorem~\ref{p = 2 zero}]
Let $p = 2$.
To analyze the $2$-adic behavior of $(U_i)_{i \geq 0}$, we construct a piecewise interpolation of $U_i$ to $\Z_2$ using the method described by Rowland and Yassawi~\cite{Rowland--Yassawi}.
Let $P(x) = x^3 - 12 x^2 - 6 x - 12$ be the characteristic polynomial of $(U_i)_{i \geq 0}$.
The polynomial $P(x)$ has a unique root $\beta_1 \in \Z_2$ satisfying $\beta_1 \equiv 2 \pmod 4$; this can be shown by an application of Hensel's lemma (checking $\size{P(2)}_2 < \size{P'(2)}_2^2$).
Polynomial division shows that $P(x)$ factors in $\Z_2[x]$ as
\[
	P(x) = (x - \beta_1) \left(x^2 + (\beta_1 - 12) x + (\beta_1^2 - 12 \beta_1 - 6)\right).
\]
One checks that $P(x)$ has no roots in $\Z_2$ congruent to $0$, $1$, $3$, $4$, $5$, or $7$ modulo $8$.
Since $\beta_1$ has multiplicity $1$, this implies that the splitting field $K$ of $P(x)$ is a quadratic extension of $\Q_2$.
Let $\beta_2$ and $\beta_3$ be the other two roots of $P(x)$ in $K = \Q_2(\beta_2)$.
Since $\beta_1 \equiv 2 \pmod 4$, the $2$-adic absolute value of $\beta_1$ is $\size{\beta_1}_2 = \frac{1}{2}$.
Using the quadratic factor of $P(x)$ and an approximation to $\beta_1$, one computes $\size{\beta_2}_2 = \size{\beta_3}_2 = \frac{1}{\sqrt{2}}$.

Let $c_1, c_2, c_3 \in K$ be such that
\[
	U_i = c_1 \beta_1^i + c_2 \beta_2^i + c_3 \beta_3^i
\]
for all $i \geq 0$.
Using the initial conditions, we solve for $c_1, c_2, c_3$ to find
\begin{align*}
	c_1 &= \frac{-U_0 \beta_2 \beta_3 + U_1 (\beta_2 + \beta_3) - U_2}{(\beta_2 - \beta_1) (\beta_1 - \beta_3)} \\
	c_2 &= \frac{-U_0 \beta_3 \beta_1 + U_1 (\beta_3 + \beta_1) - U_2}{(\beta_3 - \beta_2) (\beta_2 - \beta_1)} \\
	c_3 &= \frac{-U_0 \beta_1 \beta_2 + U_1 (\beta_1 + \beta_2) - U_2}{(\beta_1 - \beta_3) (\beta_3 - \beta_2)},
\end{align*}
where  $U_0 = 1, U_1 = 13, U_2 = 163$.
One computes $\size{c_1}_2 = 2$ and $\size{c_2}_2 = 2 \sqrt{2} = \size{c_3}_2$.
Factoring out $\beta_2^i$ gives
\begin{equation}\label{2-adic valuation}
	U_i
	= \beta_2^i \left(c_1 \, (\tfrac{\beta_1}{\beta_2})^i + c_2 + c_3 \, (\tfrac{\beta_3}{\beta_2})^i\right).
\end{equation}
Since $\size{\frac{\beta_1}{\beta_2}}_2 = \frac{1}{\sqrt{2}}$ and $\size{\frac{\beta_3}{\beta_2}}_2 = 1$, the power $(\tfrac{\beta_1}{\beta_2})^i$ approaches $0$ as $i \to \infty$, while $(\tfrac{\beta_3}{\beta_2})^i$ does not.
Therefore the size of $\nu_2(U_i / \beta_2^i) = \nu_2(U_i) - \frac{i}{2}$ for large $i$ is determined by the proximity of $c_2 + c_3 \, (\tfrac{\beta_3}{\beta_2})^i$ to $0$.

To analyze the size of $c_2 + c_3 \, (\tfrac{\beta_3}{\beta_2})^i$, we interpret $(\frac{\beta_3}{\beta_2})^i$ as a function of a $p$-adic variable.
For this we need the $p$-adic exponential and logarithm, which are defined on extensions of $\Q_p$ by their usual power series;
$\log_p(1 + x)$ converges if $\lvert x \rvert_p < 1$, and $\exp_p x$ converges if $\lvert x \rvert_p < p^{-1/(p - 1)}$.
Moreover, $\log_p$ is an isomorphism from the multiplicative group $\{x : \lvert x - 1 \rvert_p < p^{-1/(p - 1)}\}$ to the additive group $\{x : \lvert x \rvert_p < p^{-1/(p - 1)}\}$, and its inverse map is $\exp_p$~\cite[Proposition~4.5.9 and Section~6.1]{Gouvea}.
Direct computation shows $\size{(\frac{\beta_3}{\beta_2})^4 - 1}_2 = \frac{1}{8} < \frac{1}{2} = p^{-1/(p - 1)}$.
Therefore, for all $m \geq 0$ and $r \in \{0, 1, 2, 3\}$,
\begin{align*}
	(\tfrac{\beta_3}{\beta_2})^{r + 4 m}
	&= (\tfrac{\beta_3}{\beta_2})^r (\tfrac{\beta_3}{\beta_2})^{4 m}\\
	&= (\tfrac{\beta_3}{\beta_2})^r \exp_2 \log_2((\tfrac{\beta_3}{\beta_2})^{4 m}) \\
	&= (\tfrac{\beta_3}{\beta_2})^r \exp_2\!\left(m \log_2((\tfrac{\beta_3}{\beta_2})^4)\right).
\end{align*}
Denote $L \colonequal \log_2((\frac{\beta_3}{\beta_2})^4)$.
Using the power series for $\log_2$, one computes $\size{L}_2 = \frac{1}{8}$.
For each $x \in \Z_2[\beta_2]$ and $r \in \{0, 1, 2, 3\}$, define
\[
	f_r(r + 4 x) \colonequal c_2 + c_3 \, (\tfrac{\beta_3}{\beta_2})^r \exp_2(L x).
\]
For all $x \in \Z_2$, we have $\size{L x}_2 = \frac{1}{8} \size{x}_2 \leq \frac{1}{8} < \frac{1}{2} = p^{-1/(p - 1)}$, so $f_r$ is well defined on $r + 4 \Z_2$.
The four functions $f_0, f_1, f_2, f_3$ comprise a piecewise interpolation of $c_2 + c_3 \, (\frac{\beta_3}{\beta_2})^i$.
Namely, $c_2 + c_3 \, (\frac{\beta_3}{\beta_2})^i = f_{i \bmod 4}(i)$ for all $i \geq 0$.

Since each $f_r$ is a continuous function, from Equation~\eqref{2-adic valuation} we see that $\nu_2(U_i / \beta_2^i) = \nu_2(U_i) - \frac{i}{2}$ is large when $i$ is close to a zero of $f_{i \bmod 4}$.
The equation $f_r(r + 4 x) = 0$ is equivalent to
\[
	\exp_2(L x)
	= -\tfrac{c_2}{c_3} (\tfrac{\beta_2}{\beta_3})^r.
\]
For $r \in \{0, 2, 3\}$, one computes $\sizedsize{-\frac{c_2}{c_3} (\frac{\beta_2}{\beta_3})^r - 1}_2 \geq \frac{1}{2}$, so there is no solution $x$ for these values of $r$.
For $r = 1$, $\sizedsize{-\frac{c_2}{c_3} (\frac{\beta_2}{\beta_3})^r - 1}_2 = \frac{1}{16} < \frac{1}{2}$, so there is a unique solution, namely $x = \frac{1}{L} \log_2\!\left(-\frac{c_2 \beta_2}{c_3 \beta_3}\right)$, which has size $\size{x}_2 = \frac{1}{2}$.
Let
\begin{equation}\label{zeta definition}
	\zeta \colonequal 1 + 4 \tfrac{1}{L} \log_2\!\left(-\tfrac{c_2 \beta_2}{c_3 \beta_3}\right),
\end{equation}
so that $f_1(\zeta) = 0$ and $\size{\zeta}_2 = 1$.
It follows that every sequence $(i_n)_{n \geq 0}$ of non-negative integers with $\nu_2(U_{i_n}) - \frac{i_n}{2} \to \infty$ satisfies $i_n \to \zeta$.
(If $\zeta \notin \Z_2$, then such sequences do not exist.)

It remains to show that $\zeta \in \Z_2$.
Let $\sigma : K \to K$ be the Galois automorphism that non-trivially permutes $\beta_2$ and $\beta_3$.
The formulas for $c_2$ and $c_3$ imply $\frac{c_2}{c_3} \cdot \frac{\sigma(c_2)}{\sigma(c_3)} = 1$;
this implies
\begin{align*}
	\log_2\!\left(-\tfrac{c_2 \beta_2}{c_3 \beta_3}\right)
		+ \sigma\!\left(\log_2\!\left(-\tfrac{c_2 \beta_2}{c_3 \beta_3}\right)\right)
	&= \log_2\!\left(\tfrac{c_2 \beta_2}{c_3 \beta_3} \cdot \tfrac{\sigma(c_2) \beta_3}{\sigma(c_3) \beta_2}\right) \\
	&= \log_2(1)
	= 0.
\end{align*}
Similarly,
\[
	\log_2((\tfrac{\beta_3}{\beta_2})^4) + \sigma\!\left(\log_2((\tfrac{\beta_3}{\beta_2})^4)\right)
	= \log_2(1)
	= 0.
\]
Therefore
\[
	\frac{
		\log_2\!\left(-\tfrac{c_2 \beta_2}{c_3 \beta_3}\right)
	}{
		\log_2((\frac{\beta_3}{\beta_2})^4)
	}
	= \frac{
		-\sigma\!\left(\log_2\!\left(-\tfrac{c_2 \beta_2}{c_3 \beta_3}\right)\right)
	}{
		-\sigma\!\left(\log_2((\frac{\beta_3}{\beta_2})^4)\right)
	}
	= \sigma\!\left(\frac{
		\log_2\!\left(-\tfrac{c_2 \beta_2}{c_3 \beta_3}\right)
	}{
		\log_2((\frac{\beta_3}{\beta_2})^4)
	}\right)
\]
is invariant under $\sigma$ and thus is an element of $\Q_2$.
It follows from $\size{\zeta}_2 = 1$ that $\zeta \in \Z_2$.
\end{proof}

\begin{remark*}
The interpolation in the previous proof depends on appropriate powers of $\frac{\beta_3}{\beta_2}$ satisfying $x = \exp_2(\log_2(x))$.
We verified this by directly checking $\size{(\frac{\beta_3}{\beta_2})^4 - 1}_2 < \frac{1}{2}$.
In general, an appropriate exponent is given by \cite[Lemma~6]{Rowland--Yassawi}, namely
\[
	\begin{cases}
		1						& \text{if $e < p - 1$} \\
		p^{\lceil \log(e+1) / \log p \rceil}	& \text{if $e \geq p - 1$,}
	\end{cases}
\]
where $e$ is the ramification index of the field extension.
The ramification index of the extension $K$ in the proof of Theorem~\ref{p = 2 zero} is $e = 2$; this follows from the fact that $e$ is a divisor of the degree of the extension and that $e \neq 1$ since we identified an element $\beta_2 \in K$ with $2$-adic valuation $\nu_2(\beta_2) = \frac{1}{2}$.
Therefore the exponent $2^{\lceil \log(3) / \log(2) \rceil} = 4$ suffices.
Since $\size{\frac{\beta_3}{\beta_2}}_2 = 1$, \cite[Lemma~6]{Rowland--Yassawi} implies $\size{(\frac{\beta_3}{\beta_2})^4 - 1}_2 < \frac{1}{2}$.
(In general, one must divide by a root of unity before raising to the appropriate exponent, but this root of unity is $1$ for $\frac{\beta_3}{\beta_2}$ since the ramification index of $K$ is equal to its degree.)
\end{remark*}

By Proposition~\ref{log upper bound}, the growth rate of $\nu_2(U_i)$ is determined by the approximability of
\[
	\zeta = \cdots 10010110000101101100111101100001101111011010011001_2
\]
by non-negative integers.

\begin{conjecture}\label{p = 2 block lengths}
Let $\zeta \in \Z_2$ be defined as in Equation~\eqref{zeta definition}.
The lengths of the $0$ blocks of the $2$-adic digits of $\zeta$ satisfy $\ell_\zeta(a) \leq \frac{2}{95} a + \frac{18}{5}$ for all $a \geq 0$.
\end{conjecture}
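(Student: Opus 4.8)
The plan is to reformulate the statement about runs of binary $0$s in $\zeta$ as a $2$-adic Diophantine approximation statement and then to apply lower bounds for $p$-adic linear forms in logarithms. Recall from the proof of Proposition~\ref{log upper bound} that, writing $\zeta_a \colonequal (\zeta \bmod 2^a)$ for the non-negative integer formed by the first $a$ binary digits of $\zeta$, one has $\nu_2(\zeta_a - \zeta) = a + \ell_\zeta(a)$. Hence Conjecture~\ref{p = 2 block lengths} is equivalent to the inequality $\nu_2(\zeta_a - \zeta) \le \tfrac{97}{95}\,a + \tfrac{18}{5}$ for all $a \ge 0$, and, since $\zeta_a < 2^a$, it would follow from \emph{any} bound of the shape $\nu_2(m - \zeta) \le c_1 \log m + c_2$ valid for all integers $m \ge 2$. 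So the problem is to bound from above how closely the fixed $2$-adic integer $\zeta$ can be approximated by ordinary non-negative integers.

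The next step is to exploit the explicit formula for $\zeta$. Let $K = \Q_2(\beta_2)$ be the quadratic splitting field of $P(x) = x^3 - 12x^2 - 6x - 12$ over $\Q_2$, as in the proof of Theorem~\ref{p = 2 zero}, and put $\alpha \colonequal -\tfrac{c_2 \beta_2}{c_3 \beta_3}$ and $\gamma \colonequal \bigl(\tfrac{\beta_3}{\beta_2}\bigr)^{4}$; by the computations in that proof, $\alpha$ and $\gamma$ are units with $\nu_2(\alpha - 1) = 4$, $\nu_2(\gamma - 1) = 3$, and $N_{K/\Q_2}(\alpha) = N_{K/\Q_2}(\gamma) = 1$, and Equation~\eqref{zeta definition} reads $\zeta - 1 = 4\log_2(\alpha)/\log_2(\gamma)$. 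Therefore, for every non-negative integer $m$,
\[
	\nu_2(m - \zeta)
	= \nu_2\!\bigl((m-1)\log_2 \gamma - 4\log_2 \alpha\bigr) - \nu_2(\log_2 \gamma)
	= \nu_2\!\bigl(\gamma^{\,m-1}\alpha^{-4} - 1\bigr) - 3,
\]
where we used additivity of $\log_2$ on $\{x : \size{x-1}_2 < \tfrac12\}$, the identity $\nu_2(\log_2(1+x)) = \nu_2(x)$ for $\nu_2(x) > 1$ (applicable since $\nu_2(\gamma^{m-1}\alpha^{-4} - 1) \ge 3$), and $\nu_2(\log_2 \gamma) = 3$. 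Because $\zeta \notin \N$, the unit $\gamma^{m-1}\alpha^{-4}$ is never equal to $1$, so this quantity is finite.

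Now I would view $\alpha$ and $\gamma$ as algebraic numbers lying in the splitting field of $P$ over $\Q$ (a number field), with the place above $2$ fixed by the embedding used above, and invoke K.\ Yu's theorem on $p$-adic linear forms in logarithms — or, in this two-variable situation, the sharper estimates of Bugeaud and Laurent. As $\alpha$, $\gamma$ and the prime $2$ are fixed, this provides effectively computable constants $c_1, c_2 > 0$ with $\nu_2(\gamma^{m-1}\alpha^{-4} - 1) \le c_1 \log m + c_2$ for all $m \ge 2$. Combined with the displayed identity and $\zeta_a < 2^a$, this gives $a + \ell_\zeta(a) = \nu_2(\zeta_a - \zeta) \le c_1 (\log 2)\,a + c_2$, that is, $\ell_\zeta(a) \le (c_1 \log 2 - 1)\,a + c_2$ for all large $a$, the finitely many small $a$ being verified directly. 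This already establishes the \emph{existence} of a bound $\ell_\zeta(a) \le C a + D$, which is exactly what the hypothesis of Lemma~\ref{lem:valp} requires (with $g$ constant, $\alpha_2 = \tfrac12$, and $\epsilon_2$ arbitrarily small); combined with Theorem~\ref{p = 3 valuation} for the prime $3$, Theorem~\ref{thm:main2} then becomes applicable to the numeration system of Example~\ref{exa:toy}.

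The main obstacle is that this soft argument does \emph{not} recover the conjectured constants: Yu's $c_1$, and even the far smaller but still enormous constants available for linear forms in two logarithms, exceed by many orders of magnitude the value $\tfrac{1 + 2/95}{\log 2} \approx 1.47$ that the stated slope $\tfrac{2}{95}$ would require. Closing this gap seems to demand either a qualitative improvement in the theory of linear forms in two logarithms, which is not to be expected, or a direct structural description of the $2$-adic digit string of $\zeta$ — for instance by exhibiting automatic or regular behavior of the sequence of valuations, in the spirit of the interpolation method of Rowland--Yassawi~\cite{Rowland--Yassawi} — going well beyond what Equation~\eqref{zeta definition} alone encodes. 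Accordingly I expect Conjecture~\ref{p = 2 block lengths} in its precise form to be out of reach with current techniques; what is feasible is to verify it for all $a$ below any prescribed threshold by computing sufficiently many binary digits of $\zeta$. This is exactly the sort of hard $p$-adic problem flagged in the introduction.
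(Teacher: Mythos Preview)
The statement you are addressing is a \emph{conjecture}, and the paper does not prove it. The paper explicitly says that ``results concerning digits of irrational numbers are notoriously difficult to prove'', points to Bugeaud--Keke\c{c} for the algebraic case, remarks that ``there are no known results of this form for transcendental numbers'', and explains that the specific slope $\tfrac{2}{95}$ was simply obtained by interpolating two data points. So there is no paper proof to compare against: your final assessment that the precise constants are out of reach with current techniques is exactly the paper's position.

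Where you go further than the paper is in your middle section. The paper seems content to treat $\zeta$ as a transcendental $2$-adic number about whose digits nothing can be said; you observe instead that the identity $\nu_2(m-\zeta)=\nu_2(\gamma^{m-1}\alpha^{-4}-1)-3$, with $\alpha$ and $\gamma$ \emph{algebraic}, brings the question into the range of $p$-adic linear forms in logarithms (Yu, Bugeaud--Laurent). This is a genuine strengthening of the paper's discussion: it yields an effective bound $\ell_\zeta(a)\le C a+D$ for \emph{some} constants, which is precisely what Lemma~\ref{lem:valp} and hence Theorem~\ref{thm:main2} need, so the toy example would in fact be settled unconditionally rather than modulo the conjecture. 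Your derivation of the identity is correct (the hypotheses $\nu_2(\gamma-1)=3$ and $\nu_2(\alpha-1)=4$ from the proof of Theorem~\ref{p = 2 zero} justify the use of $\nu_2(\log_2 x)=\nu_2(x-1)$), and the non-vanishing $\gamma^{m-1}\alpha^{-4}\ne 1$ follows from $\zeta\notin\N$.

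You are equally correct that the slope delivered by Yu or Bugeaud--Laurent is astronomically larger than $\tfrac{2}{95}$, so the conjecture \emph{as stated} remains open. In short: your proposal does not prove Conjecture~\ref{p = 2 block lengths}, nor does the paper; but your linear-forms observation is a real improvement over the paper's treatment, since it removes the dependence on the conjecture for the intended application.
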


Conjecture~\ref{p = 2 block lengths} is weak in the sense that it is almost certainly far from sharp.
One expects the digits of $\zeta$ to be randomly distributed, in which case $\ell_\zeta(a) = \frac{1}{\log(2)} \log(a) + O(1)$.
Indeed, among the first $1000$ base-$2$ digits of $\zeta$, the longest block of $0$s has length $10$.
However, results concerning digits of irrational numbers are notoriously difficult to prove.
Bugeaud and Keke\c{c}~\cite[Theorem~1.6]{Bugeaud--Kekec} give a lower bound on the number of non-zero digits among the first $a$ digits of an irrational algebraic number in $\Q_p$; see also Theorem~2.1 in the same paper.
However, there are no known results of this form for transcendental numbers.

The conjectural bound was obtained by computing the line through $\ell_\zeta(19) = 4$ and $\ell_\zeta(304) = 10$.
If Conjecture~\ref{p = 2 block lengths} is true, then an explicit formula for $\nu_2(U_i)$ is given by the following theorem.
In particular, the approximation $\zeta \equiv 660098850944665 \pmod {2^{50}}$ is sufficient to compute $\nu_2(U_i)$ for all $i \leq 2^{49}$.

\begin{theorem}\label{p = 2 valuation}
Let $\zeta \in \Z_2$ be defined as in Equation~\eqref{zeta definition}.
Conjecture~\ref{p = 2 block lengths} implies that, for all $i \geq 10$,
\[
	\nu_2(U_i)
	= \left\lfloor\frac{i - 1}{2}\right\rfloor +
	\begin{cases}
		\nu_2(i - \zeta)	& \text{if }i \equiv 1 \pmod 4 \\
		0			& \text{if }i \nequiv 1 \pmod 4.
	\end{cases}
\]
\end{theorem}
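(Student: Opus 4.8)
The plan is to read $\nu_2(U_i)$ off the $2$-adic interpolation constructed in the proof of Theorem~\ref{p = 2 zero}. Recall from there that $U_i=\beta_2^i\bigl(c_1(\beta_1/\beta_2)^i+c_2+c_3(\beta_3/\beta_2)^i\bigr)$, that $\nu_2(\beta_2)=\tfrac12$, $\nu_2(c_1)=-1$, $\nu_2(c_2)=\nu_2(c_3)=-\tfrac32$, $\nu_2(\beta_1/\beta_2)=\tfrac12$, $\nu_2(\beta_3/\beta_2)=0$, and $\nu_2(L)=3$ where $L=\log_2((\beta_3/\beta_2)^4)$. Put $A_i:=c_1(\beta_1/\beta_2)^i$ and $B_i:=c_2+c_3(\beta_3/\beta_2)^i=f_{i\bmod 4}(i)$, so that $\nu_2(A_i)=\tfrac i2-1$ and $\nu_2(U_i)=\tfrac i2+\nu_2(A_i+B_i)$. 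The whole argument reduces to computing $\nu_2(B_i)$ and checking $\nu_2(B_i)<\nu_2(A_i)$, which forces $\nu_2(A_i+B_i)=\nu_2(B_i)$.

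First I would handle the case $i\not\equiv 1\pmod 4$. With $r=i\bmod 4\in\{0,2,3\}$ and $i=r+4m$, one has $B_i=c_3(\beta_3/\beta_2)^r\bigl(\exp_2(Lm)-\kappa_r\bigr)$, where $\kappa_r:=-c_2/\bigl(c_3(\beta_3/\beta_2)^r\bigr)$ is a unit of the ring of integers of $K$. Since $K$ is totally ramified of degree $2$ over $\Q_2$ (so its residue field is $\mathbb{F}_2$), this unit satisfies $\nu_2(\kappa_r-1)\ge\tfrac12$, while the proof of Theorem~\ref{p = 2 zero} records $\size{\kappa_r-1}_2\ge\tfrac12$, i.e.\ $\nu_2(\kappa_r-1)\le 1$. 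As $\size{\exp_2(Lm)-1}_2\le\size{L}_2=\tfrac18<\tfrac12\le\size{\kappa_r-1}_2$, the ultrametric inequality gives $\nu_2(B_i)=-\tfrac32+\nu_2(\kappa_r-1)\in\{-1,-\tfrac12\}$, which in either case is smaller than $\tfrac i2-1=\nu_2(A_i)$ when $i\ge 10$. Hence $\nu_2(A_i+B_i)=\nu_2(B_i)$; and since $\nu_2(U_i)=\tfrac i2+\nu_2(B_i)$ must be a non-negative integer, its parity pins down $\nu_2(B_i)=-1$ when $i$ is even (so $r\in\{0,2\}$) and $\nu_2(B_i)=-\tfrac12$ when $i$ is odd (so $r=3$). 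In both situations $\nu_2(U_i)=\bigl\lfloor\tfrac{i-1}{2}\bigr\rfloor$, as asserted.

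Next, for $i\equiv 1\pmod 4$, write $i=1+4m$. Recall $\zeta=1+4x_0$ with $x_0=\tfrac1L\log_2\!\bigl(-\tfrac{c_2\beta_2}{c_3\beta_3}\bigr)$ and $f_1(\zeta)=0$, which is equivalent to $-c_2/(c_3\beta_3/\beta_2)=\exp_2(Lx_0)$. Therefore $B_i=c_3(\beta_3/\beta_2)\bigl(\exp_2(Lm)-\exp_2(Lx_0)\bigr)=c_3(\beta_3/\beta_2)\exp_2(Lx_0)\bigl(\exp_2(L(m-x_0))-1\bigr)$. Since $m-x_0=(i-\zeta)/4$, we have $\nu_2(L(m-x_0))=3+\bigl(\nu_2(i-\zeta)-2\bigr)=1+\nu_2(i-\zeta)>1$, so the standard estimate $\nu_2(\exp_2(y)-1)=\nu_2(y)$ valid for $\nu_2(y)>1$ gives $\nu_2(B_i)=-\tfrac32+0+\bigl(1+\nu_2(i-\zeta)\bigr)=\nu_2(i-\zeta)-\tfrac12$, the three summands being the valuations of $c_3(\beta_3/\beta_2)$, of $\exp_2(Lx_0)$, and of $\exp_2(L(m-x_0))-1$. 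Consequently, whenever $\nu_2(i-\zeta)<\tfrac{i-1}{2}$ --- which is exactly the condition $\nu_2(B_i)<\nu_2(A_i)$ --- we obtain $\nu_2(U_i)=\tfrac i2+\nu_2(i-\zeta)-\tfrac12=\bigl\lfloor\tfrac{i-1}{2}\bigr\rfloor+\nu_2(i-\zeta)$.

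It thus remains to establish $\nu_2(i-\zeta)<\tfrac{i-1}{2}$ for every $i\ge 10$ with $i\equiv 1\pmod 4$, and this is the sole point where Conjecture~\ref{p = 2 block lengths} is invoked. For large $i$ I would apply Proposition~\ref{log upper bound} with $C=\tfrac{2}{95}$ and $D=\tfrac{18}{5}$ (whose hypotheses $C>0$ and $D\ge-(C+1)$ plainly hold): under the conjecture it yields $\nu_2(i-\zeta)\le\tfrac{2C+D+2}{\log(2)}\log(i)=\tfrac{536}{95}\log_2(i)$, and one checks that this stays below $\tfrac{i-1}{2}$ for every $i$ past an explicit threshold $i_0$ (around $70$, since $f(i)=\tfrac{i-1}{2}-\tfrac{536}{95}\log_2(i)$ is eventually increasing). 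For the finitely many $i$ with $10\le i\le i_0$ one instead computes $\nu_2(i-\zeta)$ directly from the known residue $\zeta\equiv 660098850944665\pmod{2^{50}}$ --- legitimate since $i_0\ll 2^{49}$ --- and checks the inequality there, which is also the range where one verifies the displayed formula by hand. The main obstacle is exactly that this final step is conditional: bounding the growth of the blocks of consecutive $0$s in the base-$2$ expansion of the (conjecturally transcendental) number $\zeta$ is precisely what Conjecture~\ref{p = 2 block lengths} supplies and what no unconditional result currently delivers; the hypothesis $i\ge 10$ is just bookkeeping for the small-$i$ regime and for the sporadic exceptions lying below it.
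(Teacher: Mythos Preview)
Your proof is correct and, at the architectural level, follows the paper's strategy: read $\nu_2(U_i)$ off the $2$-adic interpolation built in the proof of Theorem~\ref{p = 2 zero}, compare the two summands $A_i$ and $B_i$, and invoke Conjecture~\ref{p = 2 block lengths} through Proposition~\ref{log upper bound} to ensure the $B_i$ term dominates when $i\equiv 1\pmod 4$. The endpoint computation ($\nu_2(B_i)=\nu_2(i-\zeta)-\tfrac12$, crossover near $i\approx 70$, finitely many checks below) matches the paper exactly.

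There are, however, two genuine methodological differences worth noting. For $i\not\equiv 1\pmod 4$, the paper abandons the interpolation and instead computes $T_i=U_i/2^{i/2-1}$ modulo $2\Z[\sqrt{2}]$, obtaining the exact period $1,\sqrt{2},1,0$ by a finite calculation; this pins down $\nu_2(U_i)$ directly and in fact works already for $i\ge 2$. You stay with the $f_r$ throughout, bracket $\nu_2(\kappa_r-1)\in\{\tfrac12,1\}$ using the ramification of $K$ together with the inequality $\size{\kappa_r-1}_2\ge\tfrac12$ recorded in the earlier proof, and then use the slick observation that $\nu_2(U_i)\in\Z$ forces the choice by parity of $i$. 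For $i\equiv 1\pmod 4$, the paper invokes the $p$-adic Weierstrass preparation theorem to write $f_1(1+4x)=\text{const}\cdot(x-\tfrac{\zeta-1}{4})h(x)$ with $\size{h}_2\equiv 1$, whereas you exploit the explicit form of the zero as a logarithm to factor $f_1(i)=c_3(\beta_3/\beta_2)\exp_2(Lx_0)\bigl(\exp_2(L(m-x_0))-1\bigr)$ and then apply the isometry $\nu_2(\exp_2(y)-1)=\nu_2(y)$. Your route is more elementary (no Weierstrass preparation) and more uniform (one tool for all residues); the paper's route is a little more robust in that Weierstrass preparation would still apply if the zero were not given by a closed-form logarithm, and its treatment of the easy residues yields an unconditional statement valid from $i\ge 2$ rather than only after the comparison $\nu_2(B_i)<\nu_2(A_i)$ kicks in.
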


\begin{proof}
We start as in the proof of Theorem~\ref{p = 3 valuation}.
Let $T_i = U_i / 2^{\frac{i}{2} - 1}$.
Since $U_{i + 3} = 12 U_{i + 2} + 6 U_{i + 1} + 12 U_i$, the sequence $(T_i)_{i \geq 0}$ satisfies the recurrence $T_{i + 3} = 6 \sqrt{2} T_{i + 2} + 3 T_{i + 1} + 3 \sqrt{2} T_i$.
The initial terms are $T_0 = 2, T_1 = 13 \sqrt{2}, T_2 = 163$, so it follows that $T_i \in \Z[\sqrt{2}]$ for all $i \geq 0$.
Modulo $2 \Z[\sqrt{2}]$, the sequence $(T_i)_{i \geq 2}$ is periodic with period length $4$: $1, \sqrt{2}, 1, 0, 1, \sqrt{2}, 1, 0, \dots$.
It follows that if $i \geq 2$ and $i \nequiv 1 \pmod 4$ then
\begin{align*}
	\nu_2(U_i)
	= \frac{i}{2} - 1 + \nu_2(T_i)
	&= \frac{i}{2} - 1 +
	\begin{cases}
		0		& \text{if }i \equiv 0 \pmod 4 \\
		0		& \text{if }i \equiv 2 \pmod 4 \\
		\frac{1}{2}	& \text{if }i \equiv 3 \pmod 4
	\end{cases} \\
	&= \left\lfloor\frac{i - 1}{2}\right\rfloor.
\end{align*}

It remains to determine $\nu_2(U_i)$ when $i \equiv 1 \pmod 4$.
We continue to use the $2$-adic numbers $\beta_1, \beta_2, \beta_3, c_1, c_2, c_3$ and the function $f_1$ defined in the proof of Theorem~\ref{p = 2 zero}.
When $i \equiv 1 \pmod 4$, Equation~\eqref{2-adic valuation} gives
\[
	\sizedsize{U_i}_2
	= 2^{-\frac{i}{2}} \sizedsize{c_1 \, (\tfrac{\beta_1}{\beta_2})^i + f_1(i)}_2.
\]
To obtain a simpler formula for $\sizedsize{U_i}_2$, we compare the sizes of the two terms being added and use the fact that $\size{x + y}_p = \max\{\size{x}_p, \size{y}_p\}$ if $\size{x}_p \neq \size{y}_p$.
For the first, we have $\sizedsize{c_1 \, (\tfrac{\beta_1}{\beta_2})^i}_2 = 2^{1 - \frac{i}{2}}$.
For the second,
\[
	\sizedsize{f_1(i)}_2
	= \sizedsize{c_2 + \tfrac{c_3 \beta_3}{\beta_2} \exp_2\!\left(L \cdot \tfrac{i - 1}{4}\right)}_2.
\]
Since the function $f_1(1 + 4 x) = c_2 + \tfrac{c_3 \beta_3}{\beta_2} \exp_2(L x)$ has a unique zero $\frac{\zeta - 1}{4}$, the $p$-adic Weierstrass preparation theorem~\cite[Theorem~6.2.6]{Gouvea} implies the existence of a power series $h(x) \in K\llbracket x \rrbracket$ such that $h(0) = 1$, $\size{h(x)}_2 = 1$ for all $x \in \Z_2[\beta_2]$, and
\[
	f_1(1 + 4 x)
	= \frac{c_2 + \tfrac{c_3 \beta_3}{\beta_2}}{-\tfrac{\zeta - 1}{4}} \left(x - \tfrac{\zeta - 1}{4}\right) h(x).
\]
Therefore
\begin{align*}
	\sizedsize{f_1(i)}_2
	&= \sizedsize{\frac{c_2 + \tfrac{c_3 \beta_3}{\beta_2}}{-\tfrac{\zeta - 1}{4}}}_2
		\sizedsize{\tfrac{i - 1}{4} - \tfrac{\zeta - 1}{4}}_2 \\
	&= \sqrt{2} \sizedsize{i - \zeta}_2.
\end{align*}
Conjecture~\ref{p = 2 block lengths} and Proposition~\ref{log upper bound} imply $\size{i - \zeta}_2 \geq i^{-536/95}$ for all $i \geq 2$.
The functions $2^{1 - \frac{i}{2}}$ and $\sqrt{2} i^{-536/95}$ intersect at $i \approx 70.21$.
For all $i \geq 73$ such that $i \equiv 1 \pmod 4$,
\[
	\sizedsize{c_1 \, (\tfrac{\beta_1}{\beta_2})^i}_2
	= 2^{1 - \frac{i}{2}}
	< \sqrt{2} i^{-536/95}
	\leq \sizedsize{f_1(i)}_2
\]
and therefore
\[
	\sizedsize{U_i}_2
	= 2^{-\frac{i}{2}} \sizedsize{c_1 \, (\tfrac{\beta_1}{\beta_2})^i + f_1(i)}_2
 	= 2^{-\frac{i}{2}} \sizedsize{f_1(i)}_2
	= 2^{\frac{1 - i}{2}} \sizedsize{i - \zeta}_2.
\]
Moreover, explicit computation shows that $2^{1 - \frac{i}{2}} < \sqrt{2} \size{i - \zeta}_2$ for all $i \equiv 1 \pmod 4$ satisfying $13 \leq i \leq 69$, so $\sizedsize{U_i}_2 = 2^{\frac{1 - i}{2}} \sizedsize{i - \zeta}_2$ for these values as well.
Therefore $\nu_2(U_i) = \frac{i - 1}{2} + \nu_2(i - \zeta)$ for all $i \geq 13$ such that $i \equiv 1 \pmod 4$.
\end{proof}

\begin{corollary}
Conjecture~\ref{p = 2 block lengths} implies that $\nu_2(U_i) \leq \frac{i}{2} + \frac{536}{95 \log(2)} \log(i)$ for all $i \geq 10$.
\end{corollary}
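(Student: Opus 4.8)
The plan is to combine the explicit formula for $\nu_2(U_i)$ furnished by Theorem~\ref{p = 2 valuation} with the logarithmic upper bound of Proposition~\ref{log upper bound}, so the argument is essentially bookkeeping once Conjecture~\ref{p = 2 block lengths} is assumed. First I would invoke Theorem~\ref{p = 2 valuation}: under Conjecture~\ref{p = 2 block lengths}, for all $i \geq 10$ we have $\nu_2(U_i) = \lfloor \frac{i-1}{2} \rfloor + \nu_2(i - \zeta)$ when $i \equiv 1 \pmod 4$ and $\nu_2(U_i) = \lfloor \frac{i-1}{2} \rfloor$ when $i \nequiv 1 \pmod 4$. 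In both cases $\lfloor \frac{i-1}{2} \rfloor \leq \frac{i}{2}$, so it remains only to bound $\nu_2(i-\zeta)$ in the first case.

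Next I would apply Proposition~\ref{log upper bound} to the $2$-adic integer $\zeta$ defined in Equation~\eqref{zeta definition}, with the constants $C = \frac{2}{95}$ and $D = \frac{18}{5}$ supplied by Conjecture~\ref{p = 2 block lengths}. I would check the two hypotheses of the proposition, $C > 0$ and $D \geq -(C+1)$, which hold trivially, and note that the hypothesis $n \geq p = 2$ is satisfied since $i \geq 10$. The proposition then yields $\nu_2(i - \zeta) \leq \frac{2C + D + 2}{\log 2} \log i$ for all $i \geq 2$, and a short computation gives $2C + D + 2 = \frac{4}{95} + \frac{18}{5} + 2 = \frac{536}{95}$, hence $\nu_2(i - \zeta) \leq \frac{536}{95 \log 2} \log i$.

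Finally I would assemble the cases: for $i \equiv 1 \pmod 4$ with $i \geq 10$ we obtain $\nu_2(U_i) \leq \frac{i}{2} + \frac{536}{95 \log 2} \log i$, and for $i \nequiv 1 \pmod 4$ we have $\nu_2(U_i) = \lfloor \frac{i-1}{2} \rfloor \leq \frac{i}{2} \leq \frac{i}{2} + \frac{536}{95 \log 2} \log i$ because $\log i \geq 0$ for $i \geq 1$. This settles all $i \geq 10$, which is exactly the asserted bound. There is no real obstacle here: the statement is a direct corollary of the two preceding results, and the only points requiring care are verifying the numerical hypotheses of Proposition~\ref{log upper bound} and observing that the floor in Theorem~\ref{p = 2 valuation} only strengthens the inequality.
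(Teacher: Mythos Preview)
Your approach is essentially the same as the paper's, and the arithmetic is correct. There is one small gap: Proposition~\ref{log upper bound} is stated for $\zeta \in \Z_p \setminus \N$, and you do not verify that $\zeta \notin \N$ before invoking it. The paper handles this explicitly by observing that $U_i \neq 0$ for all $i \geq 0$, hence $\size{U_i}_2 \neq 0$, while $\size{f_1(\zeta)}_2 = 0$; together with the formula of Theorem~\ref{p = 2 valuation} (or Equation~\eqref{2-adic valuation}) this forces $\zeta \notin \N$. Once you insert that one-line justification, your argument matches the paper's.
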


\begin{proof}
Since $U_i \neq 0$ for all $i \geq 0$, we have $\sizedsize{U_i}_2 \neq 0$ for all $i \geq 0$.
Since $\sizedsize{f_1(\zeta)}_2 = 0$, this implies $\zeta \notin \N$.
Conjecture~\ref{p = 2 block lengths} and Proposition~\ref{log upper bound} imply $\nu_2(i - \zeta) \leq \frac{536}{95 \log(2)} \log(i)$ for all $i \geq 2$.
By Theorem~\ref{p = 2 valuation}, $\nu_2(U_i) \leq \frac{i}{2} + \frac{536}{95 \log(2)} \log(i)$ for all $i \geq 10$.
\end{proof}

This is sufficient to apply Lemma~\ref{lem:valp}. Assuming Conjecture~\ref{p = 2 block lengths},  we have the right behavior for both $\nu_2(U_i)$ and $\nu_3(U_i)$, { and therefore we may apply the decision procedure of Theorem~\ref{thm:main2}.}

\subsection{A fourth-order sequence}

Bounding the $p$-adic valuation of a sequence satisfying a recurrence of higher order is even more complicated than the proof of Theorem~\ref{p = 2 zero}.
For example, let $p = 2$ and consider the sequence $(U_i)_{i \geq 0}$ satisfying the recurrence $U_{i + 4} = 2 U_{i + 3} + 2 U_{i + 2} + 2 U_n$ with initial conditions $U_0 = 1, U_1 = 3, U_2 = 9, U_2 = 23$ from Example~\ref{exa:ppp}.
The $2$-adic valuation is shown in Figure~\ref{fig: fourth-order p=2}.
By the Eisenstein criterion, the characteristic polynomial $P(x) = x^4 - 2 x^3 - 2 x^2 - 2$ is irreducible over $\Q_2$.
Let $K$ be the splitting field of $P(x)$ over $\Q_2$.
Let $\beta_1, \beta_2, \beta_3, \beta_4$ be the four roots of $P(x)$ in $K$, and let $c_1, c_2, c_3, c_4$ be the elements of $K$ such that $U_i = \sum_{j = 1}^4 c_j \beta_j^i$ for all $i \geq 0$.

To compute with the roots $\beta_i$, we would want to write $K$ as a simple extension $\Q_2(\alpha)$.
For this, we need to determine the degree $d$ of the extension and a polynomial $Q(x) \in \Q_2[x]$ of degree $d$ such that $Q(x)$ is irreducible over $\Q_2$ and $Q(\alpha) = 0$. 
Then we could compare the sizes $\size{\beta_j}_2$ of the roots to each other.
Experiments suggest that $\size{\beta_1}_2 = \size{\beta_2}_2 = \size{\beta_3}_2 = \size{\beta_4}_2 = 2^{-1/4}$ and $\size{(\frac{\beta_j}{\beta_1})^8 - 1}_2 = \frac{1}{4} < \frac{1}{2} = p^{-1/(p - 1)}$ for each $j \in \{2, 3, 4\}$.
Assuming this is the case, $U_i / \beta_1^i = \sum_{j = 1}^4 c_j (\frac{\beta_j}{\beta_1})^i$ can be interpolated piecewise to $\Z_2$ using $8$ analytic functions.
However, we cannot solve $c_1 + b_2 \exp_2(L_2 x) + b_3 \exp_2(L_3 x) + b_4 \exp_2(L_4 x) = 0$ explicitly, as we solved $c_2 + c_3 \, (\tfrac{\beta_3}{\beta_2})^r \exp_2(L x) = 0$ in the proof of Theorem~\ref{p = 2 zero}.
Instead, we could use the $p$-adic Weierstrass preparation theorem~\cite[Theorem~6.2.6]{Gouvea} to determine the number of solutions and compute approximations to them.
However, we would also need to determine which of these solutions belong to $\Z_2$.
We do not carry out this step here, but this would give an analogue of Theorem~\ref{p = 2 zero}, with some finite set $Z$ of $2$-adic integers such that every sequence $(i_n)_{n \geq 0}$ of non-negative integers with $\nu_2(U_{i_n}) - \frac{i_n}{4} \to \infty$ satisfies $i_n \to \zeta$ for some $\zeta \in Z$.
If the blocks of zeroes in the digit sequences of each $\zeta \in Z$ satisfy $\ell_\zeta(a) \leq C a + D$ for some $C, D$ as in Conjecture~\ref{p = 2 block lengths}, then Proposition~\ref{log upper bound} gives an upper bound on $\nu_2(U_i)$.
This same approach applies to a general constant-recursive sequence and a general prime $p$.


\section{Concluding remarks}

The case of integer base~$b$ numeration systems is not treated in this paper. Let $b\ge 2$. Assume first for the sake of simplicity that $b$ is a prime. Consider the sequence $U=(b^i)_{i\ge 0}$. If $X$ is an ultimately periodic set with period $\pi_X=b^\lambda$ for some $\lambda$, then with our notation $Q_X=1$ and $\size{\rep_U(\pi_X-1)}=\lambda$. The sequence $(b^i \bmod{b^\lambda})_{i\ge 0}$ has a zero period and $\mathsf{f}_b(\lambda)=\lambda$. Hence we don't have the required assumption to apply Theorem~\ref{the:main}: for every such set $X$, $n_X=0$. Let us also point out that the technique of Proposition \ref{pro:cas2a} cannot be applied: adding $1$ as a most significant digit will not change the value of a representation modulo $\pi_X$ when words are too long, $U_i\equiv 0\pmod{b^\lambda}$ for large enough $i$. Of course, integer base systems can be handled with other decision procedures \cite{BMMR,BH,Honkala,LRRV,Marsault2019,MSaka}. If the base~$b$ is now a composite number of the form $p_1^{s_1}\cdots p_t^{s_t}$, the same observation holds. The length of the non-zero preperiod of $(b^i\bmod p_j^\mu)_{i\ge 0}$ is $\lfloor \frac{\mu}{s_j}\rfloor$. Taking again an ultimately periodic set with period $\pi_X=b^\lambda$, we get $Q_X=1$ and $\mathsf{f}_{p_j}(\lambda s_j)=\lambda$, hence $M_{\bm{\mu},X}=\lambda$ and we still have  $\size{\rep_U(\pi_X-1)}=\lambda$, so $n_X=0$.

A similar situation occurs in a slightly more general setting: the merge of $r$ sequences that ultimately behave like $b^i$. Let $b\ge 2$, $u\ge 1$, $N\ge 0$. If the recurrence relation is of the form $U_{i+u}=b U_i$ for $i\ge N$ (as for instance in Example~\ref{exa:noth2-newh2ok}), then again $n_X\not\to \infty$ as $\pi_X\to\infty$.
Indeed, if $X$ is an ultimately periodic set with period $\pi_X=b^\lambda$, then $Q_X=1$ and applying Lemma~\ref{lem:length} (here the polynomial $P_T$ with the notation of Definition~\ref{def:uT} is just a constant), $\size{\rep_U(\pi_X-1)}\ge u\lambda-L$, for some constant $L$, and with the same reasoning as for a composite integer base, $M_{\bm{\mu},X}\le N+u\lambda$. Thus $n_X$ remains bounded  for all $\lambda$. So there is no way to ensure that $n_X$ can be {larger than~$Z$}.

Trying to figure out the limitations of our decision procedure and assuming that we are under the assumption of Lemma~\ref{lem:valp}, this type of linear numeration systems is the only one that we were able to find where our procedure cannot be applied. Moreover, as shown by the following proposition, these systems are sufficiently close to the classical base-$b$ system so usual decision procedures can still be applied. It is an open problem to determine if there exist linear numeration systems satisfying (H1), (H2) and (H3) where the decision procedure may not be applied and not of the above type.

\begin{example}
   Take $b=4$, $u=2$ and $N=0$. Start with the first two values $1$ and $3$. We get the sequence  $1,3,4,12,16,48,64,\ldots$. We have $\mathsf{f}_2(\mu)=\mu$ if $\mu$ is even and $\mathsf{f}_2(\mu)=\mu+1$ if $\mu$ is odd. Hence, for a set of period $\pi_X=4^\lambda$, $M_{\bm{\mu},X}=\mathsf{f}_2(2\lambda)=2\lambda$. Moreover, $\size{\rep_U(4^\lambda-1)}=2\lambda$. So, $n_X=0$ for all $\lambda$.
\end{example}

\begin{prop}
Let $b\ge 2$, $u\ge 1$, $N\ge 0$. Let $U$ be a linear numeration system $U=(U_i)_{i\ge 0}$ such that $U_{i+u}=b U_i$ for all $i\ge N$. If a set is $U$-recognizable then it is $b$-recognizable. Moreover, given a DFA accepting $\rep_U(X)$ for some set $X$, we can compute a DFA accepting $\rep_b(X)$.
\end{prop}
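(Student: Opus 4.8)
The plan is to reduce the whole statement to a single structural fact: for all sufficiently large $n$, the greedy $U$-representation $\rep_U(n)$ and the standard base-$b$ representation $\rep_b(n)$ are related by a finite transducer $\Phi$ depending only on $U$ (in fact only on $b$, $u$, $N$ and the finitely many values $U_0,\dots,U_{N+u-1}$). Once this is established, the conclusion is immediate and effective: if $\rep_U(X)$ is regular then so is its image under $\Phi$, and that image differs from $\rep_b(X)$ only by the representations of finitely many small integers.

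First I would record a digit bound. If $n\ge U_N$, then $\rep_U(n)=w_\ell\cdots w_0$ with $\ell\ge N$, and for every $i$ with $N\le i\le\ell$ the greediness condition gives $w_iU_i\le\sum_{k=0}^i w_kU_k<U_{i+1}=bU_i$, hence $w_i\le b-1$. Thus the digits of $\rep_U(n)$ at positions $\ge N$ are already base-$b$ digits, while the $N$ lowest digits encode the bounded integer $r:=\val_U(w_{N-1}\cdots w_0)<U_N$. Using $U_{N+ju+s}=b^jU_{N+s}$ for $j\ge0$ and $0\le s<u$, and writing $w_i=0$ for $i>\ell$, this yields the identity
\[
n \;=\; r+\sum_{j\ge0}c_j\,b^j,\qquad c_j:=\sum_{s=0}^{u-1}w_{N+ju+s}\,U_{N+s}\in\{0,1,\dots,C_0\},
\]
with $C_0:=(b-1)\sum_{s=0}^{u-1}U_{N+s}$. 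In other words, $\rep_U(n)$ encodes the pair $\bigl(r,(c_j)_{j}\bigr)$ by spelling $r$ in the lowest $N$ digits and then spelling each $c_j$ in one block of $u$ consecutive digits of the geometric tail — a purely local encoding.

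The second step is to turn this into a transducer. Reading representations least-significant-digit first (which does not affect recognizability, since regularity is preserved by reversal), a finite automaton consumes the $N$ lowest digits of $\rep_U(n)$ and stores $r$ in its bounded memory, then consumes the remaining digits $u$ at a time, emitting for each block its value $c_j$ from the finite alphabet $\{0,\dots,C_0\}$; feeding this through the usual base-$b$ addition-with-carry for $r+\sum_j c_j b^j$, in which the carry stays bounded, produces $\rep_b(n)$ least-significant-digit first. Composing with the reversals gives the desired finite transduction $\Phi$, effectively constructible, with $\Phi(\rep_U(n))=\rep_b(n)$ for all $n\ge U_N$.

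Finally I would assemble the conclusion. Given a DFA for $\rep_U(X)$, the set $L$ of its accepted words of length $>N$ is regular and equals $\{\rep_U(n):n\in X,\ n\ge U_N\}$, so $\Phi(L)=\rep_b\bigl(X\cap[U_N,\infty)\bigr)$ is regular and a DFA for it can be computed; adjoining the finite language $\rep_b\bigl(X\cap[0,U_N)\bigr)$ — whose elements are obtained by testing membership of the finitely many integers $n<U_N$ with the given DFA — yields a DFA for $\rep_b(X)$. The only step requiring care is the transducer construction, and there is no real obstacle there: reading a bounded prefix value, regrouping digits into blocks of fixed length $u$, and propagating bounded carries in base $b$ are all finite-state operations, and the image of a regular language under a rational transduction is effectively regular. (In the special case $u=1$ one can dispense with transducers altogether: the identity above becomes $\rep_U(r+U_Nm)=\rep_b(m)\,0^{N-\size{\rep_U(r)}}\rep_U(r)$, so each set $\{m:r+U_Nm\in X\}$ is $b$-recognizable as a word-quotient of $\rep_U(X)$ intersected with $\rep_b(\N)$, and $b$-recognizability is preserved by $m\mapsto U_Nm+r$.)
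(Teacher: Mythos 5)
Your proposal is correct and takes essentially the same approach as the paper: both build a finite transducer reading least-significant-digit first that regroups the digits beyond position $N$ into blocks of length $u$ (exploiting $U_{N+ju+s}=b^jU_{N+s}$), output a noncanonical base-$b$ representation over a bounded digit alphabet, and then normalize. The paper folds your prefix value $r$ into the first output digit $d_0=\val_U(c_{N+u-1}\cdots c_0)$ rather than tracking $r$ as a separate addend, and then invokes the standard base-$b$ normalization transducer, but this is a cosmetic difference from your bounded-carry addition of $r+\sum_j c_j b^j$. One small slip in the write-up: for $u>1$ it is not true that $U_{i+1}=bU_i$ for $i\ge N$ (only $U_{i+u}=bU_i$ holds); you still get $w_i\le b-1$ since $U_{i+1}\le U_{i+u}=bU_i$ by monotonicity, and in any case the argument only needs the digits bounded by $C_U-1$, so nothing breaks.
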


\begin{proof}
   We build in two steps a sequence of transducers reading least significant digit first that maps any $U$-representation $c_{\ell-1}\cdots c_1 c_0\in A_U^*$ (here written with the usual convention that the most significant digit is on the left) to the corresponding $b$-ary representation. Adding leading zeroes, we may assume that the length $\ell$ of the $U$-representation is of the form $N+mu$. 
The idea is to read the first $N+u$ (least significant) digits and to output a single digit (over a finite alphabet in $\mathbb{N}$) equal to 
$$d_0=\val_U(c_{N+u-1}\cdots c_0).$$
Then we process blocks of size $u$, each such block of the form 
$$c_{N+(j+1)u-1}\cdots c_{N+ju}$$ gives as output a single digit  equal to 
$$d_j=c_{N+(j+1)u-1}U_{N+u-1}+\cdots +c_{N+ju} U_N.$$
So the digits $d_0,d_1,\ldots,d_{m-1}$ all belong to the finite set
$$\{\val_U(w)\colon w\in A_U^* \text{ and } |w|\le N+u\}.$$
From the form of the recurrence, we have
$$\val_U(c_{N+mu-1}\cdots c_0)=\sum_{j=0}^{m-1} d_j b^j
=\val_b(d_{m-1}\cdots d_0).$$
So this transducer $\mathcal{T}$ maps any $U$-representation to a non-classical $b$-ary representation of the same integer. Precisely, when a DFA accepting $\rep_U(X)$ is given, we build a DFA accepting the language $$L=0^*\rep_U(X)\cap\{w\in A_U^*\colon |w|\equiv N\pmod u,\ |w|\ge N\}.$$
 Recall that if $L$ is a regular language then its image $\mathcal{T}(L)$ by a transducer is again regular. Moreover, $\val_b(\mathcal{T}(L))=X$.

Then, it is a classical result that normalization in base~$b$, i.e.\ mapping a representation over a non-canonical finite set of digits to the canonical expansion over $\{0,\ldots,b-1\}$ can be achieved by a transducer $\mathcal{N}$ \cite{Frou} (or \cite[p.~104]{FLANS}). To conclude with the proof, we compose these two transducers and consider the image $\mathcal{N}(0^*\mathcal{T}(L))=0^*\rep_b(X)$.
\end{proof}

With the above proposition, the decision problem for the merge of sequences ultimately behaving like $b^i$ (such as the numeration systems of Examples~\ref{exa:noth2-newh2ok} and~\ref{exa:merge}) can be reduced to the usual decision problem for integer bases.

\section*{Acknowledgments}

We thank Yann Bugeaud for pointing out relevant theorems in \cite{Bugeaud--Kekec}. We thank Juha Honkala, Victor Marsault (who served as external reviewers for \cite{PhD}) and the anonymous referee for their careful reading and their suggestions leading to many improvements along the text. We also thank Jo{\"e}l Ouaknine for pointing out \cite{Ou3} and the reference to the absolutely divergent problem.

\end{document}